\newcommand{\bs}{\boldsymbol}
\DeclareMathAlphabet{\mathpzc}{OT1}{pzc}{m}{it}
\newtheorem{theorem}{Theorem}
\newtheorem{lemma}{Lemma}
\newtheorem{remark}{Remark}
\newtheorem{definition}{Definition}
\begin{document}

\title{Asymptotic Outage Analysis of Spatially Correlated Rayleigh MIMO Channels}%Performance Analysis of Spatially-Correlated MIMO-HARQ Systems
\author{Guanghua~Yang,
Huan Zhang,
%Yaru Fu,
%Hong Wang,
Zheng~Shi,
Shaodan~Ma,
and Hong Wang% <-this % stops a space
%\thanks{Guanghua Yang and Huan Zhang are co-first authors. Zheng Shi is the corresponding author.}
\thanks{Guanghua Yang and Zheng Shi are with the School of Intelligent Systems Science and Engineering, Jinan University, Zhuhai 519070, China (e-mails: ghyang@jnu.edu.cn, shizheng0124@gmail.com).}  %  China.
\thanks{Huan Zhang and Shaodan Ma are  with the Department of Electrical and Computer Engineering, University of Macau, Macao S.A.R., China (e-mails: cquptzh@gmail.com, shaodanma@um.edu.mo).}
\thanks{Hong Wang is with the School of Communication and Information Engineering, Nanjing University of Posts and Telecommunications, Nanjing 210003, China, and also with the National Mobile Communications Research Laboratory, Southeast University, Nanjing 210096, China (e-mail: wanghong@njupt.edu.cn).}
%\thanks{Yaru Fu is with the Dept. of Information Systems Technology and Design, Singapore University of Technology and Design, Singapore (e-mail:yaru\_fu@sutd.edu.sg).}
%\thanks{Hong Wang is with the School of Communication and Information Engineering, Nanjing University of Posts and Telecommunications, Nanjing 210003, China (e-mail:wanghong@njupt.edu.cn).}
}

\maketitle
\begin{abstract}
The outage performance of multiple-input multiple-output (MIMO) technique has received intense attention in order to ensure the reliability requirement for mission-critical machine-type communication (cMTC) applications. %for Internet-of-Things (IoT) applications, ultra-reliable low-latency communications (URLLC), tactile internet, etc.
In this paper, the outage probability is asymptotically studied for MIMO channels to thoroughly investigate the transmission reliability. %Prior works evaluated the outage probability of multiple-input multiple-output (MIMO) systems mostly based on either approximation or bounds, which results in litter insights.
%To combat this shortcoming, this paper employs Mellin transform to derive novel representations for the outage probabilities over MIMO fading channels.
To fully capture the spatial correlation effects, the MIMO fading channel matrix is modelled according to three types of Kronecker correlation structure, i.e., independent, semi-correlated and full-correlated Rayleigh MIMO channels. The outage probabilities under all three Kronecker models are expressed as representations of the weighted sum of the generalized Fox's H functions. %, which can be accurately computed.
The simple analytical results empower the asymptotic outage analyses at high signal-to-noise ratio (SNR), which are conducted not only to reveal helpful insights into understanding the behavior of fading effects, but also to offer useful design guideline for MIMO configurations. Particularly, the asymptotic outage probability is proved to be a monotonically increasing and convex function of the transmission rate. %, which facilitates the optimal rate selection. %,full diversity can be achieved regardless of the occurrence of spatial correlation. In addition, while the spatial correlation brings about an adverse impact on the outage performance.
%In addition, the asymptotic results ease the proper selection of the modulation and coding scheme (MCS) thanks to the increasing monotonicity and convexity of the asymptotic outage probability with respect to the transmission rate.
In the absence of the channel state information (CSI), the transmitter tends to equally allocate the total transmit power among its antennas to enhance the system reliability especially in high SNR regime. In the end, the analytical results are validated through extensive numerical experiments.

\end{abstract}
\begin{IEEEkeywords}
Outage probability, MIMO, asymptotic analysis, Rayleigh fading, Mellin transform, spatial correlation.

\end{IEEEkeywords}
\IEEEpeerreviewmaketitle
\section{Introduction}
%\textcolor[rgb]{1.00,0.00,0.00}{MIMO 5G importance/significance....
%}

%The phenomenal growth of data traffic spurred by the Internet-of-Things (IoT) applications ranging from machine-type communications (MTC) to mission-critical communications (autonomous driving, drones and augmented/virtual reality) are posing unprecedented challenges in terms of capacity, latency, reliability, and scalability

%THE essence of 5G systems is based on three important use cases: enhanced mobile broadband (eMBB), massive machine-type communication (mMTC), and mission critical MTC (cMTC) [1]. Therefore, 5G systems will not only have to outperform previous generations in terms of requirements such as data rate and capacity [2], but also address new machine- type communication (MTC) usage scenarios [1], including cMTC applications with very high demands for reliability, availability and low latency. The key
%Next-generation cellular systems are targeting ultra-high reliability to support different from of applications
%through deploying multiple antennas at transceivers

\IEEEPARstart{5}{G} systems are anticipated to not only support extraordinarily high data rate and capacity, but also provide ultra-reliability, scalability and low latency for emerging communication paradigms, e.g., mission-critical machine-type communications (cMTC) \cite{Tullberg2016METIS}. The multiple-input multiple-output (MIMO) is an indispensable enabler that fulfills these requirements for 5G. Specifically, the MIMO technique capitalizes on the spatial dimension to explore its potentials of boosting the spectral efficiency and reliability \cite{telatar1999capacity,foschini1998limits}. %The MIMO technique that capitalizes on the spatial dimension has sparked of widespread research interest over the past few decades to explore its potentials of boosting the spectral efficiency and reliability \cite{telatar1999capacity,foschini1998limits}.
Most of the existing works concentrate on studying the information-theoretical capacity of MIMO systems for the purpose of the spectral efficiency enhancement. To name a few, the water-filling power allocation was shown to achieve the capacity if the channel state information (CSI) is perfectly known at transceiver \cite{telatar1999capacity}. In addition, the author in \cite{telatar1999capacity} also pointed out that the ergodic capacity is achievable for independent Rayleigh MIMO channels by using Gaussian random codes together with equal power transmission in the absence of CSI at the transmitter. The ergodic capacity was further examined for multiple-antenna fading channels in consideration of the spatial fading correlation and rank deficiency of the channel in \cite{shin2003capacity}. It revealed that the spatial correlation impairs the diversity gain, while the double scattering and keyhole effects deteriorate the spatial multiplexing gains. Moreover, in \cite{hanlen2012capacity}, high- and low-power asymptotics of the ergodic capacity were got by assuming spatial correlations at both the transmit and receive sides, and asymptotic channel capacity was also given for a large number of transmitters in the presence of only receive side correlation. %the expressions for the ergodic capacity of MIMO Rayleigh channels with arbitrary transmit and receive correlation in terms of hypergeometric functions. High- and low-power asymptotics are considered.

Apart from the spectral efficiency, the reliability of communications has became of ever-increasing importance in the Internet-of-Things (IoT) applications (e.g., automated transportation, industrial control and augmented/virtual reality), ultra-reliable low-latency communications (URLLC) and tactile internet \cite{avranas2018energy,bennis2018ultrareliable}. For instance, the URLLC is envisioned to support a reliability higher than 99.999\%, while the tactile Internet targets at $10^{-5}$ or even $10^{-7}$ of packet error probability. Since the outage probability is frequently used to characterize the reception reliability, the outage probability of MIMO systems also has attracted considerable attention in the literature \cite{telatar1999capacity,foschini1998limits,hochwald2004multi,
wang2004outage}. In \cite{telatar1999capacity}, the outage probability was obtained in closed-form for two special cases of MIMO systems, including single-input multiple-output (SIMO) and multiple-input single-output (MISO). In \cite{foschini1998limits}, by assuming that the multiple antennas are placed with sufficiently large spacing apart from each other to decorrelate path losses, the outage probability of the MIMO systems is approximated by using the bounds of the mutual information. The outage probability of the independent and identically distributed (i.i.d.) MIMO fading channels was obtained by approximating the mutual information as a Gaussian random variable for a large number of antennas in \cite{hochwald2004multi}. By assuming i.i.d. fading channels in \cite{wang2004outage}, the outage expression was derived in a closed integral form by means of Laplace transform, which can be evaluated numerically. %

However, the prior works in \cite{telatar1999capacity,foschini1998limits,hochwald2004multi,wang2004outage} did not take into account the correlation between antenna elements, which exists in realistic propagation environments because of mutual antenna coupling and close spacing between adjacent elements \cite{shin2008mimo}. The spatial correlation would remarkably affect the reliability of MIMO systems. In \cite{smith2003exact}, the exact outage probability was derived for MIMO systems with the number of antennas less than or equal to three. Conversely, the relationship between outage probability and outage capacity under low outage capacity was studied for large size MIMO systems in \cite{tong2008asymptotic}. In \cite{chiani2003capacity}, the outage probability was derived by using the method of characteristic function to account for the correlation at the receiver side, and fast Fourier transform was employed to enable the calculation of the outage probability. The numerical results revealed that the spatial correlation has a detrimental impact on the outage performance. Similarly to \cite{chiani2003capacity}, the method of characteristic function was adopted to obtain the mean and variance of the capacity over semi-correlated (i.e., either transmit- or receive-correlated) MIMO flat-fading channels in \cite{smith2003capacity}. The outage probability was then approximated by a Gaussian distribution. As observed from \cite{chiani2003capacity,smith2003capacity}, the characteristic function or equivalently moment-generating function (MGF) of the capacity plays a vital role in deriving the outage probability of MIMO systems. In \cite{khan2005capacity}, the character expansion method was initially introduced to give a closed-form expression for the MGF of the capacity under full-correlated Rayleigh MIMO channels if the numbers of transmit and receive antennas are identical. The same method was further extended to derive the MGF of the capacity for the case with arbitrary numbers of transmit and receive antennas in \cite{simon2006capacity}. The outage probability could then be evaluated by performing numerical inversion of Laplace transform. Moreover, the character expansion method used in \cite{khan2005capacity,simon2006capacity} was revisited in \cite{ghaderipoor2012application}. Afterwards, a unified analytical framework was founded to study the MGF of the capacity for more general MIMO fading channels. With the aid of MGF in \cite{ghaderipoor2012application}, the outage probability could be approximated with arbitrary small numerical error by favor of the Abate-Whitt method in \cite{chiang2016finite}. The proposed approximation warranted the further investigation of the diversity-multiplexing tradeoff (DMT).%, and the analytical results in \cite{simon2006capacity} were further justified
Unfortunately, the outage probability of MIMO systems was obtained in the literature by relying upon either approximations or numerical inversions even under spatially independent fading channels, which hardly offer further helpful insights about the system parameters. This is due to the fact that the relevant theoretical analyses are obliged to solve intractable integrals over complex matrices, and consequently produces lots of complicated results, e.g., the involvement of hypergeometric functions of matrix arguments. The complex representations of the outage probability impede the asymptotic analysis, which is commonly conducted to gain physical insights into the quantitative effects of the spatial correlation, transmission rate, power allocation strategy, etc. To the authors' best knowledge, the asymptotic behavior of the outage probability has never been reported so far.

To address the above issues, Mellin transform is applied in the paper to derive exact and tractable representations for the outage probabilities of MIMO systems in three different Kronecker correlation channel models, i.e., independent, semi-correlated and full-correlated Rayleigh MIMO channels. The solutions are expressed in terms of the generalized Fox's H function which has been implemented in popular mathematical softwares. Upon the exact expressions, the asymptotic analysis of the outage probability in the high signal-to-noise ratio (SNR) regime is derived to enable further investigation of diversity order, modulation and coding gain, spatial correlation and power allocation strategy. The asymptotic expressions of the outage probabilities under different Kronecker correlation models follow the same structure. The unified expression demonstrates that full diversity can be achieved regardless of the presence of spatial correlation, whereas the spatial correlation does negatively influence the outage performance. Particularly, the impact of spatial correlation is quantified by carrying out the asymptotic analysis, and the qualitative relationship between the spatial correlation and the outage probability is established by virtue of the concept of majorization in \cite{shin2008mimo}. Moreover, the transmission rate affects the outage performance via the term of modulation and coding gain, and the asymptotic outage probability is proved to be an increasing and convex function of the transmission rate. %Aiming to minimize the outage probability under high SNR, the assumption of the absence of CSI at the transmitter leads to equal power allocation irrespective of the difference between MIMO fading channels.
It is also found that without CSI at the transmitter, equal power allocation is optimal to minimize the outage probability under high SNR, no matter whether the MIMO channels are correlated or not. Finally, the numerical analysis verifies the analytical results.

The remaining of the paper is outlined as follows. In Section \ref{sec:sys_mod}, the MIMO system model is presented and the outage probability is formulated. By using Mellin transform, we then derive the exact and asymptotic expressions for the outage probabilities under three different Kronecker correlation models in Sections \ref{sec:ana}. The asymptotic results are thoroughly investigated to reveal more physical insights in Section \ref{sec:asym_res}. In Section \ref{sec:num}, the numerical analysis is conducted for verification purposes. Lastly, Section \ref{sec:con} summarizes our work with concluding remarks.

%\textcolor[rgb]{1.00,0.00,0.00}{All the proofs in the paper are omitted to save space, and interested readers are referred to the full version of the paper on arXiv for the details \cite{}.}

\emph{Notations:} We shall use the following notations throughout the paper. Bold uppercase and lowercase letters are used to denote matrices and vectors, respectively. ${\bf A}^{\mathrm{T}}$, ${\bf A}^{\mathrm{H}}$, ${\bf A}^{-1}$ and ${\bf A}^{1/2}$ denote the transpose, conjugate transpose, matrix inverse and Hermitian square root of matrix ${\bf A}$, respectively. $\mathrm{vec}$, $\rm{tr}$, $\mathrm{det}$ and $\mathrm{diag}$ are the operators of vectorization, trace, determinant and diagonalization respectively. $\Delta \left( {\bf A} \right)$ refers to the Vandermonde determinant of the eigenvalues of matrix ${\bf A}$. $\mathbf{0}_n$ and $\mathbf{I}_n$ stand for $1 \times n$ all-zero vector and $n \times n$ identity matrix, respectively. $\mathbb C$ and $\mathbb C^{m\times n}$ denote the sets of complex numbers and $m\times n$-dimensional complex matrices, respectively. The symbol ${\rm i}=\sqrt{-1}$ is the imaginary unit. $\Re{(s)}$ denotes the real part of the complex number $s$. $o(\cdot)$ denotes little-O notation. $(\cdot)_n$ represents Pochhammer symbol. $|S|$ refers to the cardinality of set $S$. Any other notations will be defined in the place where they occur.% for the first time
%The symbol $\sim $ is to be read 'is distributed as'
%$\mathcal W_N(M,{\bf I}_N)$ Wishart matrix.
%$\mathcal{CN}(\bs \mu,\bs \Sigma)$ stands for a complex normal distribution with mean vector $ \bs \mu$ and covariance matrix $\bs \Sigma$.
% Cartesian product

\section{System Model}\label{sec:sys_mod}
By considering a point-to-point MIMO system with $N_t$ transmit and ${N_r}$ receive antennas, the received signal vector $\mathbf{y}\in\mathbb{C}^{{N_r}\times 1}$ is written as
\begin{equation}\label{eq:pp_sig}
\mathbf{y}=\sqrt{\frac{P}{{N_t}}}\mathbf{H}\mathbf{x}+\mathbf{n},
\end{equation}
where $\mathbf{H}\in \mathbb{C}^{{N_r}\times {N_t}}$ is the matrix of the channel coefficients, $\mathbf{x}\in \mathbb{C}^{{N_t}\times 1}$ is the vector of transmitted signals, $\mathbf{n}\in \mathbb{C}^{{N_r}\times 1}$ is the complex-valued additive white Gaussian noise vector with zero mean and covariance matrix $\sigma^{2}\mathbf{I}_{N_r}$, and $P$ is the average total transmitted power. By using a Gaussian codebook, each entry of $\mathbf{x}$ is drawn randomly and independently from standard complex normal distribution. Moreover, in order to account for the effect of the antenna correlation, the channel matrix $\mathbf{H}$ is modeled herein according to the Kronecker correlation structure, which separates the spatial correlation into two independent constituent components, i.e., transmit and receive correlations \cite{martin2004asymptotic}. The Kronecker model turns out to be valid irrespective of antenna configurations and intra-array spacings if the transmitter and receiver have independent angular power profiles\cite{clerckx2013mimo}. %the Kronecker model is frequently used to capture the impact of spatial antenna correlation on the performance of MIMO systems.
Following the Kronecker model, the channel matrix reads as\cite{larsson2008space,paulraj2003introduction}
%is commonly adopted to capture the effect of the spatial correlation across antennas at both the transmit and receive sides \cite{martin2004asymptotic}.
% double-correlated channels (i.e., with correlation at both the transmitter and receiver) commonly occur due to the transmit and receive terminals scattering deficiency, etc.
% We use the Kronecker model to characterize the spatial correlation which can be decomposed into
\begin{equation}\label{kron_mod}
\mathbf{H}={\mathbf{R}_r}^{{1}/{2}}{\bf H}_w{\mathbf{R}_{t}}^{{1}/{2}},
\end{equation}
where $\mathbf{H}_w\in \mathbb{C}^{{N_r}\times {N_t}}$ is a random matrix whose entries are independent and identically distributed (i.i.d.), complex circularly symmetric Gaussian random variables, i.e., $\mathrm{vec}\left(\mathbf{H}_w\right) \sim \mathcal{CN}\left(\mathbf 0_{{N_t}{N_r}},\mathbf I_{N_t}\otimes \mathbf I_{N_r}\right)$, $\mathbf{R}_t$ and $\mathbf{R}_{r}$ are respectively termed as the transmit and receive correlation matrices, and both of them are positive semi-definite Hermitian matrices. In addition, \eqref{kron_mod} implies that the vectorized channel matrix $\mathrm{vec}\left(\mathbf{H}\right) = {\left({\mathbf R_{t}}^{\rm T}\otimes \mathbf R_{r}\right)}^{1/2}\mathrm{vec}\left(\mathbf{H}_w\right)$ still obeys a complex multivariate Gaussian distribution with mean vector $\mathbf 0_{{N_t}{N_r}}$ and covariance matrix ${\mathbf R_{t}}^{\rm T}\otimes \mathbf R_{r}$. The validity of the Kronecker model has been corroborated through realistic electromagnetic field measurements\cite{kermoal2002stochastic,yu2001second}. For the sake of simplicity, we assume that the correlation matrices follow the constraints as ${\rm tr}{\left(\mathbf R_{t}\right)}=N_t$ and ${\rm tr}{\left(\mathbf R_{r}\right)}=N_r$. Considering whether $\mathbf{R}_t$ and/or $\mathbf{R}_{r}$ is identity matrix, the spatially correlated channel model of MIMO system in \eqref{kron_mod} can further be divided into three cases, i.e., 1) Independent MIMO channels: both $\mathbf{R}_t=\mathbf I_{N_t}$ and $\mathbf{R}_{r}=\mathbf I_{N_r}$; 2) Semi-correlated MIMO channels: either $\mathbf{R}_t=\mathbf I_{N_t}$ or $\mathbf{R}_{r}=\mathbf I_{N_r}$; 3) Full-correlated MIMO channels: neither $\mathbf{R}_t=\mathbf I_{N_t}$ nor $\mathbf{R}_{r}=\mathbf I_{N_r}$.% To be specific, $\mathbf{R}_t=\mathbf I_{N_t}$ and $\mathbf{R}_{r}=\mathbf I_{N_r}$.

%Besides, $\mathbf{R}$ and $\mathbf{T}$ are normalized and have unit diagonal elements ($\rm{tr}\left(\mathbf{R}\right)={N_r}$ and $\rm{tr}\left(\mathbf{T}\right)={N_t}$).

By assuming perfect knowledge of the CSI at the receiver, the mutual information capacity of MIMO system can be expressed as \cite{simon2006capacity}
% and $\mathbf{n}$
% Assuming the entries of $\mathbf{x}$ are i.i.d. complex Gaussian random variables with $\mathcal{CN}(0,1)$. \textcolor[rgb]{1.00,0.00,0.00}{Gaussian codes.}
\begin{equation}\label{eqn:mul_cap}
\mathcal{I}({\bf x};{\bf y}|{\bf H})=\mathrm{log}_{2}\mathrm{det}\left(\mathbf{I}_{{N_r}} + {\rho}\mathbf{H}\mathbf{H}^\mathrm{H}\right),
\end{equation}
where $\rho ={P}/{(\sigma^{2}N_t)}$ stands for the average transmit SNR per antenna. By implementing random coding with long codewords at the transmitter and typical set decoding at the receiver, the error probability of decoding a packet can be approximated by using the outage probability, which is one of the most concerned performance metrics, especially in the absence of the CSI at the transmitter. From the perspective of information theory, the outage event of the MIMO system described by \eqref{eq:pp_sig} will take place if the mutual information capacity is less than the transmission rate, i.e., $\mathcal{I}({\bf x};{\bf y}|{\bf H}) < R$, where $R$ is the transmission rate. Accordingly, the corresponding outage probability is explicitly obtained on the basis of \eqref{eqn:mul_cap} as
\begin{equation}\label{eqn:out_pro}
{p_{out}}= \Pr \left( {{{\log }_2}\det \left( {{{\bf{I}}_{N_r}} + {\rho}{\bf{H}}{{\bf{H}}^{\rm{H}}}} \right) < R} \right).
\end{equation}
Since ${{\bf{H}}}{\bf{H}}^{\rm{H}}$ has $N_r$ eigenvalues, the unordered eigenvalues of ${{\bf{H}}}{\bf{H}}^{\rm{H}}$ are denoted by ${\bs{\lambda }} = (\lambda_1,\cdots,\lambda_{N_r})$\footnote{It is worth noting that ${{\bf{H}}}{\bf{H}}^{\rm{H}}$ is a singular matrix and has at least $(N_r-N_t)$ zero eigenvalues if ${N_t}<{N_r}$.}. The outage probability in (\ref{eqn:out_pro}) can be rewritten as
\begin{equation}\label{eqn:out_def}
{p_{out}} = \Pr \left( { \underbrace {\prod\limits_{i = 1}^{{N_r}} {\left( {1 + {\rho}{\lambda _i}} \right)} }_{ \triangleq  G} < {2^R}} \right) = F_G(2^R),
\end{equation}
where $F_G(x)$ denotes the cumulative distribution function (CDF) of $G$. From \eqref{eqn:out_def}, it boils down to determining the distribution of the product of multiple shifted eigenvalues $\lambda_1,\cdots,\lambda_{N_r}$. However, the eigenvalues are correlated no matter whether the correlations among antennas present or not. The occurrence of the correlation among eigenvalues will yield the involvement of a multi-fold integral in deriving the expression of $F_G(2^R)$, which challenges the subsequent outage analysis considerably. Since there is no versatile expression for the joint PDF ${f_{\bs{\lambda }}}\left( {{\lambda _1}, \cdots ,{\lambda _{N_r}}} \right)$ under the aforementioned three different correlation models, their outage analyses should be undertaken individually to comprehensively understand the outage behaviour of spatial correlation across antennas at both the transmitter and receiver.%, the aforementioned three different correlation models are considered into the subsequent analysis of the outage probability. %, including independent, semi-correlated and fully-correlated Rayleigh MIMO channels.

%Unless otherwise specified, we assume ${N_t} \ge {N_r}$\footnote{For the case of ${N_t}<{N_r}$, ${{\bf{H}}}{\bf{H}}^{\rm{H}}$ is a singular matrix and has at least $(N_r-N_t)$ zero eigenvalues. \eqref{eqn:out_pro} can be rewritten as ${p_{out}}= \Pr \left( {{{\log }_2}\det \left( {{{\bf{I}}_{N_t}} + {\rho}{{\bf{H}}^{\rm{H}}}{\bf{H}}} \right) < R} \right)$.
%}.
\section{Analysis of Outage Probability}\label{sec:ana}
The outage probability is the fundamental performance metric to characterize the error performance of decodings. However, the correlation among eigenvalues complicates the exact analysis of the outage probability for MIMO fading channels especially in the presence of the spatial correlation. Nonetheless, the product form of $G$ motivates us to apply Mellin transform to obtain the distribution of $G$ \cite{shi2017asymptotic}. Specifically, the Mellin transform of the probability density function (PDF) of $G$, $\left\{ {\mathcal M f_G} \right\}\left( s \right)$, is given by
\begin{align}\label{eqn:mellin_G}
\left\{ {\mathcal M f_G} \right\}\left( s \right) &= {\mathbb E}\left( G ^{s-1} \right) = \int\nolimits_0^\infty  {{x^{s - 1}}{f_G}\left( x \right)dx}\notag\\
& =\int\nolimits_0^\infty  { \cdots \int\nolimits_0^\infty  {\prod\limits_{i = 1}^{N_r} {{{\left( {1 + {\rho}{\lambda _i}} \right)}^{s - 1}}{f_{\bs{\lambda }}}\left( {{\lambda _1},\cdots,{\lambda _{N_r}}}\right)d\lambda_1  \cdots d{\lambda _{N_r}}} } } \triangleq \varphi(s),
\end{align}
where ${f_{\bs{\lambda }}}\left( {{\lambda _1}, \cdots ,{\lambda _{N_r}}} \right)$ is defined as the joint PDF of ${\bs{\lambda }}$. By utilizing the inverse Mellin transform together with its associated property of integration  \cite[eq.(8.3.15)]{debnath2010integral}, the CDF of $G$ can be obtained as
\begin{equation}\label{eqn:cdf_g_inverse}
{F_G}\left( x \right) = \left\{{\mathcal M} ^{ - 1}{\left[ - \frac{1}{s}\varphi \left( {s + 1} \right)\right]}\right\}\left( x \right)
 = \frac{{  1}}{{2\pi {\rm{i}}}}\int\nolimits_{c - {\rm i}\infty }^{c + {\rm i}\infty } {\frac{{{x^{ - s}}}}{-s}\varphi \left( {s + 1} \right)ds},
\end{equation}
where $c \in (-\infty,0)$, because the Mellin transform of ${F_G}\left( x \right)$ exists for any complex number $s$ in the fundamental strip $-\infty <\Re{(s)} < 0$ by noticing ${F_G}\left( x \right)=0$ for $x < 1$ and $\lim\nolimits_{x\to \infty}{F_G}\left( x \right)=1$ \cite[p400]{szpankowski2010average}.

Due to the possible existence of transmit and receive antenna correlations, the performance of the correlated MIMO channels are thoroughly investigated by considering the following three scenarios, i.e., spatial independence at both the transmit and receive sides, spatial correlation at either transmitter or receiver side, spatial correlation at both transmit and receive sides. As mentioned above, these scenarios refer to independent, semi-correlated and full-correlated MIMO channels, respectively. We hereafter derive the exact outage probabilities for the three different channel models separately inasmuch as their corresponding joint PDFs of the eigenvalues cannot be generalized in a unified fashion. Moreover, the analytical results also lay a basis for the asymptotic analysis of the outage probability in the high SNR regime.
%. Next, we will present different antenna correlation scenarios and analyze outage probability performance, respectively.
\subsection{Independent Rayleigh MIMO Channels}\label{sec:ind}
If the links between transmit and receive antennas experience independent fading channels, i.e., $\mathbf{R}_t=\mathbf I_{N_t}$ and $\mathbf{R}_{r}=\mathbf I_{N_r}$, the Kronecker channel model in \eqref{kron_mod} collapses to $\mathbf{H}={\bf H}_w$. Accordingly, all the entries of $\mathbf{H}$ are i.i.d. complex Gaussian random variables with zero mean and unit variance. Thus, $\mathbf{H}\mathbf{H}^\mathrm{H}$ complies with the complex Wishart distribution as $\mathbf{H}\mathbf{H}^\mathrm{H} \sim \mathcal W_{N_r}(N_t,{\bf I}_{N_r})$\cite{couillet2011random}. We stipulate herein that $N_t \ge N_r$ without loss of generality, and the similar results can be obtained for the case of $N_t < N_r$ by following the same procedure.
\subsubsection{Exact Outage Probability}
The Mellin transform is applied to derive the exact expression for the outage probability if $\mathbf{H}\mathbf{H}^\mathrm{H} \sim \mathcal W_{N_r}(N_t,{\bf I}_{N_r})$ and $N_t \ge N_r$. To begin with, the joint PDF of unordered strictly positive eigenvalues of the Wishart matrix $\mathbf{H}\mathbf{H}^\mathrm{H}$, ${f_{\bs{\lambda }}}\left( {{\lambda _1}, \cdots ,{\lambda _{N_r}}} \right)$, is given by \cite[Theorem 2.17]{antonia2004random}, \cite[Corollary 3.2.19]{muirhead2008aspects} as %\cite[Corollary 3.2.19]{}
\begin{equation}\label{eqn:joint_pdf_lambdav}
{f_{\bs{\lambda }}}\left( {{\lambda _1}, \cdots ,{\lambda _{N_r}}} \right) = \frac{1}{{N_r}!} {e^{ - \sum\limits_{i = 1}^{N_r} {{\lambda _i}} }}\prod\limits_{i = 1}^{N_r} {\frac{{{\lambda _i}^{{N_t} - {N_r}}}}{{\left( {{N_r} - i} \right)!\left( {{N_t} - i} \right)!}}} \Delta {\left( \bs \Lambda  \right)^2},
\end{equation}
where $\bs \Lambda = {\rm diag}(\lambda_1,\cdots,\lambda_{N_r})$.%, i.e., $\Delta \left( \bs \lambda  \right) = \prod\limits_{i < j}^{N_r} {\left( {{\lambda _i} - {\lambda _j}} \right)} $.

By combining (\ref{eqn:mellin_G}) with (\ref{eqn:joint_pdf_lambdav}), it can be proved in Appendix \ref{app:mellin_g_m} that the Mellin transform of the PDF of $G$, $\varphi_{\rm ind}(s)$, is given by
\begin{align}\label{eqn:mellin_g_m}
\varphi_{\rm ind}(s) %&=\frac{1}{{N_r}!}  \sum\limits_{{{\bs\sigma _1},{\bs\sigma _2} \in {S_{N_r}}}} {{\mathop{\rm sgn}} \left( {{\bs\sigma _1}} \right){\mathop{\rm sgn}} \left( {{\bs\sigma _2}} \right)} \prod\limits_{i = 1}^{N_r} {\frac{{\int\nolimits_0^\infty  {{{\left( {1 +  {\rho}{\lambda _i}} \right)}^{s - 1}}{\lambda _i}^{  {N_t} - {N_r} - 2+{\sum\nolimits_{l = 1}^2 {{\sigma _{l,i}}} }}{e^{ - {\lambda _i}}}d{\lambda _i}} }}{{\left( {{N_r} - i} \right)!\left( {{N_t} - i} \right)!}}} \notag\\
& = \frac{1}{{N_r}!} \sum\limits_{{{\bs\sigma _1},{\bs\sigma _2} \in {S_{N_r}}}}\frac{{{\mathop{\rm sgn}} \left( {{\bs\sigma _1}} \right){\mathop{\rm sgn}} \left( {{\bs\sigma _2}} \right)}} {{{\rho} }^{{N_t}{N_r}}}\prod\limits_{i = 1}^{N_r} {\frac{{\Gamma \left( {\tau+{\sum\nolimits_{l = 1}^2 {{\sigma _{l,i}}} } } \right)}}{{\left( {{N_r} - i} \right)!\left( {{N_t} - i} \right)!}}}  \notag\\
&\quad \times \Psi \left( {{\tau+{\sum\limits_{l = 1}^2 {{\sigma _{l,i}}} }  },s + \tau+ {{\sum\limits_{l = 1}^2 {{\sigma _{l,i}}} }};{\rho^{-1}}} \right),
\end{align}
where $\tau =|{N_t}-{N_r}|-1$ \footnote{The absolute value is adopted for the purpose of further extensions by incorporating the cases of ${N_t} < {N_r}$ as well.}, $\Psi \left( \cdot,\cdot;\cdot \right)$ denotes Tricomi's confluent hypergeometric function \cite[eq. (9.211.4)]{gradshteyn1965table}, $S_{N_r}$ denotes the set of permutations of $\left\{1,2,\cdots,{N_r}\right\}$, $\bs \sigma_l \triangleq (\sigma_{l,1},\cdots,\sigma_{l,{N_r}})$ for $ l\in\left(1,2\right)$ and ${\rm sgn}(\bs \sigma_l)$ denotes the signature of the permutation $\bs \sigma_l$, and ${\rm sgn}(\bs \sigma_l)$ is $1$ whenever the minimum number of transpositions necessary to reorder $\bs \sigma_l$ as $(1,2,\cdots,{N_r})$ is even, and $-1$ otherwise.

%By using the inverse Mellin transform together with its associated property \cite[eq.8.3.15]{debnath2010integral}, the CDF of $G$ can be written as
%\begin{equation}\label{eqn:cdf_g_inverse}
%{F_G}\left( x \right) = {{\mathcal M} ^{ - 1}\varphi}\left( { - \frac{1}{s}\varphi \left( {s + 1} \right)} \right)
% = -\frac{{  1}}{{2\pi {\rm{i}}}}\int\limits_{c - {\rm i}\infty }^{c + {\rm i}\infty } {\frac{{{x^{ - s}}}}{s}\varphi \left( {s + 1} \right)ds},
%\end{equation}
%\textcolor[rgb]{1.00,0.00,0.00}{where $c \in (-\infty,0)$ because of the fundamental strip \cite[p400]{szpankowski2010average}}.
By using the following useful lemma, \eqref{eqn:mellin_g_m} can be further simplified to a single-fold summation.

\begin{lemma}\label{the:leb_for}
If $\eta ({\bs\sigma _1},{\bs\sigma _2})$ is a function of $\bs\sigma _1$ and ${\bs\sigma _2}$ irrespective of the ordering of the elements in the permutations of the set of two-tuples $\{(\sigma _{1,l},\sigma _{2,l}):l\in[1,N_r]\}$, the summation of ${{\mathop{\rm sgn}} \left( {{\bs\sigma _1}} \right){\mathop{\rm sgn}} \left( {{\bs\sigma _2}} \right)}\eta({\bs\sigma _1},{\bs\sigma _2})$ over all permutations of $\bs\sigma _1$ and $\bs\sigma _2$ degenerates to
\begin{align}\label{eqn:leb_for_gen}
\sum\limits_{{{\bs\sigma _1},{\bs\sigma _2} \in {S_{N_r}}}} {{\mathop{\rm sgn}} \left( {{\bs\sigma _1}} \right){\mathop{\rm sgn}} \left( {{\bs\sigma _2}} \right)}\eta({\bs\sigma _1},{\bs\sigma _2}) &= N_r!\sum\limits_{{{\bs\sigma } \in {S_{N_r}}}} {{\mathop{\rm sgn}} \left( {{\bs\sigma }} \right)}\eta({\bs\sigma },\bar{\bs\sigma})\notag\\
&= N_r!\sum\limits_{{{\bs\sigma } \in {S_{N_r}}}} {{\mathop{\rm sgn}} \left( {{\bs\sigma }} \right)}\eta(\bar{\bs\sigma},{\bs\sigma }),
\end{align}
where ${\bs\sigma}=(\sigma_1,\cdots,\sigma_{N_r})$ and $\bar{\bs\sigma}=(1,\cdots,N_r)$.
\end{lemma}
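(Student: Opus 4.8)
The plan is to recognize the double sum over $(\bs\sigma_1,\bs\sigma_2)\in S_{N_r}\times S_{N_r}$ as a sum over the orbits of a free group action whose summand is orbit-constant, and then to count orbits. Viewing each permutation $\bs\sigma_l$ as the map $i\mapsto\sigma_{l,i}$, I would introduce the diagonal right action of $S_{N_r}$ on ordered pairs, $(\bs\sigma_1,\bs\sigma_2)\cdot\pi=(\bs\sigma_1\circ\pi,\bs\sigma_2\circ\pi)$ for $\pi\in S_{N_r}$. Right composition by $\pi$ reshuffles only the index $l$, so the set of two-tuples $\{(\sigma_{1,l},\sigma_{2,l}):l\in[1,N_r]\}$ is merely reordered; by the hypothesis on $\eta$ this gives $\eta(\bs\sigma_1\circ\pi,\bs\sigma_2\circ\pi)=\eta(\bs\sigma_1,\bs\sigma_2)$.

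First I would verify that the full summand is invariant under this action. Using multiplicativity of the signature, $\mathrm{sgn}(\bs\sigma_l\circ\pi)=\mathrm{sgn}(\bs\sigma_l)\,\mathrm{sgn}(\pi)$, the sign factor transforms as $\mathrm{sgn}(\bs\sigma_1\circ\pi)\mathrm{sgn}(\bs\sigma_2\circ\pi)=\mathrm{sgn}(\pi)^2\,\mathrm{sgn}(\bs\sigma_1)\mathrm{sgn}(\bs\sigma_2)=\mathrm{sgn}(\bs\sigma_1)\mathrm{sgn}(\bs\sigma_2)$ since $\mathrm{sgn}(\pi)^2=1$. Combined with the invariance of $\eta$, the entire term $\mathrm{sgn}(\bs\sigma_1)\mathrm{sgn}(\bs\sigma_2)\eta(\bs\sigma_1,\bs\sigma_2)$ is constant along each orbit.

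Next I would establish that the action is free: if $\bs\sigma_2\circ\pi=\bs\sigma_2$ then $\pi=\bar{\bs\sigma}$, so every orbit has exactly $N_r!$ elements. Each orbit moreover contains a unique representative whose second coordinate is the identity: taking $\pi=\bs\sigma_2^{-1}$ sends $(\bs\sigma_1,\bs\sigma_2)$ to $(\bs\sigma_1\circ\bs\sigma_2^{-1},\bar{\bs\sigma})$, and freeness forces uniqueness. Hence the orbits are in bijection with $S_{N_r}$ via $\bs\sigma:=\bs\sigma_1\circ\bs\sigma_2^{-1}$. Summing the orbit-constant term orbit-by-orbit, each orbit contributes $N_r!$ copies of its representative's value, which yields $N_r!\sum_{\bs\sigma\in S_{N_r}}\mathrm{sgn}(\bs\sigma)\mathrm{sgn}(\bar{\bs\sigma})\eta(\bs\sigma,\bar{\bs\sigma})$; this is the first claimed identity because $\mathrm{sgn}(\bar{\bs\sigma})=1$. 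The second identity follows symmetrically by instead choosing $\pi=\bs\sigma_1^{-1}$, so that the canonical representative has identity first coordinate and is indexed by $\bs\sigma_2\circ\bs\sigma_1^{-1}$.

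The one genuinely delicate point is the bookkeeping in the first step: I must confirm that right composition by $\pi$ corresponds exactly to the reindexing of the two-tuples named in the hypothesis (and not to a relabeling of the values), and that the same $\pi$ acts simultaneously on both coordinates so the pairing is preserved. Once the diagonal right action is set up correctly, freeness and the uniform orbit size $N_r!$ are immediate, and the remainder is a direct orbit summation.
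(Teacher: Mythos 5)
Your proof is correct and takes essentially the same route as the paper's: the paper reorders the two-tuples by simultaneous transpositions to bring the second coordinate to the identity (noting that each such transposition preserves both $\mathrm{sgn}(\bs\sigma_1)\mathrm{sgn}(\bs\sigma_2)$ and $\eta$) and then counts the $N_r!$-fold multiplicity, which is precisely your diagonal right action with orbit counting. Your formulation via freeness and unique orbit representatives simply makes rigorous what the paper phrases informally as ``after a certain number of transpositions'' together with a one-to-one mapping.
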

\begin{proof}
Please refer to Appendix \ref{app:proof_leb_for}.
\end{proof}

Hence, \eqref{eqn:mellin_g_m} can be further rewritten by favor of Lemma \ref{the:leb_for} as
\begin{align}\label{eqn:mellin_g_m_rew}
\varphi_{\rm ind}(s) & = \sum\limits_{{{\bs\sigma } \in {S_{N_r}}}}\frac{{{\mathop{\rm sgn}} \left( {{\bs\sigma }} \right)}} {{{\rho} }^{{N_t}{N_r}}}\prod\limits_{i = 1}^{N_r} {\frac{{\Gamma \left( {\tau+i+ {{\sigma _{i}}}  } \right)}}{{\left( {{N_r} - i} \right)!\left( {{N_t} - i} \right)!}}}   \Psi \left( {{\tau+i+ {{\sigma _{i}}}  },s + \tau+ i+ {{\sigma _{i}}};{\rho^{-1}}} \right).
\end{align}

By substituting \eqref{eqn:mellin_g_m_rew} into \eqref{eqn:cdf_g_inverse}, the CDF of $G$ can then be obtained as shown in the following theorem.
\begin{theorem} \label{the:cdf_g_ind}
The CDF of $G$ can be represented in terms of the generalized Fox's H function as
\begin{align}\label{eqn:cdf_ML}
{F_G^{(1)}}\left( x \right)  &= \sum\limits_{{{\bs\sigma } \in {S_{N_r}}}}{{{\mathop{\rm sgn}} \left( {{\bs\sigma }} \right)}} \prod\limits_{i = 1}^{N_r} {\frac{{\Gamma \left( {\tau+i+ {{\sigma _{i}}}  } \right)}}{{\left( {{N_r} - i} \right)!\left( {{N_t} - i} \right)!}}} \notag\\
%&\times \frac{1}{{2\pi {\rm{i}}}}\int\nolimits_{c - {\rm i}\infty }^{c + {\rm i}\infty } {\frac{{\Xi \left( {0, - 1,0,1} \right)\prod\limits_{i = 1}^{N_r} {\Xi \left( {1,1,\rho^{-1},\tau+ i+ {{\sigma _{i}}}} \right)} }}{{\Xi \left( {1, - 1,0,1} \right)}}{{\left( \left(\frac{1}{\rho}\right)^{N_r}x \right)}^{ - s}}ds}\notag\\
%&=\frac{1}{{N_r}!}  \sum\limits_{{\bs\sigma _1} \in {S_{N_r}}} {\sum\limits_{{\bs\sigma _2} \in {S_{N_r}}} {{\rm{sgn}}\left( {{\bs\sigma _1}} \right){\rm{sgn}}\left( {{\bs\sigma _2}} \right)} } \prod\limits_{i = 1}^{N_r} {\frac{{\Gamma \left( {{\sum\limits_{l = 1}^2 {{\sigma _{l,i}}} } + \tau} \right)}}{{\left( {{N_r} - i} \right)!\left( {{N_t} - i} \right)!}}}\\
&\quad \times\underbrace {Y_{1,{N_r} + 1}^{{N_r},1}\left[ {\left. {\begin{array}{*{20}{c}}
{\left( {1,1,0,1} \right)}\\
{\left( {1,1,\rho^{-1},\tau + i+ {{\sigma _{i}}}} \right)_{i=1,\cdots,{N_r}},\left( {0,1,0,1} \right)}
\end{array}} \right|\frac{x}{{\rho}^{N_r}}} \right]}_{\mathcal Y_{ {{\bs\sigma}}}^{(1)}(x)},
\end{align}
where the generalized Fox's H function is defined by using the integral of Mellin-Branes type as \cite{chelli2013performance,yilmaz2010outage}
\begin{multline}\label{eqn:fox_H_mellin}
Y_{p,q}^{m,n}\left[ {\left. {\begin{array}{*{20}{c}}
{\left( {{a_1},{\alpha _1},{A_1},{\varphi _1}} \right), \cdots ,\left( {{a_p},{\alpha _p},{A_p},{\varphi _p}} \right)}\\
{\left( {{b_1},{\beta _1},{B_1},{\phi _1}} \right), \cdots ,\left( {{b_q},{\beta _q},{B_q},{\phi _q}} \right)}
\end{array}} \right|x} \right] \\
= \frac{1}{{2\pi {\rm{i}}}}\int_{\cal L} {\frac{{\prod\nolimits_{j = 1}^m {\Xi \left( {{b_j},{\beta _j},{B_j},{\phi _j}} \right)} \prod\nolimits_{i = 1}^n {\Xi \left( {1 - {a_i}, - {\alpha _i},{A_i},{\varphi _i}} \right)} }}{{\prod\nolimits_{i = n + 1}^p {\Xi \left( {{a_i},{\alpha _i},{A_i},{\varphi _i}} \right)} \prod\nolimits_{j = m + 1}^q {\Xi \left( {1 - {b_j}, - {\beta _j},{B_j},{\phi _j}} \right)} }}{x^{ - s}}ds},
\end{multline}
%\begin{equation}\label{eqn:M_def}
%M_{p,q}^{m,n}\left[ s \right] = \frac{{\prod\limits_{j = 1}^m {\Xi \left( {{b_j},{\beta _j},{B_j},{\phi _j}} \right)} \prod\limits_{i = 1}^n {\Xi \left( {1 - {a_i}, - {\alpha _i},{A_i},{\varphi _i}} \right)} }}{{\prod\limits_{i = n + 1}^p {\Xi \left( {{a_i},{\alpha _i},{A_i},{\varphi _i}} \right)} \prod\limits_{j = m + 1}^q {\Xi \left( {1 - {b_j}, - {\beta _j},{B_j},{\phi _j}} \right)} }},
%\end{equation}
where an efficient MATHEMATICA implementation of \eqref{eqn:fox_H_mellin} has been provided in \cite{chelli2013performance} and $\Xi \left( {a,\alpha ,A,\varphi } \right) = {A^{\varphi  + a + \alpha s - 1}}\Psi \left( {\varphi ,\varphi  + a + \alpha s;A} \right)$.
\end{theorem}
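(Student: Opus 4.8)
The plan is to obtain the CDF $F_G^{(1)}(x)$ by feeding the already-simplified Mellin transform $\varphi_{\rm ind}(s+1)$ from \eqref{eqn:mellin_g_m_rew} into the inverse-Mellin representation \eqref{eqn:cdf_g_inverse} and then recognizing the resulting contour integral as a generalized Fox's H function. Concretely, I would first shift the argument by writing out $\varphi_{\rm ind}(s+1)$, replacing every occurrence of $s$ in \eqref{eqn:mellin_g_m_rew} by $s+1$. This turns the second argument of Tricomi's function into $s + \tau + i + \sigma_i + 1$ while the first argument $\tau + i + \sigma_i$ is unaffected. Since the permutation sum and the $\sigma$-independent Gamma/factorial prefactors do not interact with the contour variable $s$, I would pull the summation over ${\bs\sigma}\in S_{N_r}$ and the prefactor $\prod_{i}\Gamma(\tau+i+\sigma_i)/\bigl((N_r-i)!(N_t-i)!\bigr)$ outside the integral, leaving a single contour integral in $s$ for each permutation.

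Next I would plug this into \eqref{eqn:cdf_g_inverse}, which carries the extra kernel factor $x^{-s}/(-s)$. The key is to match the integrand of the remaining contour integral, term by term, against the Mellin--Barnes definition \eqref{eqn:fox_H_mellin} with $\Xi(a,\alpha,A,\varphi) = A^{\varphi+a+\alpha s-1}\Psi(\varphi,\varphi+a+\alpha s;A)$. The $N_r$ factors $\Psi(\tau+i+\sigma_i,\, s+\tau+i+\sigma_i+1;\rho^{-1})$ should be identified with $N_r$ numerator $\Xi$-functions of the form $\Xi(b_j,\beta_j,B_j,\phi_j)$ by choosing, for each $i$, the quadruple $(b,\beta,B,\phi)=(1,1,\rho^{-1},\tau+i+\sigma_i)$, so that $\varphi+a+\alpha s-1 = \tau+i+\sigma_i + s$ matches the second Tricomi argument and $A^{\tau+i+\sigma_i+s}$ supplies the needed $\rho$-power; this is exactly the $m=N_r$ lower-row entries in \eqref{eqn:cdf_ML}. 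The remaining factor $1/(-s)$ together with the overall $\rho$-scaling I would absorb into a single extra pole, matched by the pair $(1,1,0,1)$ in the upper row (giving $n=1$, an $\Xi(1-a,-\alpha,A,\varphi)$ with $A=0$ producing the requisite $\Gamma(s)/\Gamma(s+1)=1/s$-type behavior) and $(0,1,0,1)$ in the lower row. The variable of the H function then comes out as $x/\rho^{N_r}$ once the prefactor power $\rho^{-N_t N_r}$ and the $N_r$ factors of $\rho$ from the $\Xi$ matching are collected, since $N_t N_r - \sum_i(\tau+i+\sigma_i)$ reduces to $N_r$ after using $\tau=N_t-N_r-1$ and $\sum_i(i+\sigma_i)=N_r(N_r+1)$.

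The main obstacle I anticipate is the careful bookkeeping in this last matching step: verifying that the $A=0$ entries $(1,1,0,1)$ and $(0,1,0,1)$ genuinely reproduce the kernel factor $-1/s$ from \eqref{eqn:cdf_g_inverse} under the convention $\Xi(a,\alpha,0,\varphi)=\lim_{A\to 0}A^{\varphi+a+\alpha s-1}\Psi(\varphi,\varphi+a+\alpha s;A)$, which must collapse to a ratio of ordinary Gamma functions rather than Tricomi functions, and that the accumulated powers of $\rho$ consolidate exactly into the claimed argument $x/\rho^{N_r}$. I would handle this by checking the degenerate-$A$ behavior of $\Psi$ explicitly and by tracking the total exponent of $\rho$ across the prefactor, the kernel, and the $N_r$ nonzero $\Xi$ factors, confirming the net exponent is $-N_r$ as required.

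A secondary point to confirm is that the contour $\cal L$ and the fundamental strip $c\in(-\infty,0)$ from \eqref{eqn:cdf_g_inverse} are compatible with the pole structure of the matched H function, so that the inverse transform is valid and the series representation converges; this follows from the same analyticity argument already invoked after \eqref{eqn:cdf_g_inverse}, namely $F_G(x)=0$ for $x<1$ and $F_G(x)\to 1$ as $x\to\infty$, so I would simply cite that the fundamental strip determines the admissible contour and no new convergence analysis is needed.
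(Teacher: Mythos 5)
Your proposal is correct and takes essentially the same approach as the paper's proof in Appendix~\ref{app:cdf_g_ind}: substitute \eqref{eqn:mellin_g_m_rew} into \eqref{eqn:cdf_g_inverse}, write the kernel $-1/s$ as $\Gamma(-s)/\Gamma(1-s)=\Xi(0,-1,0,1)/\Xi(1,-1,0,1)$ using the degenerate-$A$ entries, and identify each Tricomi factor with $\Xi(1,1,\rho^{-1},\tau+i+\sigma_i)$. The only blemish is an arithmetic slip in your $\rho$-bookkeeping: $\sum_{i=1}^{N_r}(\tau+i+\sigma_i)=N_tN_r$ exactly (not $N_tN_r-N_r$), so these $s$-independent powers supply the entire $\rho^{-N_tN_r}$ prefactor, while the exponent $N_r$ in the argument $x/\rho^{N_r}$ arises instead from the coefficient of $s$, since each of the $N_r$ factors $\Xi(1,1,\rho^{-1},\tau+i+\sigma_i)$ contributes a factor $\rho^{-s}$.
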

\begin{proof}
The proof is given in Appendix \ref{app:cdf_g_ind}.
\end{proof}
As a consequence, substituting \eqref{eqn:cdf_ML} into \eqref{eqn:out_def} yields a closed-form expression of outage probability for independent Rayleigh MIMO channels as
\begin{align}\label{out_iid}
p_{out}^{\mathrm{ind}}  = {F_G^{(1)}}\left( 2^R \right) =  \sum\limits_{{{\bs\sigma } \in {S_{N_r}}}}{{{\mathop{\rm sgn}} \left( {{\bs\sigma }} \right)}} \prod\limits_{i = 1}^{N_r} {\frac{{\Gamma \left( {\tau+i+ {{\sigma _{i}}}  } \right)}}{{\left( {{N_r} - i} \right)!\left( {{N_t} - i} \right)!}}{\mathcal Y_{ {{\bs\sigma}}}^{(1)}(2^R)}}.
%\notag\\
%p_{out}^{\mathrm{ind}}  = \frac{1}{{N_r}!}  \sum\limits_{{\bs\sigma _1} \in {S_{N_r}}} {\sum\limits_{{\bs\sigma _2} \in {S_{N_r}}} {{\rm{sgn}}\left( {{\bs\sigma _1}} \right){\rm{sgn}}\left( {{\bs\sigma _2}} \right)} } \prod\limits_{i = 1}^{N_r} {\frac{{\Gamma \left( {{\sum\limits_{l = 1}^2 {{\sigma _{l,i}}} } + \tau} \right)}}{{\left( {{N_r} - i} \right)!\left( {{N_t} - i} \right)!}}} {\phi_{ {{\bs\sigma _1},{\bs\sigma _2}}}(2^R)},
\end{align}
Although the outage probability can be expressed in a compact form in \eqref{out_iid}, the involved generalized Fox's H function is too complex to extract insightful results. In order to obtain tractable results and gain more insights, we have to recourse to the asymptotic analysis of the outage probability at high SNR.
\subsubsection{Asymptotic Outage Probability}
In order to derive the asymptotic outage probability in the high SNR region, i.e., $\rho \to \infty$, the following lemma is introduced first.
\begin{lemma}\label{the:asy_out_ind}
As $\rho \to \infty$, the generalized Fox's H function ${\mathcal Y_{ {{\bs\sigma}}}^{(1)}(x)}$ follows the asymptotic expansion as
\begin{equation}\label{eqn:h_fin_asy}
{\mathcal Y_{ {{\bs\sigma}}}^{(1)}(x)}={{\rho} ^{-{N_t}{N_r}}}\underbrace {G_{{N_r} + 1,{N_r} + 1}^{0,{N_r} + 1}\left( {\left. {\begin{array}{*{20}{c}}
{1,{\left(1 + \tau + i+ {{\sigma _{i}}}\right)}_{i=1,\cdots,{N_r}}}\\
{\underbrace {1, \cdots, 1}_{N_r},0}
\end{array}} \right|x} \right)}_{{\mathfrak g_{{\bs{\sigma}}}}\left( x \right)} +o\left( {{\rho ^{ - {N_t}{N_r}}}} \right),
\end{equation}
where $G^{m,n}_{p,q}(\cdot)$ denotes the Meijer G-function\cite{gradshteyn1965table}, and ${\mathfrak g_{{\bs{\sigma}}}}\left( x \right)$ can be easily evaluated by using \cite[eq.(5.2.21)]{bateman1954tables} because ${\mathfrak g_{{\bs{\sigma}}}}\left( x \right)$ can be expressed in terms of inverse Laplace transform as
\begin{equation}\label{eqn:g2R_laplace}
{\mathfrak g_{{\bs{\sigma}}}}\left( x \right) = \frac{1}{{2\pi {\rm{i}}}}\int\nolimits_{-c - {\rm i}\infty }^{-c + {\rm i}\infty } {\frac{{e^{s\ln x}}}{{s\prod\nolimits_{i = 1}^{N_r} {\prod\nolimits_{t = 1}^{\tau + i+  {\sigma _i}} {\left( {s - t} \right)} } }} ds}.
\end{equation}
\end{lemma}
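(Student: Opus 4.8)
The plan is to work directly from the Mellin--Barnes representation \eqref{eqn:fox_H_mellin} of the generalized Fox's H function. Inserting the parameters of $\mathcal Y_{\bs\sigma}^{(1)}(x)$ read off from \eqref{eqn:cdf_ML} (here $p=1$, $q=N_r+1$, $m=N_r$, $n=1$) and recalling $\Xi(a,\alpha,A,\varphi)=A^{\varphi+a+\alpha s-1}\Psi(\varphi,\varphi+a+\alpha s;A)$, I would first write
\begin{equation}
\mathcal Y_{\bs\sigma}^{(1)}(x)=\frac{1}{2\pi{\rm i}}\int_{\mathcal L}\frac{\Xi(0,-1,0,1)}{\Xi(1,-1,0,1)}\prod_{i=1}^{N_r}\Xi\!\left(1,1,\rho^{-1},\nu_i\right)\left(\frac{x}{\rho^{N_r}}\right)^{-s}ds,\qquad \nu_i\triangleq\tau+i+\sigma_i.
\end{equation}
The two factors carrying $A=0$ are elementary: the small-argument reduction $\lim_{A\to0}\Xi(a,\alpha,A,\varphi)=\Gamma(\varphi+a+\alpha s-1)/\Gamma(\varphi)$ gives $\Xi(0,-1,0,1)=\Gamma(-s)$ and $\Xi(1,-1,0,1)=\Gamma(1-s)$, whose ratio is exactly the kernel $-1/s$ of the inverse Mellin transform.

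For the $N_r$ surviving factors I would invoke the connection formula that writes Tricomi's $\Psi$ as a linear combination of two Kummer functions, yielding the two-term expansion
\begin{equation}
\Xi\!\left(1,1,\rho^{-1},\nu_i\right)=\frac{\Gamma(-\nu_i-s)}{\Gamma(-s)}\,\rho^{-(\nu_i+s)}+\frac{\Gamma(\nu_i+s)}{\Gamma(\nu_i)}+o(1),\qquad \rho\to\infty.
\end{equation}
The decisive manoeuvre is the contour placement. Because the integrand is analytic in the half-plane $\Re(s)<0$ for every finite $\rho$ (its only singularity there, the simple pole at $s=0$ from $-1/s$, lies on the boundary), I am free to fix $\Re(s)=c$ with $c<-\max_i\nu_i$, so that $\Re(\nu_i+s)<0$ along $\mathcal L$ and the first term above dominates each factor. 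Multiplying the resulting $\rho^{-\sum_i(\nu_i+s)}$ by the $\rho^{N_rs}$ supplied by $(x/\rho^{N_r})^{-s}$ cancels the $s$-dependence of the exponent and leaves the single prefactor $\rho^{-\sum_i\nu_i}=\rho^{-N_tN_r}$, where $\sum_{i=1}^{N_r}\nu_i=N_r\tau+N_r(N_r+1)=N_tN_r$.

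After this cancellation the quantity inside the integral tends to $-\tfrac1s\,\frac{\prod_{i=1}^{N_r}\Gamma(-\nu_i-s)}{\Gamma(-s)^{N_r}}\,x^{-s}$, so that $\rho^{N_tN_r}\mathcal Y_{\bs\sigma}^{(1)}(x)\to\mathfrak g_{\bs\sigma}(x)$. A change of variable $s\mapsto-s$ together with $\Gamma(1+s)=s\Gamma(s)$ recognises this limit as the Meijer $G$-function stated in \eqref{eqn:h_fin_asy}; rewriting $\Gamma(s-\nu_i)/\Gamma(s)=1/\prod_{t=1}^{\nu_i}(s-t)$ then reproduces the inverse-Laplace kernel \eqref{eqn:g2R_laplace}, which simultaneously establishes the secondary representation asserted by the lemma.

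The main obstacle is rigour: justifying the interchange of $\rho\to\infty$ with the infinite contour integral, and confirming that every discarded contribution is genuinely $o(\rho^{-N_tN_r})$. For the error bookkeeping I would observe that replacing the dominant term of any single factor by its subdominant constant $\Gamma(\nu_i+s)/\Gamma(\nu_i)$ raises the $\rho$-exponent from $-N_tN_r$ to $-N_tN_r+\Re(\nu_i+s)$, which is strictly smaller than $-N_tN_r$ on $\mathcal L$; hence all corrections are negligible. For the interchange itself I would produce a $\rho$-uniform integrable majorant along the vertical line by combining the exponential decay of the Gamma factors in $|\Im(s)|$ with the integral representation $\Psi(\nu_i,\nu_i+1+s;\rho^{-1})=\Gamma(\nu_i)^{-1}\int_0^\infty e^{-t/\rho}t^{\nu_i-1}(1+t)^{s}\,dt$, and then appeal to dominated convergence. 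Securing this uniform majorant on the tails of $\mathcal L$ is the technical crux; once it is in place, the asymptotic \eqref{eqn:h_fin_asy} follows.
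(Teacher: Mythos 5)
Your proposal is correct and takes essentially the same route as the paper's own proof: the paper first inverts the argument of the H-function via \cite[Property 2]{shi2017asymptotic} and then, exactly as you do, decomposes each Tricomi $\Psi$ into two Kummer functions via \cite[eq.~(9.210.2)]{gradshteyn1965table}, places the contour at $c<\min_i\{-\tau-i-\sigma_i\}$ so that every term containing at least one subdominant factor is $o\left(\rho^{-N_tN_r}\right)$ (the same power counting $\sum_i(\tau+i+\sigma_i)=N_tN_r$), and recognises the surviving integral as the Meijer G-function and, since the parameters are integers, as the inverse-Laplace kernel \eqref{eqn:g2R_laplace}. Your working with the original Mellin--Barnes representation and substituting $s\mapsto-s$ at the end is just the mirror image of the paper's first step, and your dominated-convergence caveat is, if anything, more care than the paper itself exercises.
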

\begin{proof}
Please refer to Appendix \ref{app:proof_phi}.
\end{proof}

By combining \eqref{out_iid} and Lemma \ref{the:asy_out_ind}, the asymptotic outage probability under the independent Rayleigh MIMO channels is given by the following theorem.
\begin{theorem}
Under high SNR, the outage probability is asymptotically equal to
\begin{equation}\label{eqn:F_g_fin}
p_{out}^{\mathrm{ind}} =
%&= \frac{{\left(\frac{1}{\rho}\right) ^{{N_t}{N_r}}}}{{N_r}!}  \sum\limits_{{\bs\sigma _1} \in {S_{N_r}}} {\sum\limits_{{\bs\sigma _2} \in {S_{N_r}}} {{\rm{sgn}}\left( {{\bs\sigma _1}} \right){\rm{sgn}}\left( {{\bs\sigma _2}} \right)} } \prod\limits_{i = 1}^{N_r} {\frac{{\Gamma \left( {{\sum\limits_{l = 1}^2 {{\sigma _{l,i}}} } + \tau} \right)}}{{\left( {{N_r} - i} \right)!\left( {{N_t} - i} \right)!}}} {\vartheta_{ {{\bs\sigma _1},{\bs\sigma _2}}}(2^R)} + \mathcal{O}\left( {{\rho ^{ - {N_t}{N_r}}}} \right)\\
%&\mathop  = \limits^{\left( a \right)} \frac{{\left( {\frac{{N_r}}{\rho }} \right)^{{N_t}{N_r}}}}{{\prod\limits_{i = 1}^{N_r} {\left( {{N_r} - i} \right)!\left( {{N_t} - i} \right)!} }}\sum\limits_{{\bs \sigma _1} \in {S_{N_r}}} { {{{{\rm{sgn}}\left( {{\bs\sigma _1}} \right)}}{\vartheta_{{\bs\sigma _1},{\bs\sigma _0}}}\left( 2^R \right)} }  + o\left( {{\rho ^{ - {N_t}{N_r}}}} \right)
\frac{{{\rho^{-N_t N_r}}}}{{\prod\nolimits_{i = 1}^{N_r} {\left( {{N_r} - i} \right)!\left( {{N_t} - i} \right)!} }}\sum\limits_{{{\bs\sigma } \in {S_{N_r}}}}{{{\mathop{\rm sgn}} \left( {{\bs\sigma }} \right)}} \prod\limits_{i = 1}^{N_r} {{{\Gamma \left( {\tau+i+ {{\sigma _{i}}}  } \right)}}{\mathfrak g_{{\bs{\sigma}}}}\left( 2^R \right)}+o\left( {{\rho ^{ - {N_t}{N_r}}}} \right).
\end{equation}
\end{theorem}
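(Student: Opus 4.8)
The plan is to derive \eqref{eqn:F_g_fin} by inserting the high-SNR expansion of the generalized Fox's H function supplied by Lemma~\ref{the:asy_out_ind} directly into the exact closed-form outage probability \eqref{out_iid}. Because \eqref{out_iid} already writes $p_{out}^{\mathrm{ind}}$ as a finite, signed, weighted sum over the permutation group $S_{N_r}$ of the quantities $\mathcal Y_{\bs\sigma}^{(1)}(2^R)$, and the weights are fixed constants independent of $\rho$, the leading high-SNR behaviour of the entire sum is controlled termwise by the leading behaviour of each $\mathcal Y_{\bs\sigma}^{(1)}(2^R)$. No new transform-domain manipulation is needed beyond what Lemma~\ref{the:asy_out_ind} establishes.

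Concretely, I would replace each $\mathcal Y_{\bs\sigma}^{(1)}(2^R)$ in \eqref{out_iid} by $\rho^{-N_t N_r}\mathfrak g_{\bs\sigma}(2^R)+o(\rho^{-N_t N_r})$ as given in \eqref{eqn:h_fin_asy}. The key bookkeeping observation is that the factorial product $\prod_{i=1}^{N_r}1/[(N_r-i)!(N_t-i)!]$ does not depend on $\bs\sigma$ and can therefore be extracted from the summation, whereas the Gamma factors $\prod_{i=1}^{N_r}\Gamma(\tau+i+\sigma_i)$ together with $\mathrm{sgn}(\bs\sigma)$ must stay inside. Pulling the common $\rho^{-N_t N_r}$ in front of the sum then reproduces, verbatim, the leading term displayed in \eqref{eqn:F_g_fin}.

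The only point requiring a moment of care is the aggregation of the residuals: each summand carries an error $\mathrm{sgn}(\bs\sigma)\,[\prod_{i=1}^{N_r}\Gamma(\tau+i+\sigma_i)/((N_r-i)!(N_t-i)!)]\cdot o(\rho^{-N_t N_r})$, and I would invoke the finiteness $|S_{N_r}|=N_r!$ together with the boundedness of the $\rho$-independent coefficients to conclude that a finite linear combination of $o(\rho^{-N_t N_r})$ terms is again $o(\rho^{-N_t N_r})$. Collecting these discharges the error and leaves the single $o(\rho^{-N_t N_r})$ term in \eqref{eqn:F_g_fin}.

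Thus the genuine analytical difficulty --- proving that each $\mathcal Y_{\bs\sigma}^{(1)}$ decays exactly as $\rho^{-N_t N_r}$ with the identified Meijer-G coefficient $\mathfrak g_{\bs\sigma}$ --- is already absorbed into Lemma~\ref{the:asy_out_ind}, so I expect the theorem itself to be a routine substitution-and-collection argument rather than an obstacle. The one subtlety I would flag, though it is not needed to validate \eqref{eqn:F_g_fin} as stated, is that the displayed leading coefficient $\sum_{\bs\sigma}\mathrm{sgn}(\bs\sigma)\prod_{i=1}^{N_r}\Gamma(\tau+i+\sigma_i)\,\mathfrak g_{\bs\sigma}(2^R)$ must be verified to be nonzero for the expression to actually certify a diversity order of $N_t N_r$; were it to vanish through sign cancellation across permutations, \eqref{eqn:F_g_fin} would remain correct but would only assert $p_{out}^{\mathrm{ind}}=o(\rho^{-N_t N_r})$.
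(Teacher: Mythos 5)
Your proposal is correct and follows exactly the route the paper takes: the paper presents this theorem as an immediate consequence of substituting the expansion of Lemma~\ref{the:asy_out_ind} into the exact expression \eqref{out_iid}, pulling the $\rho$-independent constants through the finite sum over $S_{N_r}$, and absorbing the finitely many $o(\rho^{-N_tN_r})$ residuals into a single error term. The analytical content indeed resides entirely in Lemma~\ref{the:asy_out_ind}, so nothing further is required.
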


%where $ {\bs\sigma _0}$ is the permutation of $\{1,2,\cdots,{N_r}\}$ and step $(a)$ holds by using the property proved in Appendix \ref{app:out_two_sums}.
Similar to \eqref{out_iid} and \eqref{eqn:F_g_fin} for $N_t \ge N_r$, the exact and asymptotic results of the outage probability can be obtained for the case of $N_t < N_r$, because the outage probability given by \eqref{eqn:out_pro} can be rewritten by using the property \cite[Exercise 7.25, p167]{abadir2005matrix} as
\begin{equation}\label{eqn:out_prob_inter_ch}
{p_{out}}= \Pr \left( {{{\log }_2}\det \left( {{{\bf{I}}_{N_t}} + {\rho}{{\bf{H}}^{\rm{H}}}{\bf{H}}} \right) < R} \right),
\end{equation}
where $\mathbf{H}^\mathrm{H}\mathbf{H}$ is still a complex Wishart matrix, i.e., $\mathbf{H}^\mathrm{H}\mathbf{H} \sim \mathcal W_{N_t}(N_r,{\bf I}_{N_t})$. Besides, the joint PDF of the unordered eigenvalues of $\mathbf{H}^\mathrm{H}\mathbf{H}$ is similar to \eqref{eqn:joint_pdf_lambdav} by straightforward interchanging $N_t$ and $N_r$. Hence, the same approach can be used to derive closed-form expressions for the corresponding exact and asymptotic outage probabilities. The details as well as the results are omitted here to avoid redundancy. Moreover, we leave the discussion with regard to the useful insights of the asymptotic results to the next section.

%\begin{equation}\label{eqn:out_pro}
%{p_{out}}= \Pr \left( {{{\log }_2}\det \left( {{{\bf{I}}_{N_r}} + {\rho}{\bf{H}}{{\bf{H}}^{\rm{H}}}} \right) < R} \right).
%\end{equation}
%\begin{equation}\label{eqn:out_def}
%{p_{out}} = \Pr \left( { \underbrace {\prod\limits_{i = 1}^{{N_r}} {\left( {1 + \frac{\rho }{{{N_t}}}{\lambda _i}} \right)} }_{ \triangleq  G} < {2^R}} \right) = F_G(2^R),
%\end{equation}
%.
%
%The details as well as the result are omitted due to space limitation.
% If  $N_t < N_r$

\subsection{Semi-Correlated Rayleigh MIMO Channels}
The spatial correlation at only one side (either the transmitter or the receiver) is commonly characterized by semi-correlated Rayleigh MIMO channels\cite{kang2003impact}. Specifically, the Kronecker models in \eqref{kron_mod} for the transmit and the receive correlations respectively reduce to $\mathbf{H}={\bf H}_w{{\bf R}_t}^{1/2}$ and $\mathbf{H}={{\bf R}_r}^{1/2}{\bf H}_w$. Nevertheless, thanks to the property \cite[Exercise 7.25, p167]{abadir2005matrix} as \eqref{eqn:out_prob_inter_ch}, the outage probabilities for these two models can be tackled in exactly the same manner. Without loss of generality, we take the receive-correlated Kronecker model, i.e., $\mathbf{H}={{\bf R}_r}^{1/2}{\bf H}_w$, as an example. This implies that the channel matrix $\bf H$ has $N_t$ i.i.d. columns, each with mean zero and covariance ${\bf R}_r$. Hence, $\mathbf{H}\mathbf{H}^\mathrm{H}$ is by definition a Wishart matrix, i.e., $\mathbf{H}\mathbf{H}^\mathrm{H} \sim \mathcal W_{N_r}(N_t,{\bf R}_{r})$\cite{couillet2011random}.

\subsubsection{Exact Outage Probability}
To proceed, the joint distribution of unorder positive eigenvalues of ${\bf{H}}{\bf{H}}^{\rm{H}}$ should be determined first. However, the eigenvalues of the Wishart matrix ${\bf{H}}{\bf{H}}^{\rm{H}}$ follow different joint distributions for $N_t \ge N_r$ and $N_t < N_r$, because the Wishart matrices are nonsingular and singular for the two cases, respectively.
%since the Wishart matrix ${\bf{H}}{\bf{H}}^{\rm{H}}$ is nonsingular and singular for $N_t \ge N_r$ and $N_t < N_r$, respectively, their eigenvalues follow different joint distributions.
Specifically, if $N_t \ge N_r$, the joint distribution of the unordered strictly positive eigenvalues of ${\bf{H}}{\bf{H}}^{\rm{H}}$ equals \cite{james1964distributions,couillet2011random}
\begin{equation}\label{eqn:joint_pdf_eig_wishart}
{f_{\bs \lambda} }\left( {{\lambda _1}, \cdots ,{\lambda _{N_r}}} \right) = \frac{{\left( { - 1} \right)^{\frac{{N_r}\left( {{N_r} - 1} \right)}{2}}}{\det \left( {{{\left\{ {{e^{ - \frac{{{\lambda _i}}}{{{r_j}}}}}} \right\}}_{1 \le i,j \le {N_r}}}} \right)}{{\Delta \left( {\bf{\Lambda }} \right)}}}{{{N_r}!}{\det \left( {{{\mathbf{R}_r}^{N_t}}} \right)}{{\Delta \left( {{{{\mathbf{R}_r}^{-1}}}} \right)}}}\prod\limits_{j = 1}^{N_r} {\frac{{{\lambda _j}^{{N_t} - {N_r}}}}{{\left( {{N_t} - j} \right)!}}},\,N_t \ge N_r,
\end{equation}
where ${{r_1}} > \cdots > {{r_{N_r}}}>0$ denote the eigenvalues of ${\bf R}_r$. Whereas if $N_t < N_r$, ${\bf{H}}{\bf{H}}^{\rm{H}}$ is not a full rank matrix and has at least $N_r-N_t$ zero eigenvalues. The rest $N_t$ strictly positive eigenvalues are defined as $\tilde{\bs\lambda} = (\lambda_1,\cdots,\lambda_{N_t})$, and the joint distribution of $\tilde{\bs\lambda}$ is given by \cite{GAO2000155,antonia2004random}
\begin{equation}\label{eqn:joint_pdf_eig_semi_small}
{f_{\tilde {\bs\lambda} }}\left( {{\lambda _1}, \cdots ,{\lambda _{{N_t}}}} \right) = \frac{{{{\left( { - 1} \right)}^{{N_t}\left( {{N_r} - {N_t}} \right)}}\Delta \left( {{\bf{\tilde \Lambda }}} \right)}}{{\prod\nolimits_{j = 1}^{{N_t}} {j!} \Delta \left( {{{\bf{R}}_r}} \right)}}\det \left( \begin{array}{l}
{\left\{ {{r_j}^{{N_r} - {N_t} - 1}{e^{ - \frac{{{\lambda _i}}}{{{r_j}}}}}} \right\}_{\scriptstyle1 \le i \le {N_t}\hfill\atop
\scriptstyle1 \le j \le {N_r}\hfill}}\\
{\left\{ {{r_j}^{i - {N_t} - 1}} \right\}_{\scriptstyle{N_t} + 1 \le i \le {N_r}\hfill\atop
\scriptstyle1 \le j \le {N_r}\hfill}}
\end{array} \right),{N_t} < {N_r},
\end{equation}
where $\tilde {\bs \Lambda} = {\rm diag}(\lambda_1,\cdots,\lambda_{N_t})$. Due to the different forms of the joint eigenvalue distributions for these two cases, their outage probabilities should be derived separately. Following the same steps as \eqref{eqn:mellin_g_m}-\eqref{eqn:mellin_g_m_rew}, the Mellin transform of $G$ under semi-correlated Rayleigh MIMO channels, $\varphi_{\rm semi}(s)$, can be obtained as
%\begin{align}\label{eqn:mellin_t_corr_gpdf_sumre}
% \varphi_{\rm semi}(s) &= \frac{{{{\left( { - 1} \right)}^{\frac{1}{2}{N_r}\left( {{N_r} - 1} \right)}}}{ \rho ^{-\frac{{{N_r}\left( {{N_r} + 1} \right)}}{2} - \left( {N_t - {N_r}} \right){N_r}}}}{{\det \left( {{{{\bf{R}}_r}^{N_t}}} \right)\Delta \left( {{{{\bf{R}}_r}^{ - 1}}} \right)\prod\nolimits_{j = 1}^{N_r} {\left( {N_t - j} \right)!} }}\sum\limits_{{\bs\sigma} \in {S_{N_r}}} { {{\rm{sgn}}\left( {{\bs\sigma }} \right)} } \notag\\
%&\quad\times\prod\limits_{i = 1}^{N_r} {\Gamma \left( {i + \tau  + 1} \right)\Psi \left( {i + \tau  + 1,s + i + \tau  + 1;\frac{{N_r}}{{\rho {r_{{\sigma _{i}}}}}}} \right)},\,N_t \ge N_r,
%\end{align}
\begin{subequations}\label{eqn:mellin_t_corr_gpdf_sumre}
\label{eq:main}
\begin{align}\label{eqn:mellin_t_corr_gpdf_sumre1}
 \varphi_{\rm semi}^{\ge}(s) &= \frac{{{{\left( { - 1} \right)}^{\frac{1}{2}{N_r}\left( {{N_r} - 1} \right)}}}{ \rho ^{-\frac{1}{2}{{{N_r}\left( {{N_r} + 1} \right)}} - \left( {N_t - {N_r}} \right){N_r}}}}{{\det \left( {{{{\bf{R}}_r}^{N_t}}} \right)\Delta \left( {{{{\bf{R}}_r}^{ - 1}}} \right)\prod\nolimits_{j = 1}^{N_r} {\left( {N_t - j} \right)!} }}\sum\limits_{{\bs\sigma} \in {S_{N_r}}} { {{\rm{sgn}}\left( {{\bs\sigma }} \right)} } \notag\\
&\quad\times\prod\limits_{i = 1}^{N_r} {\Gamma \left( {i + \tau  + 1} \right)\Psi \left( {i + \tau  + 1,s + i + \tau  + 1;\frac{1}{{\rho {r_{{\sigma _{i}}}}}}} \right)},\,N_t \ge N_r,
\end{align}
\begin{align}\label{eq:semi_me_small_conf_fin}
\varphi _{{\rm{semi}}}^ < (s) &= \frac{{{{\left( { - 1} \right)}^{{N_t}\left( {{N_r} - {N_t}} \right)}}{\rho ^{ - \frac{1}{2}{N_t}\left( {{N_t} + 1} \right)}}}}{{\Delta \left( {{{\bf{R}}_r}} \right)}}\sum\limits_{{\bs{\sigma }} \in {S_{{N_r}}}}{\mathop{\rm sgn}} \left( {\bs{\sigma }} \right)\notag\\
&\quad \times  {\prod\limits_{i = 1}^{{N_t}} {{r_{{\sigma _i}}}^{{N_r} - {N_t} - 1}} \prod\limits_{i = {N_t} + 1}^{{N_r}} {{r_{{\sigma _i}}}^{i - {N_t} - 1}} } \prod\limits_{i = 1}^{{N_t}} {\Psi \left( {i,s + i,\frac{1}{{\rho {r_{{\sigma _i}}}}}} \right)},\,N_t < N_r,
\end{align}
\end{subequations}
where the proofs are detailed in Appendix \ref{app:mellin_semi}.
%\begin{IEEEeqnarray}{c}
%x1\IEEEyesnumber\IEEEyessubnumber*\\
%x2\\
%x3\IEEEyesnumber\IEEEyessubnumber\label{eqn:expl}\\
%x4\\
%x5\IEEEyesnumber*\\
%x6
%\end{IEEEeqnarray}

\begin{theorem}\label{the:fg_semi}
  %\begin{equation}\label{eqn:melliN_t_corr_gpdf_sum_conf_recon}
%\begin{aligned}
% F_G(x) =\sum\limits_{{\bs\sigma _1} \in {S_{N_r}}} {\sum\limits_{{\bs\sigma _2} \in {S_{N_r}}} {B_{{\bs\sigma _1},{\bs\sigma _2}}} }\prod\limits_{i = 1}^{N_r} {\Gamma \left( {\nu_i} \right){{t_{{\sigma _{1,i}}}}^{\nu_i}}}\underbrace {Y_{1,{N_r} + 1}^{{N_r},1}\left[ {\left. {\begin{array}{*{20}{c}}
%{\left( {1,1,0,1} \right)}\\
%{\left( {1,1,\frac{{N_t}}{{\rho {t_{{\sigma _{1,i}}}}}},\nu_i} \right)_{\left(i=1,\cdots,{N_r}\right)},\left( {0,1,0,1} \right)}
%\end{array}} \right|\frac{{{\left(\frac{1}{\rho}\right)^{{N_r}}}x}}{{\det \left( {\bf{T}} \right)}}} \right]}_{{\zeta _{{\bs\sigma _1},{\bs\sigma _2}}}\left( x \right)} .
%\end{aligned}
%\end{equation}
The CDFs of $G$ for the cases of $N_t \ge N_r$ and $N_t < N_r$ are respectively given by%associated with $ \varphi_{\rm semi}^{\ge}(s)$ and $ \varphi_{\rm semi}^{<}(s)$
\begin{subequations}\label{eqn:mellin_t_corr_gpdf_sumsmall}
\begin{align}\label{eqn:mellin_t_corr_gpdf_sum_conf_recon}
 &F_G^{(2\ge)}(x) = \frac{{{{\left( { - 1} \right)}^{\frac{1}{2}{N_r}\left( {{N_r} - 1} \right)}}}\prod\nolimits_{i = 1}^{N_r} {\Gamma \left( {i + \tau  + 1} \right)}}{{\det \left( {{{{\bf{R}}_r}^{N_t}}} \right)\Delta \left( {{{{\bf{R}}_r}^{ - 1}}} \right)\prod\nolimits_{j = 1}^{N_r} {\left( {N_t - j} \right)!} }}\sum\limits_{{\bs\sigma} \in {S_{N_r}}} { {{\rm{sgn}}\left( {{\bs\sigma }} \right)} }\prod\limits_{i = 1}^{N_r} {{r_{{\sigma _{i}}}}^{i + \tau + 1}}{{\mathcal Y _{\bs\sigma ,N_r,\tau }^{(2)}}\left( x \right)},%\\\,N_t\ge N_r,
% &\frac{{{{\left( { - 1} \right)}^{\frac{1}{2}M\left( {M - 1} \right)}}}}{{M!\det \left( {{{\bf{R}}^N}} \right)\Delta \left( {{{\bf{R}}^{ - 1}}} \right)\prod\limits_{j = 1}^M {\left( {N - j} \right)!} }}\times\notag\\
% & \sum\limits_{{\bs\sigma _1} \in {S_M}} {\sum\limits_{{\bs\sigma _2} \in {S_M}} {{\rm{sgn}}\left( {{\bs\sigma _1}} \right){\rm{sgn}}\left( {{\bs \sigma _2}} \right)} } \prod\limits_{i = 1}^M {\Gamma \left( {{\sigma _{2,i}} + \tau  + 1} \right){{r_{{\sigma _{1,i}}}}^{{\sigma _{2,i}} + \tau }}} \times \notag\\
% &  \times \underbrace {Y_{1,{N_r}+ 1}^{N_r,1}\left[ {\left. {\begin{array}{*{20}{c}}
%{\left( {1,1,0,1} \right)}\\
%{\left( {1,1,{{\left(\rho {r_{{\sigma _{i}}}}\right)}^{-1}},i + \tau  + 1} \right)_{i=1,\cdots,N_r},\left( {0,1,0,1} \right)}
%\end{array}} \right|{\frac{x}{\det \left( {{{\bf{R}}_r}} \right){{{{\rho }} }^{N_r}}} }} \right]}_{{\mathcal Y _{\bs\sigma ,N_r,\tau+1 }^{(2)}}\left( x \right)},\,N_t\ge N_r,
 \end{align}
 \begin{align}\label{eqn:F_G_small_conf_con}
  &{F_G^{(2<)}}\left( x \right) = \frac{{{{\left( { - 1} \right)}^{{N_t}\left( {{N_r} - {N_t}} \right)}}}}{{\Delta \left( {{{\bf{R}}_r}} \right)}}\sum\limits_{{\bs{\sigma }} \in {S_{{N_r}}}} {{\mathop{\rm sgn}} \left( {\bs{\sigma }} \right){\prod\limits_{i = 1}^{{N_t}} {{r_{{\sigma _i}}}^{{N_r} + i - {N_t} - 1}} \prod\limits_{i = {N_t} + 1}^{{N_r}} {{r_{{\sigma _i}}}^{i - {N_t} - 1}} } } {{\mathcal Y _{\bs\sigma,N_t,-1}^{(2)}}\left( x  \right)},%\\\,N_t<N_r
%  &\times \frac{1}{{2\pi {\rm{i}}}}\int_{c - {\rm i}\infty }^{c + {\rm i}\infty } {\frac{{\Xi \left( {0, - 1,0,1} \right)\prod\nolimits_{i = 1}^{{N_t}} {\Xi \left( {1,1,{{\left( {\rho {r_{{\sigma _i}}}} \right)}^{ - 1}},i + \tau  + 1} \right)} }}{{\Xi \left( {1, - 1,0,1} \right)}}{{\left( {\frac{x}{{\left(\prod\nolimits_{i = 1}^{{N_t}} {{r_{{\sigma _i}}}} \right)}{\rho ^{{N_t}}}}} \right)}^{ - s}}ds}.\notag\\
%  & \times \underbrace {Y_{1,{N_t}+ 1}^{N_t,1}\left[ {\left. {\begin{array}{*{20}{c}}
%{\left( {1,1,0,1} \right)}\\
%{\left( {1,1,{{\left(\rho {r_{{\sigma _{i}}}}\right)}^{-1}},i} \right)_{i=1,\cdots,N_t},\left( {0,1,0,1} \right)}
%\end{array}} \right|{\frac{x}{{\left(\prod\nolimits_{i = 1}^{{N_t}} {{r_{{\sigma _i}}}} \right)}{\rho ^{{N_t}}}}}} \right]}_{{\mathcal Y _{\bs\sigma,N_t,0}^{(2)}}\left( x  \right)},\,N_t<N_r.
\end{align}
\end{subequations}
where ${{\mathcal Y _{\bs\sigma,N,\upsilon}^{(2)}}\left( x  \right)}$ is defined as
\begin{multline}\label{eqn:def_mathcalY_2}
{{\mathcal Y _{\bs\sigma,N,\upsilon}^{(2)}}\left( x  \right)} = \\{Y_{1,{N}+ 1}^{N,1}\left[ {\left. {\begin{array}{*{20}{c}}
{\left( {1,1,0,1} \right)}\\
{\left( {1,1,{{\left(\rho {r_{{\sigma _{i}}}}\right)}^{-1}},i+\upsilon+1 } \right)_{i=1,\cdots,N},\left( {0,1,0,1} \right)}
\end{array}} \right|{\frac{x}{{\left(\prod\nolimits_{i = 1}^{{N}} {{r_{{\sigma _i}}}} \right)}{\rho ^{{N}}}}}} \right]}
,\,N\le N_r.
\end{multline}
\end{theorem}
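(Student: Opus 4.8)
The plan is to mirror the derivation of Theorem~\ref{the:cdf_g_ind} carried out in Appendix~\ref{app:cdf_g_ind}. Having already obtained the Mellin transforms $\varphi_{\rm semi}^{\ge}(s)$ and $\varphi_{\rm semi}^{<}(s)$ in \eqref{eqn:mellin_t_corr_gpdf_sumre1} and \eqref{eq:semi_me_small_conf_fin}, I would substitute them, with the shift $s\mapsto s+1$, into the inverse Mellin representation \eqref{eqn:cdf_g_inverse} of the CDF, and then recognize the resulting Mellin--Barnes integral as an instance of the generalized Fox's H function \eqref{eqn:fox_H_mellin}. Because both Mellin transforms share the same analytic skeleton, namely a signed permutation sum whose summand is a product $\prod_i \Psi(\cdot,s+\cdot;(\rho r_{\sigma_i})^{-1})$ of Tricomi functions weighted by powers of $\rho$ and of the eigenvalues $r_{\sigma_i}$, the two cases can be treated uniformly through the single auxiliary object $\mathcal Y_{\bs\sigma,N,\upsilon}^{(2)}(x)$ of \eqref{eqn:def_mathcalY_2}, taking $(N,\upsilon)=(N_r,\tau)$ for $N_t\ge N_r$ and $(N,\upsilon)=(N_t,-1)$ for $N_t<N_r$.

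The crux is the identification of the contour integral with the Fox's H function, which hinges on the definition $\Xi(a,\alpha,A,\varphi)=A^{\varphi+a+\alpha s-1}\Psi(\varphi,\varphi+a+\alpha s;A)$. Reading off the parameters of $\mathcal Y_{\bs\sigma,N,\upsilon}^{(2)}$, the $N$ lower entries $(1,1,(\rho r_{\sigma_i})^{-1},i+\upsilon+1)$ each contribute a factor $\Xi(1,1,(\rho r_{\sigma_i})^{-1},i+\upsilon+1)=(\rho r_{\sigma_i})^{-(i+\upsilon+1+s)}\Psi(i+\upsilon+1,s+i+\upsilon+2;(\rho r_{\sigma_i})^{-1})$, whose Tricomi function is exactly the one appearing in $\varphi_{\rm semi}(s+1)$ after the shift. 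The remaining upper entry $(1,1,0,1)$ contributes $\Xi(0,-1,0,1)$ to the numerator and the lower entry $(0,1,0,1)$ contributes $\Xi(1,-1,0,1)$ to the denominator; since both have vanishing third argument they must be evaluated through the limit $\lim_{A\to0}A^{b-1}\Psi(a,b;A)=\Gamma(b-1)/\Gamma(a)$, valid for $\Re(b)>1$ on the contour $\Re(s)=c<0$. This gives $\Xi(0,-1,0,1)=\Gamma(-s)$ and $\Xi(1,-1,0,1)=\Gamma(1-s)$, whose ratio $\Gamma(-s)/\Gamma(1-s)=-1/s$ reproduces precisely the $-1/s$ kernel of \eqref{eqn:cdf_g_inverse}.

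It then remains a bookkeeping exercise to confirm that the powers of $\rho$ and of $r_{\sigma_i}$ collected from the $\Xi$ factors, namely $\rho^{-\sum_i(i+\upsilon+1)}\prod_i r_{\sigma_i}^{-(i+\upsilon+1)}$, together with those carried by the argument $x/\big(\rho^{N}\prod_i r_{\sigma_i}\big)$ of $\mathcal Y_{\bs\sigma,N,\upsilon}^{(2)}$, combine to leave a clean $x^{-s}$ in the integrand and to match the scalar prefactors in \eqref{eqn:mellin_t_corr_gpdf_sum_conf_recon} and \eqref{eqn:F_G_small_conf_con}. For $N_t\ge N_r$ one checks, using $\tau+1=N_t-N_r$, that the exponent $-\tfrac12 N_r(N_r+1)-(N_t-N_r)N_r$ carried by $\varphi_{\rm semi}^{\ge}$ is exactly cancelled by $\sum_{i=1}^{N_r}(i+\tau+1)=\tfrac12 N_r(N_r+1)+N_r(N_t-N_r)$, so that no residual power of $\rho$ survives and the surviving $\prod_i r_{\sigma_i}^{\,i+\tau+1}$ is what is displayed outside $\mathcal Y_{\bs\sigma,N_r,\tau}^{(2)}$; the analogous cancellation $-\tfrac12 N_t(N_t+1)+\sum_{i=1}^{N_t} i=0$ handles the case $N_t<N_r$, where the lower-index factors $\prod_{i=N_t+1}^{N_r}r_{\sigma_i}^{\,i-N_t-1}$ of \eqref{eq:semi_me_small_conf_fin} pass through untouched.

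I expect the main obstacle to be the careful treatment of the two degenerate $\Xi$ factors with zero argument: one must justify the limiting reduction of $A^{b-1}\Psi(a,b;A)$ to a ratio of Gamma functions on the chosen fundamental strip, and verify that the contour $\mathcal L$ of \eqref{eqn:fox_H_mellin} can be taken to coincide with the vertical line $\Re(s)=c<0$ used in \eqref{eqn:cdf_g_inverse}, so that the pole structure, in particular the simple pole of $\Gamma(-s)/\Gamma(1-s)$ at $s=0$, is separated consistently. The accompanying power-counting for $\rho$ and $r_{\sigma_i}$ is routine but must be carried out for both cases of $N_t-N_r$.
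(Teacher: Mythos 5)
Your proposal is correct and follows essentially the same route as the paper's own proof in Appendix~\ref{app:fg_semi}: substituting \eqref{eqn:mellin_t_corr_gpdf_sumre1} and \eqref{eq:semi_me_small_conf_fin} (shifted by $s\mapsto s+1$) into \eqref{eqn:cdf_g_inverse}, recognizing the resulting Mellin--Barnes integrals via the $\Xi$ identities as the Fox's H function $\mathcal Y_{\bs\sigma,N,\upsilon}^{(2)}$, and carrying out the same $\rho$ and $r_{\sigma_i}$ power cancellations (the paper handles the degenerate factors $\Xi(0,-1,0,1)=\Gamma(-s)$ and $\Xi(1,-1,0,1)=\Gamma(1-s)$ by citing \cite[eq.~(55)]{shi2017asymptotic} rather than re-deriving the small-argument limit of the Tricomi function, but this is the same identity). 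Your parameter identifications and the exponent bookkeeping for both the $N_t\ge N_r$ and $N_t<N_r$ cases check out against \eqref{eqn:mellin_t_corr_gpdf_sum_conf_recon} and \eqref{eqn:F_G_small_conf_con}.
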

\begin{proof}
Please refer to Appendix \ref{app:fg_semi}.
\end{proof}
%
%Using the inverse Mellin transform together with its associated property as (\ref{eqn:cdf_g_inverse}), and the similar approach in Appendix A, the CDF can be expressed in terms of  generalized Fox's H function as

According to \eqref{eqn:out_def}, the outage probabilities of the semi-correlated Rayleigh MIMO channels for $N_t\ge N_r$ and $N_t < N_r$ are respectively given by
\begin{equation}\label{out_SC}
 p_{out}^{\mathrm{semi}}=
  \left\{ {\begin{array}{*{20}{c}}
{F_G^{(2\ge)}(2^R),}&{N_t\ge N_r,}\\
{F_G^{(2<)}(2^R),}&{N_t < N_r.}
\end{array}} \right.
  %\frac{{{{\left( { - 1} \right)}^{{N_t}\left( {{N_r} - {N_t}} \right)}}}}{{\Delta \left( {{{\bf{R}}_r}} \right)}}\sum\limits_{{\bs{\sigma }} \in {S_{{N_r}}}} {{\mathop{\rm sgn}} \left( {\bs{\sigma }} \right){\prod\limits_{i = 1}^{{N_t}} {{r_{{\sigma _i}}}^{{N_r} + i - {N_t} - 1}} \prod\limits_{i = {N_t} + 1}^{{N_r}} {{r_{{\sigma _i}}}^{i - {N_t} - 1}} } }{\mathcal Y _{\bs\sigma }^{(2<)}}\left( 2^R \right).%\sum\limits_{{\bs\sigma _1} \in {S_{N_r}}} \sum\limits_{{\bs\sigma _2} \in {S_{N_r}}}{B_{{\bs\sigma _1},{\bs\sigma _2}}}\prod\limits_{i = 1}^{N_r} {\Gamma \left( {\nu_i} \right){{t_{{\sigma _{1,i}}}}^{\nu_i}}}{{\zeta _{{\bs\sigma _1},{\bs\sigma _2}}}\left( 2^R \right)}.
\end{equation}

\subsubsection{Asymptotic Outage Probability}
Next, in order to provide further insightful results for \eqref{eqn:mellin_t_corr_gpdf_sumsmall}, the asymptotic analysis is conducted for the outage probability of semi-correlated Rayleigh MIMO channels at high SNR. At first, the following lemma regarding the asymptotic expression of the ${\cal Y}_{{\bs{\sigma }},N}^{(2)}\left( x \right)$ as $\rho\to \infty$ is presented to facilitate the next asymptotic outage analysis.
%
%
%The above expression provide the outage probability with a complex generalized fox's function. Next, we consider the high SNR regime to provide more insights.

%For large $\rho$, ${{\mathcal Y _{\bs\sigma,N_t }^{(2)}}\left( x  \right)}$ has the asymptotic expansion as
%\begin{equation}\label{eqn:zeta_asy}
%\begin{aligned}
%{{\mathcal Y _{\bs\sigma,N_t }^{(2)}}\left( x  \right)} &= \frac{{\left(\frac{1}{\rho}\right)^{\frac{{{N_r}\left( {2{N_t}-{N_r} + 1} \right)}}{2}} }}{{2\pi {\rm{i}}\prod\limits_{i = 1}^{N_r}{r_{{\sigma _{1,i}}}}^{ \nu_i }}}\int\limits_{ - c - {\rm i}\infty }^{ - c + {\rm i}\infty } {\frac{1}{s}\prod\limits_{i = 1}^{N_r} {\frac{{\Gamma \left( {s -\nu_i} \right)}}{{\Gamma \left( s \right)}}\sum\limits_{{n_i} = 0}^\infty  {\frac{{{{\left( {
%\nu_i} \right)}_{{n_i}}}}}{{{{\left( { \nu_i- s
%+ 1} \right)}_{{n_i}}}}}\frac{{{{\left( {\frac{{N_t}}{{\rho {r_{{\sigma _{1,i}}}}}}} \right)}^{{n_i}}}}}{{{n_i}!}}} } {x^s}ds}+ \mathcal{O}\left( {{\rho ^{ - {N_t}{N_r}}}} \right),
%\end{aligned}
%\end{equation}
%where $c$ is set as $c < \min\left\{ - \sum\limits_{l = 1}^2 {{\sigma _{l,1}}}  - \tau -\frac{{N_r}+3}{2},\cdots, - \sum\limits_{l = 1}^2 {{\sigma _{l,{N_r}}}}  - \tau  -\frac{{N_r}+3}{2}\right\}$ and the notation $(\cdot)_n$ refers to Pochhammer symbol.
\label{lem:zera_asy}
\begin{lemma}\label{lem:zera_asy}
At high transmit SNR, i.e., $\rho\to \infty$, ${\cal Y}_{{\bs{\sigma }},N,\upsilon }^{(2)}\left( x \right)$ has the asymptotic expansion as
\begin{multline}\label{eqn:zeta_asy}
{\cal Y}_{{\bs{\sigma }},N,\upsilon }^{(2)}\left( x \right) = {\rho^{-\frac{1}{2}{{N\left( {N + 1} \right)}} - \left( {\upsilon+1} \right)N}}\prod\limits_{i = 1}^N{r_{{\sigma _{i}}}}^{ - i - \upsilon -1 }\frac{1}{{2\pi {\rm{i}}}}\int\nolimits_{ - c - {\rm i}\infty }^{ - c + {\rm i}\infty }\frac{{\Gamma \left( s \right)}}{{\Gamma \left( {1 + s} \right)}} \\
 \times{\prod\limits_{i = 1}^N {\frac{{\Gamma \left( {s - i - \upsilon  - 1} \right)}}{{\Gamma \left( s \right)}}\sum\limits_{{n_i} = 0}^\infty  {\frac{{{{\left( {i + \upsilon  + 1} \right)}_{{n_i}}}}}{{{{\left( { - s +i + \upsilon  + 2} \right)}_{{n_i}}}}}\frac{{{{\left( {{{\rho {r_{{\sigma _{i}}}}}}} \right)}^{{-n_i}}}}}{{{n_i}!}}} } {x^s}ds}
+ o\left( {{\rho ^{ - N_tN_r}}} \right),
\end{multline}
where $c$ is set as $c < \left( {N/2 + \upsilon  + 1} \right)\left( {N - 1} \right) - N_tN_r$.
\end{lemma}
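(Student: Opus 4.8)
The plan is to start from the Mellin--Barnes representation of $\mathcal Y_{\bs\sigma,N,\upsilon}^{(2)}(x)$ obtained by inserting its parameter list into the defining integral \eqref{eqn:fox_H_mellin}. With $m=N$, $n=1$, $p=1$ and $q=N+1$, the integrand is the quotient $\Xi(0,-1,0,1)/\Xi(1,-1,0,1)$ multiplied by $\prod_{i=1}^{N}\Xi\!\left(1,1,(\rho r_{\sigma_i})^{-1},i+\upsilon+1\right)$ and by the power $z^{-s}$ with $z=x/\big((\prod_{i=1}^N r_{\sigma_i})\rho^{N}\big)$. First I would dispose of the two factors whose scale argument vanishes: from the small-argument behaviour of Tricomi's function one obtains the reduction $\Xi(a,\alpha,0,1)=\Gamma(a+\alpha s)$, so that $\Xi(0,-1,0,1)=\Gamma(-s)$ and $\Xi(1,-1,0,1)=\Gamma(1-s)$, and their quotient collapses to $\Gamma(-s)/\Gamma(1-s)=-1/s$ via $\Gamma(1-s)=-s\,\Gamma(-s)$. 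The remaining $N$ factors are genuine Tricomi functions $\Psi(i+\upsilon+1,i+\upsilon+2+s;(\rho r_{\sigma_i})^{-1})$ whose arguments tend to zero as $\rho\to\infty$, and it is on these that the whole analysis hinges.

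Next I would change the integration variable $s\mapsto-s$ to align the contour and exponent with the target \eqref{eqn:zeta_asy}, turning $-1/s$ into $\Gamma(s)/\Gamma(1+s)$, $z^{-s}$ into $z^{s}$, and each factor into $A_i^{\,i+\upsilon+1-s}\Psi(i+\upsilon+1,i+\upsilon+2-s;A_i)$ with $A_i=(\rho r_{\sigma_i})^{-1}$. The core step is the small-$A_i$ expansion of each Tricomi factor through the standard connection formula expressing $\Psi$ in terms of two Kummer functions $M$ \cite{gradshteyn1965table}; after multiplication by $A_i^{\,i+\upsilon+1-s}$ this yields a \emph{dominant} branch $\frac{\Gamma(s-i-\upsilon-1)}{\Gamma(s)}A_i^{\,i+\upsilon+1-s}M(i+\upsilon+1,i+\upsilon+2-s;A_i)$ and a \emph{subdominant} branch carrying no negative power of $A_i$. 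Expanding the product over $i$ then generates $2^{N}$ terms, and I would retain the single all-dominant term.

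For that term the power bookkeeping closes exactly: the accumulated prefactor $\prod_{i=1}^{N}A_i^{\,i+\upsilon+1-s}$ factors as $\rho^{-\frac{1}{2}N(N+1)-(\upsilon+1)N}\prod_{i=1}^{N}r_{\sigma_i}^{-i-\upsilon-1}$ times $\rho^{Ns}(\prod_i r_{\sigma_i})^{s}$, and this $s$-dependent piece cancels the argument scaling contained in $z^{s}$, leaving precisely $x^{s}$. Writing each $M(i+\upsilon+1,i+\upsilon+2-s;A_i)$ as its defining series $\sum_{n_i\ge0}\frac{(i+\upsilon+1)_{n_i}}{(-s+i+\upsilon+2)_{n_i}}\frac{(\rho r_{\sigma_i})^{-n_i}}{n_i!}$ then reproduces verbatim the integrand of \eqref{eqn:zeta_asy}, together with the claimed prefactor $\rho^{-\frac{1}{2}N(N+1)-(\upsilon+1)N}\prod_i r_{\sigma_i}^{-i-\upsilon-1}$ and the ratio $\Gamma(s)/\Gamma(1+s)$.

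The main obstacle, and the closing paragraph of the argument, is to show that the remaining $2^{N}-1$ branch-combinations are collectively $o(\rho^{-N_tN_r})$. Each such combination replaces at least one dominant branch by the subdominant one, trading a factor $A_i^{\,i+\upsilon+1-s}$ for $A_i^{0}$ and thereby leaving an uncancelled $\rho^{-s}$-type factor under the integral; integrating along $\Re(s)=-c$ then produces strictly faster decay. This is exactly where the hypothesis $c<(N/2+\upsilon+1)(N-1)-N_tN_r$ enters: it fixes the contour so that the Mellin--Barnes integral of the retained term converges with its poles correctly separated, while simultaneously forcing every discarded term below the order $\rho^{-N_tN_r}$. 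Making this remainder bound uniform, and justifying the interchange of the Kummer series with the contour integration, is the delicate part; the preceding algebraic reductions are routine by comparison.
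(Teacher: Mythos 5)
Your proof is correct and takes essentially the same route as the paper's own: your change of variable $s\mapsto -s$ is exactly the paper's argument-inversion step (its ``Property 2''), your splitting of each Tricomi factor into a dominant and a subdominant branch is the paper's application of the connection formula (G\&R eq.~(9.210.2)), and your use of the constraint $c<(N/2+\upsilon+1)(N-1)-N_tN_r$ to force the $2^N-1$ mixed branch-combinations below $o(\rho^{-N_tN_r})$, followed by the Kummer series expansion of the retained term, mirrors the paper's argument step for step. The power bookkeeping and the final integrand you obtain agree verbatim with \eqref{eqn:zeta_asy}, so there is no gap.
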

\begin{proof}
Please refer to Appendix \ref{app:zeta_asy}.
\end{proof}

By using Lemma \ref{lem:zera_asy}, the asymptotic expression of the outage probability under semi-correlated Rayleigh MIMO channels is given by the following theorem.
\begin{theorem}\label{the:pout_asy_semi}
As the transmit SNR $\rho$ increases to $\infty$, the asymptotic outage probabilities of semi-correlated Rayleigh MIMO channels for $N_t\ge N_r$ and $N_t < N_r$ are generalized as
\begin{align}\label{eqn:out_single_side_corr_asy}
{p_{out}^{{\rm semi}}}&= \frac{{{{{{\rho }}}^{-{N_t}{N_r}}}}}{{\det \left( {{{{\mathbf{R}}_r}}} \right)^{N_t}\prod\nolimits_{i = 1}^{{\mathcal N}} {\left( {{N_t} - i} \right)!\left( {{N_r} - i} \right)!} }}\notag\\
&\quad \times\sum\limits_{{\bs{\sigma}} \in {S _{\mathcal N}}} {\rm{sgn}}\left( {\bs \sigma} \right) \prod\limits_{i = 1}^{{\mathcal N}} {\Gamma \left( {\tau + {i}  + \sigma_i} \right)} {\mathfrak g_{{\bs \sigma }}}\left(2^R \right) + o\left( {{\rho ^{ - {N_t}{N_r}}}} \right),
\end{align}
where ${\mathcal N=\min\{N_t,N_r\}}$ and $\tau =|{N_t}-{N_r}|-1$.
\end{theorem}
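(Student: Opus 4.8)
The plan is to feed the asymptotic kernel of Lemma~\ref{lem:zera_asy} into the two exact representations of Theorem~\ref{the:fg_semi}, evaluated at $x=2^R$ through \eqref{out_SC}, and then to isolate the slowest-decaying \emph{surviving} term. Concretely, for $N_t\ge N_r$ I would substitute \eqref{eqn:zeta_asy} (with $N=N_r$, $\upsilon=\tau$) into \eqref{eqn:mellin_t_corr_gpdf_sum_conf_recon}. The factor $\prod_{i=1}^{N_r} r_{\sigma_i}^{\,i+\tau+1}$ in the prefactor of $F_G^{(2\ge)}$ exactly cancels the factor $\prod_{i=1}^{N_r} r_{\sigma_i}^{-i-\tau-1}$ produced by Lemma~\ref{lem:zera_asy}, so that the residual $\mathbf R_r$-dependence enters only through the inner series $\sum_{n_i\ge 0}(\rho r_{\sigma_i})^{-n_i}(\cdots)$. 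The first point to observe is that the literal leading term ($n_i=0$ for all $i$) is independent of $\bs\sigma$, so the antisymmetric sum $\sum_{\bs\sigma}\operatorname{sgn}(\bs\sigma)(\cdots)$ annihilates it; the true leading contribution must therefore be extracted one order deeper.

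The key step is to write the permutation sum as a single determinant and expand it by the Cauchy--Binet identity. Pulling the $\bs\sigma$-sum inside the Mellin--Barnes contour, the bracket becomes $\det_{i,j}\big[\sum_{n\ge0} c_{i,n}(s)\,(\rho r_j)^{-n}\big]$ with $c_{i,n}(s)=\frac{(i+\upsilon+1)_n}{(-s+i+\upsilon+2)_n\,n!}$ and $c_{i,0}=1$. Cauchy--Binet selects, at lowest total degree in the small quantities $(\rho r_j)^{-1}$, the index set $\{n_i\}=\{0,1,\dots,\mathcal N-1\}$, which factorises the determinant into a coefficient determinant $\det_{i,k}[c_{i,k-1}(s)]$ times the ordinary Vandermonde $\Delta\big(\{(\rho r_j)^{-1}\}\big)=\rho^{-\mathcal N(\mathcal N-1)/2}\prod_{i<j}(r_i^{-1}-r_j^{-1})$. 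The trailing product equals $\Delta(\mathbf R_r^{-1})$ up to a sign, and hence cancels the $\Delta(\mathbf R_r^{-1})$ in the denominator of the prefactor, while the power $\rho^{-\mathcal N(\mathcal N-1)/2}$ combines with the prefactor power $\rho^{-\frac12 N(N+1)-(\upsilon+1)N}$ of Lemma~\ref{lem:zera_asy} to yield exactly $\rho^{-N_tN_r}$; checking this exponent arithmetic (using $\upsilon+1=N_t-N_r$ for $N_t\ge N_r$) is the first sanity test.

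What remains is to recognise the reconstructed summand. Setting $n_i=\sigma_i-1$ turns the coefficient determinant back into a sum over $\bs\sigma\in S_{\mathcal N}$; the Pochhammer factor $(i+\tau+1)_{\sigma_i-1}=\Gamma(\tau+i+\sigma_i)/\Gamma(i+\tau+1)$ supplies the numerator $\prod_i\Gamma(\tau+i+\sigma_i)$ after cancelling the prefactor's $\prod_i\Gamma(i+\tau+1)$, while the permutation-invariant product $\prod_i 1/(\sigma_i-1)!=1/\prod_{i=1}^{\mathcal N}(\mathcal N-i)!$ furnishes the missing $\prod_i(N_r-i)!$. The shift of the Pochhammer denominator extends the pole product in $s$ from $\prod_{t=1}^{i+\tau+1}(s-t)$ to $\prod_{t=1}^{\tau+i+\sigma_i}(s-t)$, so the residual contour integral is exactly the inverse-Laplace representation \eqref{eqn:g2R_laplace} of $\mathfrak g_{\bs\sigma}(2^R)$. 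Collecting the surviving $\det(\mathbf R_r^{N_t})=\det(\mathbf R_r)^{N_t}$ together with the combined sign $(-1)^{\frac12 N_r(N_r-1)}\times(\text{Vandermonde/Pochhammer signs})=\operatorname{sgn}(\bs\sigma)$ then reproduces \eqref{eqn:out_single_side_corr_asy} for $N_t\ge N_r$.

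For $N_t<N_r$ I would repeat the argument on \eqref{eqn:F_G_small_conf_con} with $N=N_t$, $\upsilon=-1$, using \eqref{eqn:zeta_asy} and the same Cauchy--Binet step. The added difficulty here is the block structure of the joint density \eqref{eqn:joint_pdf_eig_semi_small}: the permutation runs over $S_{N_r}$ but only the $N_t$ ``dynamic'' rows carry $\rho$-dependence, so one must show that the $N_r-N_t$ ``static'' rows $\{r_{\sigma_i}^{\,i-N_t-1}\}$ collapse the $S_{N_r}$ sum to an effective $S_{N_t}$ sum (again via a Laplace expansion on the static block), which is also where the extra power $\rho^{-N_t(N_r-N_t)}$ and the remaining $\prod_i(N_r-i)!$ originate. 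The same bookkeeping then yields the unified formula with $\mathcal N=\min\{N_t,N_r\}$. I expect the main obstacle to be exactly this combinatorial reconciliation --- verifying that the nested Vandermonde, factorial and sign factors from Cauchy--Binet cancel against the prefactors in both cases, and that the block reduction for $N_t<N_r$ produces the \emph{same} $\mathfrak g_{\bs\sigma}$ kernel as the nonsingular case rather than a structurally different one.
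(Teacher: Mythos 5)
Your proposal is correct and is essentially the paper's own proof (Appendix \ref{app:proof_asy_out_sin_corr}): both substitute Lemma \ref{lem:zera_asy} into Theorem \ref{the:fg_semi}, use antisymmetry to kill every index tuple with repeated entries so that the dominant contribution comes from permutations of $\{0,\dots,\mathcal N-1\}$ (shifted to $\{N_r-N_t,\dots,N_r-1\}$ in the singular case by the static rows), cancel $\Delta({\bf R}_r^{-1})$ against the emerging Vandermonde, and reassemble the Pochhammer-shifted Gamma products into $\mathfrak g_{\bs\sigma}(2^R)$. Your Cauchy--Binet organization of the index-selection step is only a compact repackaging of the paper's multi-index expansion combined with the determinant identities \eqref{eqn:vand_R_corr} and \eqref{eqn:out_semi_small_iden}, so the two arguments coincide in substance.
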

\begin{proof}
Please see the proof of \eqref{eqn:out_single_side_corr_asy} in Appendix \ref{app:proof_asy_out_sin_corr}.
\end{proof}
Moreover, the similar exact and asymptotic results can be derived for the outage probability under the transmit-correlated Kronecker model, whose results are omitted to save space. Additionally, as opposed to \eqref{eqn:F_g_fin}, it is found from \eqref{eqn:out_single_side_corr_asy} that the impact of the spatial correlation arises and is quantified by $\det \left( {{{{\mathbf{R}}_r}^{N_t}}} \right)$. We defer the further discussions regarding \eqref{eqn:out_single_side_corr_asy} to the next section.
\subsection{Full-Correlated Rayleigh MIMO Channels}\label{sec:full}
%direction power spectra
% the transmit and receive direction power spectra are independent,
%The model can occur when the angle spectra of the scatterers at the receive array for signals arriving from any transmit antenna are identical. This condition arises if all the transmit antennas are closely located and have the same radiation patterns. These remarks also carry over to the transmit antenna correlation Rt.
%the angle spectra of the scatterers at the receive array for signals arriving from any transmit antenna are identical

%In practice, full-correlated MIMO fading channels is frequently modeled by

%  due to the transmit and receive terminals scattering deficiency.  We use the Kronecker model to characterize the spatial correlation which can be decomposed as \cite{martin2004asymptotic}
%\begin{equation}
%\mathbf{H}=\mathbf{R}^{\frac{1}{2}}\mathbf{X}\mathbf{T}^{\frac{1}{2}}
%\end{equation}
%where $\mathbf{X}\in \mathbb{C}^{{N_r}\times {N_t}}$ is a random matrix with indpendent Gaussian entries of $\mathcal{CN}\left(0,1\right)$, $\mathbf{R}^{\frac{1}{2}}$  or $\mathbf{T}^{\frac{1}{2}}$ is a non-negative definite Hermitian square root of the non-negative definite receive/transmit correlation matrix $\mathbf{R}$ or $\mathbf{T}$. Besides, $\mathbf{R}$ and $\mathbf{T}$ are normalized and have unit diagonal elements ($\rm{tr}\left(\mathbf{R}\right)={N_r}$ and $\rm{tr}\left(\mathbf{T}\right)={N_t}$).
If both the transmit and receive correlations occur, full-correlated MIMO channels are modeled as \eqref{kron_mod}, i.e., $\mathbf{H}={\mathbf{R}_r}^{{1}/{2}}{\bf H}_w{\mathbf{R}_{t}}^{{1}/{2}}$. Unfortunately, $\mathbf{H}\mathbf{H}^\mathrm{H}$ is no longer a Wishart matrix under the circumstance when both the transmit and receive correlations arise. Moreover, with the property \cite[Exercise 7.25, p167]{abadir2005matrix}, the outage probabilities for $N_t \ge N_r$ and $N_t < N_r$ can be derived in the same fashion. Hence, we assume $N_t \ge N_r$ in this subsection unless otherwise specified. %This implies that the channel matrix $\bf H$ has $N_t$ i.i.d. columns, each with zero mean and covariance ${\bf R}_r$. Hence, $\mathbf{H}\mathbf{H}^\mathrm{H}$ is by definition a Wishart matrix, i.e., $\mathbf{H}\mathbf{H}^\mathrm{H} \sim \mathcal W_{N_r}(N_t,{\bf R}_{r})$\cite{couillet2011random}.

\subsubsection{Exact Outage Probability}
By favor of the character expansions, the joint distribution of the $N_r$ unordered strictly positive eigenvalues of $\mathbf{H}\mathbf{H}^\mathrm{H}$ under the full-correlated Rayleigh MIMO channels is obtained by Ghaderipoor \emph{et al.} in \cite{ghaderipoor2012application} as
\begin{equation}\label{eqn:eig_pdf_full}
{f_{\bs{\lambda }}}\left( {{\lambda _1}, \cdots ,{\lambda _{N_r}}} \right) = \sum\limits_{{{\bf{k}}_{N_r}}} {\frac{{{{\left( { - 1} \right)}^{\frac{{{N_r}\left( {{N_r} - 1} \right)}}{2}}}\mathcal A}}{{{N_r}!\Delta \left( {\bf{K}} \right)}}\Delta \left( {\bs{\lambda }} \right) \det \left( {{{\left\{ {{\lambda _i}^{{k_j} + {N_t} - {N_r}}} \right\}}_{ {i,j} }}} \right)},
\end{equation}
where $\mathcal A $ is independent of $\bs \lambda$ and is explicitly given by
\begin{align}\label{eqn:def_A_cal}
\mathcal A &= \frac{{{\prod\nolimits_{i = 1}^{N_r} {{a_i}^{N_t}} \prod\nolimits_{j = 1}^{N_t} {{b_j}^{N_r}} }}}{{{\Delta \left( {\bf{A}} \right)\Delta \left( {\bf{B}} \right)}}{\prod\nolimits_{j = 1}^{N_r} {\left( {{k_j} + {N_t} - {N_r}} \right)!} }}\notag\\
&\quad \times {\det {\left( \left\{{{\left( { - {a_i}} \right)}^{{k_j}}}\right\}_{i,j} \right)} \det \left( {{{\left\{ {{b_i}^{{k_j} + N - M}} \right\}}_{{i,1 \le j \le M}}},{{\left\{ {{b_i}^{N - j}} \right\}}_{ {i,M + 1 \le j \le N}}}} \right)},
\end{align}
${\bf a} =(a_1,\cdots,a_{N_r})$ and ${\bf b}  =(b_1,\cdots,b_{N_t})$ represent the eigenvalues of ${{\bf R}_r}^{-1}$ and ${{\bf R}_t}^{-1}$, respectively, ${\bf k}_{N_r}=(k_1,\cdots,k_{N_r})$ stands for all irreducible representation of the general linear group $\mathrm{GL}({N_r},\mathbb C)$ and $k_1\ge\cdots\ge k_{N_r}$ are integers, ${\bf A}$, $\bf B$ and $\bf K$ are the diagonalizations of vectors $\bf a$, $\bf b$ and ${\bf k}_{N_r}$, respectively.

It is proved in Appendix \ref{app:full_mellin} that the Mellin transform of $f_G(x)$ under full-correlated Rayleigh MIMO channels is expressed as
\begin{align}\label{eqn:mellin_fullc_tricomi}
\varphi_{\rm full}\left( s \right) &= \frac{{{{\left( { - 1} \right)}^{{N_r}\left( {{N_t} - {N_r}} \right)}}{{{{\rho }} }^{-\frac{1}{2}{{{N_r}\left( {{N_r} + 1} \right)}}}}\prod\nolimits_{i = 1}^{N_r} {{a_i}^{N_r}} \prod\nolimits_{j = 1}^{N_t} {{b_j}^{N_r}} }}{{\Delta \left( {\bf{A}} \right)\Delta \left( {\bf{B}} \right)\prod\nolimits_{i = 1}^{N_r} {{{\left( {s + i - 2} \right)}^{i - 1}}} }}\notag\\
&\quad \times \det \left( {\begin{array}{*{20}{c}}
{{{\left\{ {\Psi \left( {1,s + {N_r};\frac{{{a_i}{b_j}}}{\rho }} \right)} \right\}}_{1 \le i \le {N_r},j}}}\\
{{{\left\{ {{b_j}^{{N_t} - i}} \right\}}_{{N_r} + 1 \le i \le {N_t},j}}}
\end{array}} \right).
\end{align}

Accordingly, the CDF of $G$ can be obtained by using inverse Mellin transform, as shown in the following theorem.
\begin{theorem}\label{the:full_cdf}
The PDF of $G$ under full-correlated Kronecker channel model is given by
\begin{align}\label{eqn:F_G_CDF_fullcorr}
&{F_G}\left( x \right) = \frac{{{{\left( { - 1} \right)}^{{N_r}\left( {{N_t} - {N_r}} \right) + \frac{1}{2}{{{N_r}\left( {{N_r} - 1} \right)}}}}}{{{{\rho }} }^{ \frac{1}{2}{{{N_r}\left( {{N_r} - 1} \right)}}}}}{{\Delta \left( {\bf{A}} \right)\Delta \left( {\bf{B}} \right)}}\sum\limits_{{\bs{\sigma }} \in {S_{N_t}}} {{\mathop{\rm sgn}} \left( {\bs{\sigma }} \right)\prod\limits_{i = {N_r} + 1}^{N_t} {{b_{{\sigma _i}}}^{{N_t} + {N_r} - i}} } \notag\\
&\underbrace {Y_{{N_r},2{N_r}}^{{N_r},{N_r}}\left[ {\left. {\begin{array}{*{20}{c}}
{\left( {1,1,0,1} \right),{{\left( {{N_r} ,1,0,1} \right)}_{j = 1, \cdots ,{N_r}-1}}}\\
{{{\left( {{N_r},1,\frac{{{a_{{i}}}{b_{\sigma_i}}}}{\rho },1} \right)}_{i = 1, \cdots ,{N_r}}},\left( {0,1,0,1} \right),{{\left( {j,1,0,1} \right)}_{j = 1, \cdots ,{N_r}-1}}}
\end{array}} \right|x {\prod\limits_{i = 1}^{N_r} {\frac{{{a_i}{b_{{\sigma _i}}}}}{\rho }} }} \right]}_{\mathcal Y_{ {{\bs\sigma}}}^{(3)}(x)}.
\end{align}
\end{theorem}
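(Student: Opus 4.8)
The plan is to obtain $F_G(x)$ by feeding the Mellin transform $\varphi_{\rm full}(s)$ from \eqref{eqn:mellin_fullc_tricomi} into the inverse-Mellin representation \eqref{eqn:cdf_g_inverse} and then recognizing the resulting Mellin--Barnes integral as a weighted sum of generalized Fox's H functions. First I would set $s\mapsto s+1$ in \eqref{eqn:mellin_fullc_tricomi}, which turns the confluent-hypergeometric entries into $\Psi(1,s+N_r+1;a_ib_j/\rho)$ and the denominator product into $\prod_{i=1}^{N_r}(s+i-1)^{i-1}$, and insert the result under $\tfrac{1}{2\pi{\rm i}}\int\tfrac{x^{-s}}{-s}(\cdot)\,ds$. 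Because the $N_t\times N_t$ determinant in \eqref{eqn:mellin_fullc_tricomi} has a block structure---an upper $N_r$ block carrying the $\Psi$ entries and a lower $(N_t-N_r)$ block carrying the $s$-independent powers $b_j^{N_t-i}$---I would expand it by the Leibniz formula over $S_{N_t}$. For each permutation $\bs\sigma$ the lower block factors out as $\prod_{i=N_r+1}^{N_t}b_{\sigma_i}^{N_t-i}$, and the upper block leaves $\prod_{i=1}^{N_r}\Psi(1,s+N_r+1;a_ib_{\sigma_i}/\rho)$ inside the integral. Interchanging the finite permutation sum with the integral (justified on the fundamental strip) yields one Mellin--Barnes integral per $\bs\sigma$.

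The core of the proof is to match each per-$\bs\sigma$ integrand against the integrand of the generalized Fox's H function in \eqref{eqn:fox_H_mellin} with the parameters declared in $\mathcal Y^{(3)}_{\bs\sigma}(x)$. I would use the defining relation $\Xi(N_r,1,A,1)=A^{N_r+s}\Psi(1,N_r+s+1;A)$ with $A=a_ib_{\sigma_i}/\rho$ to identify the $m=N_r$ ``numerator'' bottom parameters $(N_r,1,a_ib_{\sigma_i}/\rho,1)$ with exactly the product of $\Psi$'s; the accompanying factor $\bigl(\prod_i a_ib_{\sigma_i}/\rho\bigr)^{s}$ then combines with the argument $x\prod_i a_ib_{\sigma_i}/\rho$ to reproduce the bare $x^{-s}$, while $\bigl(\prod_i a_ib_{\sigma_i}/\rho\bigr)^{N_r}$ is factored out as a constant. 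The remaining parameters all carry $A=0$, for which $\Xi(a,\alpha,0,\varphi)=\Gamma(\varphi+a+\alpha s-1)/\Gamma(\varphi)$ (the limiting value following from $\Psi(\varphi,b;A)\sim\Gamma(b-1)A^{1-b}/\Gamma(\varphi)$ as $A\to0$), so they collapse to ordinary Gamma functions: the top parameters $(1,1,0,1)$ and $(N_r,1,0,1)$ repeated $N_r-1$ times give $\Gamma(-s)$ and $\Gamma(1-N_r-s)^{N_r-1}$, and the lower $(0,1,0,1)$, $(j,1,0,1)_{j=1}^{N_r-1}$ give $\Gamma(1-s)$ and $\prod_{j}\Gamma(1-j-s)$.

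The step I expect to be the main obstacle is verifying that these residual Gamma ratios reproduce precisely the rational weight $\tfrac{1}{-s}\cdot\tfrac{1}{\prod_{i=1}^{N_r}(s+i-1)^{i-1}}$ carried by the inverse transform and by the shifted denominator of \eqref{eqn:mellin_fullc_tricomi}, together with the correct sign. The factor $\Gamma(-s)/\Gamma(1-s)=1/(-s)$ is immediate, but the multiplicities $i-1$ demand the Pochhammer identity $\Gamma(1-j-s)=(1-N_r-s)_{N_r-j}\,\Gamma(1-N_r-s)$, whence
\begin{equation*}
\frac{\Gamma(1-N_r-s)^{N_r-1}}{\prod_{j=1}^{N_r-1}\Gamma(1-j-s)}=\frac{1}{\prod_{j=1}^{N_r-1}(1-N_r-s)_{N_r-j}}=\frac{(-1)^{\frac{1}{2}N_r(N_r-1)}}{\prod_{i=1}^{N_r}(s+i-1)^{i-1}},
\end{equation*}
where reindexing shows each $(s+l)$ appears with multiplicity $l$ and the sign accumulates as $(-1)^{\sum_{m=1}^{N_r-1}m}$. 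Finally I would collect all constants: $\bigl(\prod_i a_ib_{\sigma_i}/\rho\bigr)^{-N_r}$ cancels $\prod_i a_i^{N_r}$, converts $\rho^{-\frac12 N_r(N_r+1)}$ into $\rho^{\frac12 N_r(N_r-1)}$, and merges $\prod_{j=1}^{N_t}b_j^{N_r}$ with the lower-block powers into $\prod_{i=N_r+1}^{N_t}b_{\sigma_i}^{N_t+N_r-i}$, while the harvested $(-1)^{\frac12 N_r(N_r-1)}$ augments the sign of \eqref{eqn:mellin_fullc_tricomi}; assembling these yields exactly \eqref{eqn:F_G_CDF_fullcorr}.
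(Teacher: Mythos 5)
Your proposal is correct and follows essentially the same route as the paper's own proof: substitute $\varphi_{\rm full}(s+1)$ into the inverse-Mellin formula \eqref{eqn:cdf_g_inverse}, expand the block determinant by the Leibniz formula over $S_{N_t}$, rewrite the $\Psi$ factors via $\Xi(N_r,1,a_ib_{\sigma_i}/\rho,1)$ so that $(\prod_i a_ib_{\sigma_i}/\rho)^{-s}$ merges into the argument, and identify the remaining contour integral with $\mathcal Y^{(3)}_{\bs\sigma}(x)$. In fact your Pochhammer computation converting $\prod_{i}(s+i-1)^{i-1}$ into the ratio $\Gamma(1-N_r-s)^{N_r-1}/\prod_{j}\Gamma(1-j-s)$ with the sign $(-1)^{\frac12 N_r(N_r-1)}$ makes explicit a bookkeeping step the paper passes over silently between \eqref{eqn:mellin_fullc_tricomi} and its intermediate display.
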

\begin{proof}
  Please see Appendix \ref{app:full_cdf}.
\end{proof}

According to \eqref{eqn:out_def}, the outage probability under fully correlated MIMO channels can be obtained by
\begin{equation}\label{eqn:out_excfull}
p_{out}^{\rm full} = \frac{{{{\left( { - 1} \right)}^{{N_r}\left( {{N_t} - {N_r}} \right) + \frac{1}{2}{{{N_r}\left( {{N_r} - 1} \right)}}}}}{{{{\rho }} }^{ \frac{1}{2}{{{N_r}\left( {{N_r} - 1} \right)}}}}}{{\Delta \left( {\bf{A}} \right)\Delta \left( {\bf{B}} \right)}}\sum\limits_{{\bs{\sigma }} \in {S_{N_t}}} {{\mathop{\rm sgn}} \left( {\bs{\sigma }} \right)\prod\limits_{i = {N_r} + 1}^{N_t} {{b_{{\sigma _i}}}^{{N_t} + {N_r} - i}} {\mathcal Y_{ {{\bs\sigma}}}^{(3)}(2^R)}}.
\end{equation}

\subsubsection{Asymptotic Outage Probability}
In order to derive the asymptotic expression at high SNR for the outage probability in \eqref{eqn:out_excfull}, the following lemma associated with the asymptotic expression of  ${\mathcal Y_{ {{\bs\sigma}}}^{(3)}(x)}$ is developed first.
\begin{lemma}\label{the:asy_y_3}
As $\rho \to \infty$, ${\mathcal Y_{ {{\bs\sigma}}}^{(3)}(x)}$ is asymptotic to
\begin{align}\label{eqn:mellin_cdf_g_fullcsers_exp}
{\mathcal Y_{ {{\bs\sigma}}}^{(3)}(x)} = &{{{{\rho }}}^{-{N_r}^2}}\prod\limits_{i = 1}^{N_r} {{a_i}^{N_r}}{{b_{\sigma_i}}^{N_r}}\notag\\
&\times\frac{1}{{2\pi {\rm{i}}}}\int\nolimits_{-c - {\rm i}\infty }^{-c + {\rm i}\infty } {\frac{{\Gamma \left( s \right)}}{{\Gamma \left( {1 + s} \right)}} \prod\limits_{i = 1}^{N_r} {\frac{{\Gamma \left( {s - {N_r}} \right)}}{{\Gamma \left( {s - i + 1} \right)}}}  {\sum\limits_{{n_{{i}}} = 0}^\infty  {\frac{{{{\left( {\frac{{{a_{{ i}}}{b_{\sigma_i}}}}{\rho }} \right)}^{{n_{{i}}}}}}}{{{{\left( { 1 + {N_r}  - s} \right)}_{{n_{{i}}}}}}}} {x^s}ds} }
\notag\\
& + o\left( {{\rho ^{ - {N_t}{N_r}-\frac{1}{2}N_r(N_r+1)}}} \right).
\end{align}
\end{lemma}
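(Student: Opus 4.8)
The plan is to prove Lemma~\ref{the:asy_y_3} along exactly the same route used for its semi-correlated counterpart, Lemma~\ref{lem:zera_asy} (Appendix~\ref{app:zeta_asy}): start from the Mellin--Barnes representation of $\mathcal Y_{\bs\sigma}^{(3)}(x)$, isolate the only $\rho$-dependent factors, expand them for vanishing argument, and retain the single leading branch. First I would write $\mathcal Y_{\bs\sigma}^{(3)}(x)$ explicitly as the contour integral \eqref{eqn:fox_H_mellin} attached to the Fox's H function in \eqref{eqn:F_G_CDF_fullcorr}. Every parameter quadruple whose third entry is $0$ has $A=0$, so the corresponding $\Xi$-factors collapse to ordinary Gamma functions through the small-argument reduction $\Xi(a,\alpha,0,1)=\Gamma(a+\alpha s)$; this turns the $A=0$ part of the integrand into a fixed ratio of Gamma functions. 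The only genuinely $\rho$-dependent factors are the $N_r$ Tricomi factors $\Xi(N_r,1,\tfrac{a_ib_{\sigma_i}}{\rho},1)=(\tfrac{a_ib_{\sigma_i}}{\rho})^{N_r+s}\Psi(1,1+N_r+s;\tfrac{a_ib_{\sigma_i}}{\rho})$, whose arguments tend to $0$ as $\rho\to\infty$.

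Next I would apply Kummer's connection formula to each Tricomi function, splitting $\Psi(1,1+N_r+s;z)$ into an entire branch $\tfrac{\Gamma(-N_r-s)}{\Gamma(1-N_r-s)}M(1,1+N_r+s;z)$ and an algebraic branch $\Gamma(N_r+s)z^{-N_r-s}e^{z}$, so that each $\Xi$ splits exactly into $\tfrac{\Gamma(-N_r-s)}{\Gamma(1-N_r-s)}z^{N_r+s}M(1,1+N_r+s;z)+\Gamma(N_r+s)e^{z}$ with $z=\tfrac{a_ib_{\sigma_i}}{\rho}$. Expanding the product over $i=1,\dots,N_r$ produces $2^{N_r}$ cross-terms, and the claim is that exactly the term that keeps the $z^{N_r+s}M$ branch in all $N_r$ positions survives at leading order. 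For that term the accumulated factor $\prod_i z_i^{N_r+s}$ combines with the Fox's H argument $(x\prod_i z_i)^{-s}$ to isolate the clean prefactor $\prod_i z_i^{N_r}=\rho^{-N_r^{2}}\prod_i a_i^{N_r}b_{\sigma_i}^{N_r}$ together with a $\rho$-free $x^{-s}$.

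The remaining algebra is then a verification rather than an obstacle: after cancelling the $A=0$ Gamma factors against the $N_r$ copies of $\tfrac{\Gamma(-N_r-s)}{\Gamma(1-N_r-s)}$ and performing the change of variable $s\to-s$ (which turns $x^{-s}$ into $x^{s}$ and reflects the contour to $-c$), the surviving Gamma quotient reorganizes into $\tfrac{\Gamma(s)}{\Gamma(1+s)}\prod_i\tfrac{\Gamma(s-N_r)}{\Gamma(s-i+1)}$, while each Kummer series becomes $M(1,1+N_r-s;\tfrac{a_ib_{\sigma_i}}{\rho})=\sum_{n_i}\tfrac{(a_ib_{\sigma_i}/\rho)^{n_i}}{(1+N_r-s)_{n_i}}$. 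This reproduces the integral on the right-hand side of \eqref{eqn:mellin_cdf_g_fullcsers_exp} together with its prefactor $\rho^{-N_r^{2}}\prod_i a_i^{N_r}b_{\sigma_i}^{N_r}$ verbatim.

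The main obstacle is the remainder estimate, i.e. showing that the other $2^{N_r}-1$ cross-terms are $o(\rho^{-N_tN_r-\tfrac12 N_r(N_r+1)})$. Each such term carries at least one algebraic-branch factor $\Gamma(N_r+s)e^{z_i}$, whose missing $z_i^{N_r+s}$ leaves an extra $z_i^{-s}=\rho^{s}(a_ib_{\sigma_i})^{-s}$, hence an additional $\rho^{\Re(s)}$ inside the integrand; a term with $\ell$ algebraic branches thus scales like $\rho^{-N_r(N_r-\ell)+\ell\,\Re(s)}$. I would therefore fix the contour abscissa $-c$ far enough to the left, in the spirit of the explicit constraint on $c$ imposed in Lemma~\ref{lem:zera_asy}, so that every such $\rho^{\ell\Re(s)}$ contribution is strictly dominated by $\rho^{-N_r^{2}}$ and falls below the stated order, while keeping the poles of the Gamma quotient on the correct side so the Mellin--Barnes integral still converges and represents $\mathcal Y_{\bs\sigma}^{(3)}(x)$. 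The delicate points are this simultaneous balancing of the contour (far enough left for the remainder bound, not so far as to cross poles), the uniform-in-$\Im(s)$ decay needed to bound the discarded integrals, and the justification for interchanging the uniformly convergent Kummer series with the contour integral; I expect this power-counting across the $2^{N_r}$ branches, carried out as in Appendix~\ref{app:zeta_asy}, to be the crux of the proof.
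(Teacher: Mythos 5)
Your proposal is correct and follows essentially the same route as the paper's proof in Appendix L: a Mellin--Barnes representation, Kummer's connection formula \cite[eq.(9.210.2)]{gradshteyn1965table} splitting each Tricomi factor into an entire $M$-branch and an algebraic branch, selection of the all-$M$ term as dominant via a sufficiently negative contour abscissa, and a final series expansion of the Kummer functions. The only difference is cosmetic — the paper applies the argument-inversion property (Property 2 of \cite{shi2017asymptotic}) \emph{before} the Kummer split, thereby extracting the prefactor $\rho^{-N_r^2}\prod_i a_i^{N_r}b_{\sigma_i}^{N_r}$ at the outset, whereas you split first and reflect $s\to -s$ at the end; your explicit $2^{N_r}$-term power counting is the same estimate the paper performs implicitly when it discards the higher-order branches.
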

\begin{proof}
  Please see Appendix \ref{app:asy_y_3}.
\end{proof}

By using Lemma \ref{the:asy_y_3}, the asymptotic expression of the outage probability under full-correlated Rayleigh MIMO channels is given by the following theorem.
\begin{theorem}\label{the:asy_out_full}
At high transmit SNR, the outage probability under full-correlated Rayleigh MIMO channels asymptotically equals
\begin{align}\label{eqn:F_G_mellin_trans_detzerofina}
p_{out}^{\rm full} = \frac{{{{{{\rho }} }^{-{N_r}{N_t}}}}}{{\det \left( {{{{\bf{R}}_r}}} \right)^{N_t}\det \left( {{{{\bf{R}}_t}}} \right)^{N_r}}}g_{\bf 0}(R)+ o\left( {{\rho ^{ - {N_t}{N_r}}}} \right),
\end{align}
where ${\bf 0}$ is a ${1 \times N_r}$ vector with all its elements equal to zero, $g_{\bf n}(R)$ is defined as
\begin{equation}\label{eqn:def_g_R_full}
g_{\bf n}(R) = G_{{N_r} + 1,{N_r} + 1}^{0,{N_r} + 1}\left( {\left. {\begin{array}{*{20}{c}}
{1,{N_t} + 1 + n_1, \cdots ,{N_t} + {N_r} + n_{N_r}}\\
{0,1, \cdots ,{N_r}}
\end{array}} \right|2^R} \right),
\end{equation}
and ${\bf n} = (n_1,\cdots,n_{N_r})$.
\end{theorem}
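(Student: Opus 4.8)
The plan is to insert the high-SNR expansion of ${\mathcal Y_{ {{\bs\sigma}}}^{(3)}}$ from Lemma \ref{the:asy_y_3} into the exact expression \eqref{eqn:out_excfull}, interchange the finite sum over $\bs\sigma\in S_{N_t}$, the series over the multi-index $\mathbf n=(n_1,\cdots,n_{N_r})$ and the Mellin--Barnes integral (justified by absolute convergence on the contour), and then isolate the dominant power of $\rho$. After the interchange the permutation sum acts only on the monomials $\prod_{i=1}^{N_r} b_{\sigma_i}^{N_r+n_i}\prod_{i=N_r+1}^{N_t} b_{\sigma_i}^{N_t+N_r-i}$, so for each fixed $\mathbf n$ it collapses to a generalized Vandermonde determinant $D(\mathbf n)=\det\!\bigl(\{b_i^{e_j}\}\bigr)$ whose column exponents are $e_i=N_r+n_i$ for $i\le N_r$ and $e_i=N_t+N_r-i\in\{N_r,\cdots,N_t-1\}$ for $i>N_r$.

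The first key observation is that $D(\mathbf n)$ vanishes whenever two exponents coincide, so the antisymmetry of the permutation sum annihilates the naive low-order terms. Since $n_i\ge 0$ forces $N_r+n_i\ge N_r$, the exponents $\{N_r+n_i\}$ can be made distinct from the fixed block $\{N_r,\cdots,N_t-1\}$ and from one another only by pushing them into $\{N_t,N_t+1,\cdots\}$. As the series weight of a term is $\prod_i\rho^{-n_i}$, the leading contribution comes from the cheapest admissible choice $\{N_r+n_i\}=\{N_t,\cdots,N_t+N_r-1\}$, i.e.\ $n_i=N_t-N_r+j_i$ with $(j_1,\cdots,j_{N_r})$ a permutation of $(0,\cdots,N_r-1)$. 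A direct count gives $\sum_i n_i=N_r(N_t-N_r)+\tfrac12 N_r(N_r-1)$; combining this with the $\rho^{\frac12 N_r(N_r-1)}$ of \eqref{eqn:out_excfull} and the $\rho^{-N_r^2}$ of Lemma \ref{the:asy_y_3} yields exactly $\rho^{-N_tN_r}$, while every other admissible $\mathbf n$ is strictly higher order and falls into the $o(\rho^{-N_tN_r})$ remainder (the Lemma's error term being absorbed likewise).

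For these surviving configurations I would rewrite the $s$-dependent weights via the reflection formula $\Gamma(z)\Gamma(1-z)=\pi/\sin(\pi z)$, which yields $\Gamma(s-N_r)/(1+N_r-s)_{n_i}=(-1)^{N_r+N_t+j_i}\,\Gamma(s-N_t-j_i)$ when $n_i=N_t-N_r+j_i$. Summing over the permutations $\mathbf j$, the monomials $\prod_i a_i^{N_t+j_i}$, the sign of $D(\mathbf n)$ and these reflection signs should recombine into a single determinant $\det\!\bigl(\{a_i^{\,j}\Gamma(s-N_t-j)\}\bigr)=\Delta(\mathbf A)\prod_{i=1}^{N_r} a_i^{N_t}\prod_{k=1}^{N_r}\Gamma(s-N_t-k+1)$; likewise $D(\mathbf n)$ reduces to $\pm\Delta(\mathbf B)\prod_{j=1}^{N_t} b_j^{N_r}$ because its exponents are the consecutive integers $\{N_r,\cdots,N_t+N_r-1\}$. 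The two Vandermonde factors then cancel the $1/\bigl(\Delta(\mathbf A)\Delta(\mathbf B)\bigr)$ prefactor, the residual monomials give $\prod_i a_i^{N_t}\prod_j b_j^{N_r}=\det(\mathbf R_r)^{-N_t}\det(\mathbf R_t)^{-N_r}$, and the leftover integrand $\tfrac{\Gamma(s)}{\Gamma(1+s)}\,\tfrac{\prod_{k=1}^{N_r}\Gamma(s-N_t-k+1)}{\prod_{i=1}^{N_r}\Gamma(s-i+1)}\,(2^R)^s$ is precisely the Mellin--Barnes form of $g_{\mathbf 0}(R)$ in \eqref{eqn:def_g_R_full}.

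The hard part will be the combinatorial and sign bookkeeping in the third step: I must verify that the sign produced when the fixed exponent block is interleaved with the permuted block $\{N_t,\cdots,N_t+N_r-1\}$ inside $D(\mathbf n)$ combines with the reflection signs $(-1)^{N_r+N_t+j_i}$ to yield exactly ${\rm sgn}(\mathbf j)$, so that the $\mathbf j$-sum genuinely assembles the Vandermonde determinant in $\mathbf A$ rather than an uncontrolled alternating sum. Establishing this, together with checking that the global prefactor signs $(-1)^{N_r(N_t-N_r)+\frac12 N_r(N_r-1)}$ cancel and that the powers of $\rho$ tally to $-N_tN_r$, is the crux; the remainder is routine Vandermonde algebra and dominated-term estimation.
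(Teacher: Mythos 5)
Your proposal follows essentially the same route as the paper's proof in Appendix K: substitute Lemma \ref{the:asy_y_3} into \eqref{eqn:out_excfull}, interchange the $\bs\sigma$-sum, the $\mathbf n$-series and the contour integral, convert the Pochhammer factors to Gammas with alternating signs (your identity $\Gamma(s-N_r)/(1+N_r-s)_{n_i}=(-1)^{N_t+N_r+j_i}\Gamma(s-N_t-j_i)$ is exactly the paper's step leading to \eqref{eqn:F_G_mellin_trans_meijerg}), observe that the $b$-determinant vanishes unless $\mathbf n$ is a permutation of $\{N_t-N_r,\cdots,N_t-1\}$, and recombine the survivors via Vandermonde identities into $\Delta(\mathbf A)\Delta(\mathbf B)$ times the Mellin--Barnes form of $g_{\mathbf 0}(R)$, with the $\rho$-exponents tallying to $-N_tN_r$ just as you compute. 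The only organizational difference is that the paper extracts the symmetric Meijer-G factor from the $\mathbf n$-sum and evaluates the pure monomial alternating sums via \eqref{eqn:det_full_sim} and \eqref{eqn:detea_identi}, whereas you keep the $\Gamma$ factors inside one determinant and factor them out column-wise; these are equivalent, and the sign bookkeeping you flag as the crux does work out as you predict.
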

\begin{proof}
  Please see Appendix \ref{app:asy_out_full}.
\end{proof}
It is worth mentioning that the asymptotic analysis of the outage probability for $N_t < N_r$ can be carried out in an analogous way owing to the property \cite[Exercise 7.25, p167]{abadir2005matrix}. Similar to \eqref{eqn:out_excfull} and \eqref{eqn:F_G_mellin_trans_detzerofina}, the exact and asymptotic outage expressions corresponding to $N_t < N_r$ can be obtained by directly interchanging ${\bf R}_t$ and $N_t$ with ${\bf R}_r$ and $N_r$, respectively.

%\section{Outage Analysis of HARQ-MIMO Channels}

\section{Discussions of Asymptotic Results}\label{sec:asym_res}
Although the exact outage probabilities corresponding to the three different spatial correlation models take different forms, their asymptotic expressions are virtually consistent with each other by comparing
\eqref{eqn:F_g_fin}, \eqref{eqn:out_single_side_corr_asy} and \eqref{eqn:F_G_mellin_trans_detzerofina}. The asymptotic outage probabilities commonly exhibit the same basic mathematical structure as \cite[eq.(3.158)]{tse2005fundamentals}, \cite{jankiraman2004space}
\begin{equation}\label{eqn:gen_out_asy}
p_{out}  =  \mathcal S ({\bf R}_t,{\bf R}_r) {\left( {{\mathcal C(R)} } \rho \right)^{ - d}} + o\left( {{\rho ^{ - d}}} \right),
\end{equation}
where $\mathcal S({\bf R}_t,{\bf R}_r)$ quantifies the impact of spatial correlation at transmit and receive sides, $\mathcal C(R)$ is the modulation and coding gain, and $d$ stands for the diversity order. Moreover, the asymptotic results in (\ref{eqn:F_G_mellin_trans_detzerofina}) for full-correlated Rayleigh fading channels in fact encompass the results for independent and semi-correlated ones as special cases (${\bf R}_t = {\bf I}_{N_t}$ and/or ${\bf R}_r = {\bf I}_{N_r}$), which will be rigorously proved later on. Without loss of generality, we assume $N_t \ge N_r$ in the sequel, and the similar results also apply to the case of $N_t < N_r$. Therefore, by identifying \eqref{eqn:F_G_mellin_trans_detzerofina} with \eqref{eqn:gen_out_asy}, $\mathcal S({\bf R}_t,{\bf R}_r)$, $\mathcal C(R)$ and $d$ under full-correlated Rayleigh MIMO channels are explicitly given by
\begin{equation}\label{eqn:spa_imp_exp}
\mathcal S({\bf R}_t,{\bf R}_r) = \frac{1}{{\det \left( {{{{\bf{R}}_r}}} \right)^{N_t}\det \left( {{{{\bf{R}}_t}}} \right)^{N_r}}},
\end{equation}
\begin{equation}\label{eqn:cm_imp_exp}
\mathcal C(R) = {\left({g_{\bf{0}}}(R)\right)}^{-\frac{1}{N_tN_r}},
\end{equation}
and $d=N_tN_r$, respectively. Besides, even though \eqref{eqn:F_g_fin}, \eqref{eqn:out_single_side_corr_asy} and \eqref{eqn:F_G_mellin_trans_detzerofina} follow the same asymptotic form, this unified feature does not carry over to the exact outage probabilities. This is due to the fact that the exact analyses are carried out on the basis of the joint distribution of the unordered eigenvalues precisely, while the joint PDFs under the complex correlation models (e.g., \eqref{eqn:joint_pdf_eig_wishart}, \eqref{eqn:joint_pdf_eig_semi_small} and \eqref{eqn:eig_pdf_full}) are inapplicable to those under the relatively simple cases (e.g., \eqref{eqn:joint_pdf_lambdav}) because of the premise of the distinct eigenvalues of the correlation matrix (${\bf R}_t$ and/or ${\bf R}_r$). This further demonstrates the necessity of splitting the study of the outage probability for MIMO channels into three different scenarios.

From the asymptotic structure of the outage probability in \eqref{eqn:gen_out_asy}, the number of antennas, the transmission rate, the fading correlation affect the outage performance of the MIMO systems through diversity order $d$, the modulation and coding gain $\mathcal C(R)$, the impact factor of the spatial correlation $\mathcal S({\bf R}_t,{\bf R}_r)$, respectively. Moreover, although the uniform power allocation at transmitter is assumed at the very beginning of Section \ref{sec:sys_mod}, the similar results can be readily extended to any power allocation strategies. To comprehensively understand the asymptotic behavior of the outage probability, these impact factors are discussed individually. %by using the property \cite[Exercise 7.25, p167]{abadir2005matrix}

%it should be highlighted that the exact analyses of the outage probabilities under three different correlation models can not be treated as special cases.

%Albeit

%\subsection{General Asymptotic Expression}

%After some basic algebraic manipulations, the high SNR outage probability expression given by
%
%where the diversity order is
%\begin{equation}\label{coding_gain}
%d = {N_r}{N_t}
%\end{equation}
%and the coding gain is
%\begin{equation}
%G_a ={\frac{{\det {{\left( {\bf{R}} \right)}^{\frac{1}{{N_r}}}}\det {{\left( {{{\bf{T}}^{N_r}}} \right)}^{\frac{1}{{N_t}}}}}}{{N_t}}\omega {{\left( {{2^R}} \right)}^{ - \frac{1}{{{N_r}{N_t}}}}}}
%\end{equation}

%\begin{remark}
%Substituting (\ref{eqn:g_meijer_relation1}) and (\ref{eqn:g_meijer_relation2}) into (\ref{eqn:F_G_melliN_trans_detzerofina1}), and comparing with (\ref{eqn:F_g_fin}) and (\ref{eqn:out_single_side_corr_asy}), we can find that  the asymptotic analytical result in (\ref{eqn:F_G_melliN_trans_detzerofina1}) encompasses independent and semi-correlated MIMO Rayleigh channels as special cases when setting $\bf R = I$ and/or $\bf T = I$.
%\end{remark}
\subsection{Diversity Order}
The terminology of the diversity order can be used to measure the degree of freedom of communication systems, which is defined as the ratio of the outage probability to the transmit SNR on a log-log scale as
\begin{equation}\label{eqn:d_def}
 d = \mathop {\lim }\limits_{\rho  \to \infty } \frac{{\log {p_{out}}}}{{\log \rho }}.
\end{equation}
Hence, the diversity order indicates the decaying speed of the outage probability with respect to the transmit SNR. As disclosed by \eqref{eqn:F_g_fin}, \eqref{eqn:out_single_side_corr_asy} and \eqref{eqn:F_G_mellin_trans_detzerofina},
full diversity can be achieved by MIMO systems regardless of the presence of the spatial correlation, i.e., $d=N_tN_r$. Although the spatial correlation does not impair the diversity order, it does severely deteriorate the outage performance through the component $\mathcal S({\bf R}_t,{\bf R}_r)$, which will be analyzed later.

%However, the spatial correlation affect the outage probability.

%no matter whether the correlations among antennas occur at the transmit and receive sides

%\eqref{eqn:out_single_side_corr_asy} shows that the diversity order of the independent Rayleigh MIMO channels is $N_tN_r$.
%From \eqref{eqn:F_g_fin}, it is clearly found that the diversity order of the independent Rayleigh MIMO channels is given by $N_tN_r$ if $N_t \ge N_r$. In addition, s
%full diversity can be achieved by even correlated MIMO channels.
%\begin{remark}
%Observing the expression of (\ref{diversity_order}) and (\ref{coding_gain}), we can find that MIMO system always achieve the full spatial diversity order of ${N_r}{N_t}$ regardless of the spatial correlation. Furthermore,
%one can also obtain that the impact of the spatial correlation is only reflected in the coding gain.
%\end{remark}

\subsection{Modulation and Coding Gain}\label{sec:mc_gain}
The modulation and coding gain $\mathcal C(R)$ quantifies the amount of the SNR reduction required to reach the same outage probability when employing a certain modulation and coding scheme (MCS). In other words, $\mathcal C(R)$ characterizes how much gain can be benefited from the adopted MCS. Accordingly, the increase of $\mathcal C(R)$ is in favor of the improvement of the outage performance. It is worthwhile to note that MCS determines the transmission rate $R$. From \eqref{eqn:cm_imp_exp}, in order to reflect the behaviour of $\mathcal C(R)$, it suffices to investigate the property of the function ${g_{\bf{0}}}(R)$. % will definitely improve the outage probability
Although both the asymptotic expressions of the outage probabilities in (\ref{eqn:F_g_fin}) and (\ref{eqn:out_single_side_corr_asy}) are expressed in terms of summations of ${\mathfrak g_{{\bs{\sigma}}}}(2^R)$ over all the permutations ${\bs{\lambda }}$ that appear to be different from (\ref{eqn:F_G_mellin_trans_detzerofina}), the following remark confirms their consistency.
%can quantify how much MIMO system benefit from the modulation scheme and coding gain.
\begin{remark}\label{the:rem_g}
The integral representation of ${g_{\bf{0}}}(R)$ and the relationship between ${g_{\bf{0}}}(R)$ and ${\mathfrak g_{{\bs{\sigma}}}}(2^R)$ are shown as
\begin{align}\label{eqn:g_meijer_relation1}
{g_{\bf{0}}}(R) &= \frac{{\int\nolimits_{\prod\nolimits_{j = 1}^{N_r} {\left( {1 + {\lambda _j}} \right)}  \le 2^R} {{{\left( {\Delta \left( {\bs{\lambda }} \right)} \right)}^2}\prod\nolimits_{j = 1}^{N_r} {{\lambda _j}^{{N_t} - {N_r}}} d{\lambda _1} \cdots d{\lambda _{N_r}}} }}{{{N_r}!\prod\nolimits_{j = 1}^{N_r} {\left( {{N_t} - j} \right)!\left( {{N_r} - j} \right)!} }} \\
&= \frac{\sum\nolimits_{{{\bs\sigma } \in {S_{N_r}}}}{{{\mathop{\rm sgn}} \left( {{\bs\sigma }} \right)}} \prod\nolimits_{i = 1}^{N_r} {{{\Gamma \left( {\tau+i+ {{\sigma _{i}}}  } \right)}}{\mathfrak g_{{\bs{\sigma}}}}\left( 2^R \right)}}{{\prod\nolimits_{i = 1}^{N_r} {\left( {{N_r} - i} \right)!\left( {{N_t} - i} \right)!} }}
\label{eqn:g_meijer_relation2}.
\end{align}
\end{remark}
 \begin{proof}
Please see Appendix \ref{app:rem_g}.
\end{proof}

It should be mentioned herein that ${\mathfrak g_{{\bs{\sigma}}}}(2^R)$ is not only a monotonically increasing but also convex function with respect to the transmission rate $R$ by using \cite[Lemma 4]{shi2017asymptotic}. Fortunately, the same property also carries over to the function ${g_{\bf{0}}}(R)$, which is given by the following theorem.
\begin{theorem}\label{the:conv_g0}
${g_{\bf{0}}}(R)$ is a monotonically increasing and convex function of the transmission rate $R$.
\end{theorem}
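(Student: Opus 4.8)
The plan is to argue directly from the volume-integral representation \eqref{eqn:g_meijer_relation1} rather than from the signed permutation sum \eqref{eqn:g_meijer_relation2}. The latter expresses $g_{\mathbf 0}(R)$ as a combination of the increasing convex functions ${\mathfrak g}_{\bs\sigma}(2^R)$ weighted by the signatures $\mathop{\rm sgn}(\bs\sigma)$, so neither property can be read off term by term; by contrast \eqref{eqn:g_meijer_relation1} displays $g_{\mathbf 0}(R)$ as a single integral of a \emph{nonnegative} density over an expanding region. Concretely, write
\begin{equation*}
g_{\mathbf 0}(R) = C\int_{\mathcal D(R)} w(\bs\lambda)\, d\bs\lambda, \qquad w(\bs\lambda) = \left(\Delta(\bs\lambda)\right)^2 \prod_{j=1}^{N_r}\lambda_j^{N_t-N_r},
\end{equation*}
where $C = \big(N_r!\prod_{j=1}^{N_r}(N_t-j)!(N_r-j)!\big)^{-1}>0$ and $\mathcal D(R)=\{\bs\lambda\ge\mathbf 0:\prod_{j=1}^{N_r}(1+\lambda_j)\le 2^R\}$. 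Since $N_t\ge N_r$, the density $w$ is nonnegative on the positive orthant. Monotonicity is then immediate: $R_1<R_2$ gives $\mathcal D(R_1)\subset\mathcal D(R_2)$, so integrating $w\ge0$ over the larger set yields $g_{\mathbf 0}(R_1)\le g_{\mathbf 0}(R_2)$.

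For convexity I would first linearize the constraint via $\lambda_j=e^{u_j}-1$ ($u_j\ge0$), which turns $\prod_j(1+\lambda_j)\le 2^R$ into the simplex constraint $\sum_j u_j\le T$ with $T:=R\ln2$, and gives $d\lambda_j=e^{u_j}du_j$. Then $g_{\mathbf 0}(R)=C\,\phi(T)$ with
\begin{equation*}
\phi(T)=\int_{\mathbf u\ge\mathbf 0,\ \sum_j u_j\le T}\widetilde w(\mathbf u)\,d\mathbf u,\quad \widetilde w(\mathbf u)=\Big(\prod_{i<j}(e^{u_i}-e^{u_j})^2\Big)\prod_{j}(e^{u_j}-1)^{N_t-N_r}e^{\sum_j u_j}\ge0.
\end{equation*}
Because $T=R\ln2$ is a positive rescaling of $R$, it suffices to prove $\phi$ convex on $(0,\infty)$. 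By the coarea formula for $f(\mathbf u)=\sum_j u_j$ (with $|\nabla f|=\sqrt{N_r}$) one has $\phi(T)=\int_0^T\eta(s)\,ds$, where $\eta(s)=\tfrac{1}{\sqrt{N_r}}\int_{\{\mathbf u\ge\mathbf 0,\ \sum_j u_j=s\}}\widetilde w\,d\mathcal H^{N_r-1}\ge0$; hence $\phi'=\eta$ and convexity of $\phi$ is equivalent to $\eta$ being nondecreasing.

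To show $\eta$ is nondecreasing I would rescale each simplex face onto the fixed unit simplex $\Delta=\{\mathbf v\ge\mathbf 0,\ \sum_j v_j=1\}$ by $\mathbf u=s\mathbf v$; then $d\mathcal H^{N_r-1}(\mathbf u)=s^{N_r-1}d\mathcal H^{N_r-1}(\mathbf v)$ and $\sum_j v_j=1$ yield
\begin{equation*}
\eta(s)=\frac{s^{N_r-1}e^{s}}{\sqrt{N_r}}\int_{\Delta}\Big(\prod_{i<j}(e^{sv_i}-e^{sv_j})^2\Big)\prod_{j}(e^{sv_j}-1)^{N_t-N_r}\,d\mathcal H^{N_r-1}(\mathbf v).
\end{equation*}
For fixed $\mathbf v\in\Delta$ every factor is nonnegative and nondecreasing in $s$: the prefactors $s^{N_r-1}$ and $e^{s}$ increase; $(e^{sv_j}-1)^{N_t-N_r}$ increases since $N_t-N_r\ge0$; and for $v_i>v_j$ the term $e^{sv_i}-e^{sv_j}$ is positive with derivative $v_ie^{sv_i}-v_je^{sv_j}>0$, so $(e^{sv_i}-e^{sv_j})^2$ increases. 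Thus the integrand increases pointwise in $s$, the integral is nondecreasing, and $\eta$ is nondecreasing. Consequently $\phi'=\eta$ is nondecreasing, $\phi$ is convex, and $g_{\mathbf 0}(R)=C\,\phi(R\ln2)$ is increasing and convex in $R$ (trivially so for $R\le0$, where $g_{\mathbf 0}\equiv0$, with no downward kink at $R=0$ since $\eta(0^+)=0$). The main obstacle is the rigorous passage from the moving-domain integral to its slice density --- i.e.\ justifying $\phi'=\eta$ and that the coarea/differentiation-under-the-integral step is valid --- together with confirming finiteness and regularity of $\eta$ near $s=0$; once this is in place, the monotone-factor argument closes the proof.
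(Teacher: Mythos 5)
Your proof is correct, and for the convexity part it takes a genuinely different route from the paper's. The monotonicity argument is the same in both: the paper also reads it off \eqref{eqn:g_meijer_relation1} as an integral of a nonnegative density over a region that expands with $R$. For convexity, however, the paper stays inside its Mellin--Barnes machinery: it differentiates the contour representation \eqref{eqn:def_g_R_fullred} twice under the integral, applies the Gamma recursion $s\Gamma(s-N_t)=\Gamma(s-N_t+1)+N_t\Gamma(s-N_t)$ to obtain the exact identity ${g_{\bf 0}}^{\prime\prime}(R)=\ln 2\,{g_{\tilde{\bf 0}}}^{\prime}(R)+\ln 2\,N_t\,{g_{\bf 0}}^{\prime}(R)$ with $\tilde{\bf 0}=(-1,0,\cdots,0)$, and then proves a separate lemma (Lemma \ref{the:inc_gn}, via a Beta-function rewriting of the contour integral back into a real integral over an expanding region) that ${g_{\bf n}}^{\prime}(R)\ge 0$ whenever $N_t+n_i>0$; the case $N_t=N_r=1$ is split off and handled by reduction to $\mathfrak g_{(1)}(2^R)$. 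Your route is purely real-variable and geometric: the substitution $\lambda_j=e^{u_j}-1$ linearizes the constraint into a simplex, the coarea formula identifies $g_{\bf 0}^{\prime}$ (up to positive constants) with the slice density $\eta(s)$, and rescaling each slice onto the unit simplex exhibits $\eta$ as an integral of factors that are each pointwise nondecreasing in $s$. What your approach buys: it is more elementary (no contour manipulations or Gamma identities), it treats all $N_t\ge N_r$ uniformly with no case split, and it makes the mechanism of convexity transparent --- the derivative is itself a growing slice integral. What the paper's approach buys: it reuses the representation central to the rest of the paper, and its derivative identity relating $g_{\bf 0}^{\prime\prime}$ to first derivatives of shifted $g_{\bf n}$'s is sharper information than the inequality alone, with the general-$\bf n$ monotonicity lemma as a by-product. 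Two micro-caveats in your write-up, both inessential: the coarea/differentiation step is indeed routine here (linear level-set function with constant gradient norm, continuous bounded integrand over compact slices), and your parenthetical claim $\eta(0^+)=0$ fails precisely when $N_t=N_r=1$ (there $\eta(0^+)=1$), but the derivative then jumps upward at $R=0$, so convexity is unaffected --- and the theorem concerns $R>0$ in any case.
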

\begin{proof}
Please see Appendix \ref{app:conv_g0}.
\end{proof}
Evidently from Theorem \ref{the:conv_g0}, the transmission rate is an increasing and convex function of the asymptotic outage probability. Without dispute, the monotonicity and convexity of ${g_{\bf{0}}}(R)$ can greatly facilitate the optimal rate selection of MIMO systems if the asymptotic results are used.

\subsection{Spatial Correlation}\label{sec:scorr}
%From \eqref{eqn:spa_imp_exp}, the spatial antenna correlation impacts the asymptotic outage probability via the factor $\mathcal S({\bf R}_t,{\bf R}_r)$.
It is found from \eqref{eqn:spa_imp_exp} that the impact factor of spatial antenna correlation, $\mathcal S({\bf R}_t,{\bf R}_r)$, depends on the determinants of the transmit and receive correlation matrices, i.e., ${\det \left( {{{\bf{R}}_r}} \right)}$ and $\det \left( {{{\bf{R}}_t}} \right)$. %., which are equal to the products of their eigenvalues,
%Since the correlation matrix included in (\ref{eqn:F_G_melliN_trans_detzerofina1}) has the similar expression (${\det \left( {{{\bf{R}}^{N_t}}} \right)}$ and $\det \left( {{{\bf{T}}^{N_r}}} \right)$)
%From Theorem 2, we can see that the asymptotic result of (\ref{eqn:F_G_melliN_trans_detzerofina1}) is a general expression for different antenna correlation scenario and relevant heavily to both the transmit and receive antenna correlation matrix.
Although the effect of the spatial correlation can be quantified by $\mathcal S({\bf R}_t,{\bf R}_r)$, it is also imperative to draw a qualitative conclusion about the outage behaviour of the spatial correlation. %Due to the similar forms of the impacts of the transmit and receive correlations, we first take the transmit correlation as an example to avoid redundancy.
To characterize the spatial correlation, the majorization theory is usually adopted as a powerful mathematical tool to establish a tractable framework \cite{shin2008mimo,feng2018impact}. The majorization-based correlation model is defined as follows.
%Besides, the correlation matrix included in (\ref{eqn:F_G_melliN_trans_detzerofina1}) has the similar expression (${\det \left( {{{\bf{R}}^{N_t}}} \right)}$ and $\det \left( {{{\bf{T}}^{N_r}}} \right)$). , only  will be analyzed by utilizing
% the Majorization theory and Schur-concavity of the sum-rate in this section \cite{marshall1979inequalities}. Following \cite{feng2018impact}, some related definitions are given before the investigation.
\begin{definition}\label{Definition1}
For two $N \times N$ semidefinite positive matrices $\mathbf{R}_{1}$ and $\mathbf{R}_{2}$, ${\bf r}_1=(r_{1,1},\cdots,r_{1,{N}})$ and ${\bf r}_2=(r_{2,1},\cdots,r_{2,{N}})$ are defined as the vectors of the eigenvalues of $\mathbf{R}_{1}$ and $\mathbf{R}_{2}$, respectively, where the eigenvalues are arranged in descending order as ${r_{i,1}} \ge \cdots  \ge {r_{i,{N}}},i \in \left\{ {1,2} \right\}$. We denote $\mathbf{R}_{1} \preceq \mathbf{R}_{2}$ and say the matrix $\mathbf{R}_{1} $ is majorized by the matrix $ \mathbf{R}_{2}$ if
\begin{equation}
\sum\limits_{j = 1}^k {{r_{1,j}}}  \le \sum\limits_{j = 1}^k {{r_{2,j}}},\,\left( {k = 1,2, \cdots ,{N} - 1} \right)  \quad{\rm and}\quad \sum\limits_{j = 1}^{N_t} {{r_{1,j}}}  = \sum\limits_{j = 1}^{N_t} {{r_{2,j}}} .
\end{equation}
%we can say vector $\textbf{t}_2$ is majorized by vector $\textbf{t}_1$ ($\textbf{t}_2\succeq\textbf{t}_1$), and
We also say the matrix $\mathbf{R}_{1} $ is more correlated than the matrix $ \mathbf{R}_{2}$.
\end{definition}
It is easily found by definition that ${\rm diag}(1,1,\cdots,1) \preceq \mathbf{R}_i \preceq {\rm diag}({N},0,\cdots,0)$ if ${\rm tr}(\mathbf{R}_i) = N$, where ${\rm diag}({N_t},0,\cdots,0)$ and ${\rm diag}(1,1,\cdots,1)$ correspond to completely correlated and independent cases, respectively. Notice that $\det(\mathbf{R}_{i})=\prod\nolimits_{j=1}^{N}r_{i,j}$, the property of the majorization in \cite[F.1.a]{marshall1979inequalities} proves that the determinant of the correlation matrix is a Schur-concave function, where $\det(\mathbf{R}_{1}) \ge \det(\mathbf{R}_{2})$ if $\mathbf{R}_{1} \preceq \mathbf{R}_{2}$, the interested reader is referred to \cite{marshall1979inequalities} for further details regarding the schur monotonicity. By recalling $\mathcal S({\bf R}_t,{\bf R}_r)$ is the composition of the determinants and using the fact associated with the composition involving Schur-concave functions \cite{marshall1979inequalities}, we arrive at
\begin{equation}\label{eqn:}
\mathcal S({\bf R}_{t_1},{\bf R}_{r_1}) \le \mathcal S({\bf R}_{t_2},{\bf R}_{r_2}),
\end{equation}
whenever ${\bf R}_{t_1} \preceq {\bf R}_{t_2}$  and $ {\bf R}_{r_1} \preceq {\bf R}_{r_2}$. As a consequence, it is concluded that the presence of the spatial correlation adversely impacts the outage performance.

\subsection{Power Allocation Strategy}\label{sec:power_all_imp}
Although the exact and asymptotic outage analyses are based on the assumption of the equal powers allocated to transmit antennas, the similar analytical results can be readily extended to any power allocation strategies. For the sake of extension, we denote by ${\bf R}_{\bf x} = {\mathbb E}(\bf x{\bf x}^{\rm H})$ the covariance matrix of input signals and ${\rm tr}({\bf R}_{\bf x})\le N_t$. Clearly, ${\bf R}_{\bf x}$ is a positive semidefinite matrix. In the circumstance, the mutual information capacity is given by
\begin{equation}\label{eqn:mul_caprew}
\mathcal{I}({\bf x};{\bf y}|{\bf H})=\mathrm{log}_{2}\mathrm{det}\left(\mathbf{I}_{{N_r}} + {\rho}\mathbf{H}{\bf R}_{\bf x}\mathbf{H}^\mathrm{H}\right),
\end{equation}
Actually, \eqref{eqn:mul_caprew} can be easily converted into the simple case of \eqref{eqn:mul_cap} equivalently. To this end, denote by ${{\bf R}_{\bf x}}^{1/2}$ the square root of ${{\bf R}_{\bf x}}$. By using the Kronecker model $\mathbf{H}={\mathbf{R}_r}^{{1}/{2}}{\bf H}_w{\mathbf{R}_{t}}^{{1}/{2}}$, \eqref{eqn:mul_caprew} can be rewritten as
$\mathcal{I}({\bf x};{\bf y}|{\bf H})=\mathrm{log}_{2}\mathrm{det}\left(\mathbf{I}_{{N_r}} + {\rho}\tilde {\mathbf{H}}\tilde{\mathbf{H}}^\mathrm{H}\right)$, where $\tilde {\mathbf{H}} = {\mathbf{R}_r}^{{1}/{2}}{\bf H}_w{\tilde {\mathbf{R}}_{t}}^{{1}/{2}}$ and ${\tilde {\mathbf{R}}_{t}}={ {\mathbf{R}}_{t}}^{{1}/{2}}{\bf R}_{\bf x}{ {\mathbf{R}}_{t}}^{{1}/{2}}$. Thus the exact and asymptotic outage probabilities even for non-independent inputs, i.e., ${\bf R}_{\bf x} \neq {\bf I}_{N_t}$, can also be obtained by following the same methodology as developed in Section \ref{sec:ana}. In contrast to \eqref{eqn:gen_out_asy}, the power allocation strategy obviously will influence the asymptotic outage probability that can be unified as
\begin{equation}\label{eqn:gen_out_asyre}
p_{out}  = \mathcal P ({\bf R}_{\bf x}) \mathcal S ({\bf R}_t,{\bf R}_r) {\left( {{\mathcal C(R)} } \rho \right)^{ - d}} + o\left( {{\rho ^{ - d}}} \right),
\end{equation}
where $\mathcal P ({\bf R}_{\bf x})$ quantifies the impact of the power allocation strategy. According to the above introduced equivalent conversion of $\mathcal{I}({\bf x};{\bf y}|{\bf H})$, it is not hard to prove that $\mathcal P ({\bf R}_{\bf x})$ is given by
\begin{equation}\label{eqn:P_def_nonin}
\mathcal P ({\bf R}_{\bf x}) = \frac{1}{{\det \left( {{{{\bf{R}}_{\bf x}}}} \right)^{N_r}}}.
\end{equation}
By using the arithmetic-geometric mean inequality \cite[Exercise 12.11]{abadir2005matrix}, $\mathcal P ({\bf R}_{\bf x})$ is found to be lower bounded as
\begin{equation}\label{eqn:P_bound}
  \mathcal P ({\bf R}_{\bf x}) \ge {{\left(\frac{1}{N_t}{\rm tr} \left( {{{{\bf{R}}_{\bf x}}}} \right)\right)}^{-N_r}} \ge 1,
\end{equation}
where the equality holds if and only if all the eigenvalues of ${\bf{R}}_{\bf x}$ are the same, that is, the total transmit power are evenly assigned to the transmit antennas. In fact, the concept of majorization adopted in Section \ref{sec:scorr} also applies to study the effect of power allocation, because the forms of the quantified impacts of the spatial correlation and power allocation look alike by comparing \eqref{eqn:P_def_nonin} to \eqref{eqn:spa_imp_exp}. Thereby, we conclude that $\mathcal P ({\bf R}_{{\bf x}_1}) \le \mathcal P ({\bf R}_{{\bf x}_2})$ if ${\bf R}_{{\bf x}_1} \preceq {\bf R}_{{\bf x}_2}$. Furthermore, unlike the water-filling algorithm, the equal power allocation can provide the best performance of diminishing the outage probability under high SNR. It is not beyond our expectation, because the condition of high SNR endows each antenna with the same capability of attaining the maximum potential array gain, and the absence of perfect CSI at the transmitter implies that every link between the transmit and receive antennas has equal chance for the performance enhancement. This result is also consistent with the  power allocation assumption in \cite{shin2003capacity} from the perspective of the ergodic capacity. %., and  obliges all antennas to ensure substantially exploit spatial diversity.
%$\left( {{t_1} +  \cdots  + {t_N}} \right)\left( {\frac{1}{{{t_1}}} +  \cdots  + \frac{1}{{{t_N}}}} \right) \ge {N^2}$
\section{Numerical Results}\label{sec:num}
In this section, numerical results are presented for verifications and discussions. For notational convenience, we define $\mathbf t$ and $\mathbf r$ as the row vectors of the eigenvalues of the transmit and receive correlation matrices, i.e., $\mathbf R_t$ and $\mathbf R_r$. %Notice that no matter the exact expression or the asymptotic expression of the outage probability .

 %We plot the curves for various parameters for comparison purposes with SNR ($\rho$) or transmission rate ($R$) varying.
\subsection{Verifications}
Figs. \ref{fig:out_Nt3} and \ref{fig:out_Nt2} depict the outage probabilities versus the transmit SNR for three different correlation scenarios under different numbers of transmit and receive antennas, wherein $\mathbf t$ and $\mathbf r$ are set to all-one vectors unless otherwise specified. %, and the transmission rate $R$ is set to 2bps/Hz
 The labels `Sim.', `Exa.' and `Asy.' in the figures indicate the simulated, exact and asymptotic outage probabilities, respectively. As can be observed from the two figures, the exact and simulation results are in perfect agreement, which confirms the correctness of the exact analysis. Besides, it can be seen from Figs. \ref{fig:out_Nt3} and \ref{fig:out_Nt2} that the asymptotic results coincide well with the exact and simulation ones at high SNR, which validates the asymptotic results as well. Moreover, it is clear from the two figures that the spatial correlation does not affect the diversity order, which is identical to the slope of the outage probability curve. However, by comparing the outage probabilities under three different correlation scenarios, the negative impact of spatial correlation can be demonstrated. As uncovered by the asymptotic analysis in Section \ref{sec:asym_res}, the outage performance degradation caused by the antenna correlation is quantified by the spatial correlation impact factor, i.e., $\mathcal S({\bf R}_t,{\bf R}_r)$. Hence, the gap between any two outage probability curves shown in Figs. \ref{fig:out_Nt3} and \ref{fig:out_Nt2} is determined by the difference between the values of their associated correlation impact factors. Moreover, by comparing between Figs. \ref{fig:out_Nt3} and \ref{fig:out_Nt2}, it is found that the outage probability curves are exactly the same if the antenna and correlation settings at the transmitter and receiver are interchanged, such as the outage probabilities over independent and full-correlated channel models. %, which is explicitly given by \eqref{eqn:spa_imp_exp}

%we can observe from Fig.1 and 2 that the outage probability performance for independent antenna scenario better than the semi-correlated and full-correlated scenarios under the same system parameters, which means antenna correlated has a detrimental impact on outage performance.

%$\mathbf{R}_r=\mathbf{I}_2$, $\mathbf{R}_t=\mathbf{I}_3$ for independent Rayleigh channels, ${{\bf{\lambda }}_{{{\bf{R}}_r}}}=[1.9\,\,0.1]$, $\mathbf{R}_t=\mathbf{I}_3$ for  semi-correlated  channels, ${{\bf{\lambda }}_{{{\bf{R}}_r}}}=[1.9\,\,0.1]$, ${{\bf{\lambda }}_{{{\bf{R}}_t}}}=[2.7\,\,0.2\,\,0.1]$  for full-correlated Rayleigh channels.  In fig. 2, $\mathbf{R}_r=\mathbf{I}_3$, $\mathbf{R}_t=\mathbf{I}_2$ for independent Rayleigh channels, ${{\bf{\lambda }}_{{{\bf{R}}_r}}}=[2.7\,\,0.2\,\,0.1]$, $\mathbf{R}_t=\mathbf{I}_2$ for  semi-correlated  channels, ${{\bf{\lambda }}_{{{\bf{R}}_r}}}=[2.7\,\,0.2\,\,0.1]$, ${{\bf{\lambda }}_{{{\bf{R}}_t}}}=[1.9\,\,0.1]$  for full-correlated Rayleigh channels.

%The ``Analysis" curves are generated via eqs. (\ref{out_iid}), (\ref{out_SC}), and, (\ref{eqn:out_excfull}). As observed, the results from analytic expressions are consistent with the results from simulations, which verifies our analysis. The ``Asymptotic" curves are based on eqs. (\ref{eqn:F_g_fin}), (\ref{eqn:out_single_side_corr_asy}), and (\ref{eqn:F_G_mellin_trans_detzerofina}).

\begin{figure}[!t]
\centering
\includegraphics[width=3.5in]{}
\caption{Outage probability versus the transmit SNR $\rho$ with ${N_t}=3$, ${N_r}=2$ and $R=$2bps/Hz.}\label{fig:out_Nt3}
\end{figure}
\begin{figure}[!t]
\centering
\includegraphics[width=3.5in]{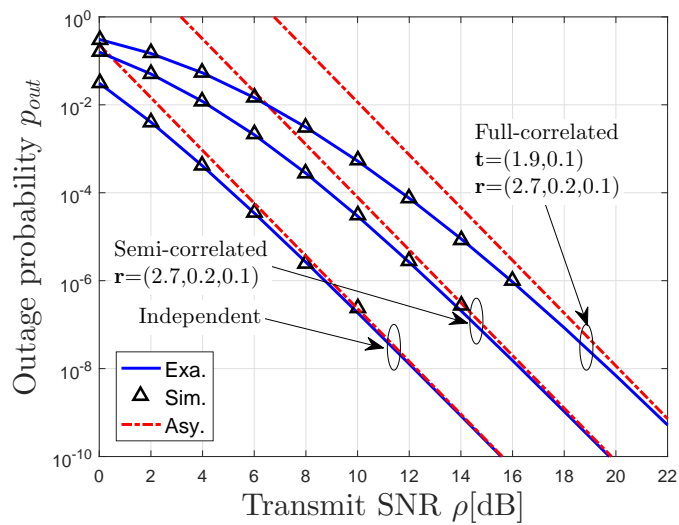}
\caption{Outage probability versus the transmit SNR $\rho$ with ${N_t}=2$, ${N_r}=3$ and $R=$2bps/Hz.}\label{fig:out_Nt2}
\end{figure}

\subsection{Coding and Modulation Gain}
Fig. \ref{fig:code_gain} illustrates the impacts of the transmission rate $R$ on the coding and modulation gain ${\mathcal C(R)}$ for different numbers of transmit and receive antennas. It is shown in Fig. \ref{fig:code_gain} that ${\mathcal C(R)}$ increases with the number of antennas, which further justifies the benefit of using MIMO. Additionally, it can be observed from Fig. \ref{fig:code_gain} that the increase of the transmission rate impairs the coding and modulation gain ${\mathcal C(R)}$, which consequently leads to the deterioration of the outage performance. This is consistent with the asymptotic analysis in Section \ref{sec:mc_gain}. Aside from degrading the outage performance, the increase of the transmission rate causes the enhancement of the system throughput. The two opposite effects force us to properly select the transmission rate in practice. Fortunately, the optimal rate selection can be eased by using the asymptotic outage probability thanks to its increasing monotonicity and convexity with respect to the transmission rate.

%In fact, the transmission rate has two-fold impacts on the system performance. On one hand, it is clear that increasing transmission rate can enhance the system throughput. On the other hand, the increase of the transmission rate degrades the outage probability. These two opposite effects force us to properly select the transmission rate in practice. Fortunately, the optimal rate selection can be eased by using the asymptotic outage probability thanks to its increasing monotonicity and convexity with respect to the transmission rate.%It is noticeable that, e.g., the maximization of the goodput given the outage constraint as .
\begin{figure}[!t]
\begin{center}
\includegraphics[width=3.5in]{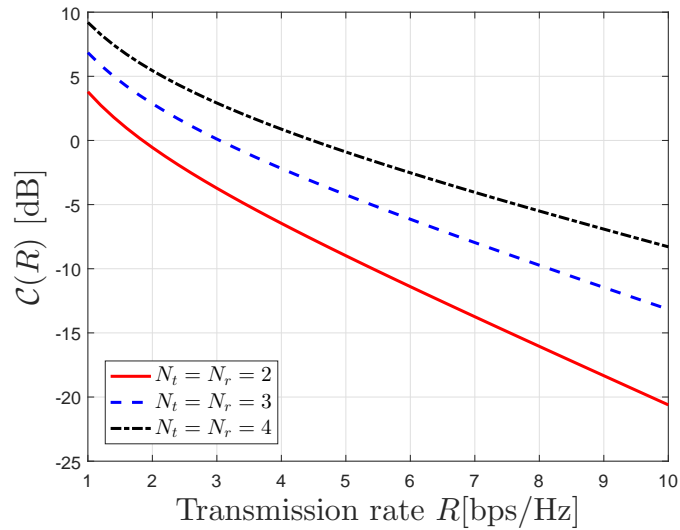}
\caption{Coding and modulation gain ${\mathcal C(R)}$ versus the transmission rate $R$.}\label{fig:code_gain}
\end{center}
\end{figure}
\subsection{Impact of Spatial Correlation}
%Since independent and semi-correlated Kronecker models can be treated as special cases of full-correlated Kronecker models.
Without loss of generality, the impact of transmit antenna correlation is investigated in Fig. \ref{fig:corr_imp}, where the outage probability is plotted against the transmit SNR under three different transmit correlation matrices, i.e., ${\bf{R}}_{t_1}$, ${\bf{R}}_{t_2}$ and ${\bf{R}}_{t_3}$. For notational simplicity, the vectors of the eigenvalues of ${\bf{R}}_{t_1}$, ${\bf{R}}_{t_2}$ and ${\bf{R}}_{t_3}$ are denoted by ${\mathbf t}_1$, ${\mathbf t}_2$ and ${\mathbf t}_3$, respectively, and they are set as ${\mathbf t}_1 = (1,1,1)$, ${\mathbf t}_2=(2.3,0.5,0.2)$ and ${\mathbf t}_3 = (2.7,0.2,0.1)$. According to the concept of majorization, the relationship of the transmit correlation matrices follows as ${{\bf{R}}_{t_3}}\succeq{{\bf{R}}_{t_2}}\succeq{{\bf{R}}_{t_1}}$, and $\mathbf{R}_{t_3}$ are the most correlated correlation matrix among them. It is readily observed in Fig. \ref{fig:corr_imp} that the spatial correlation negatively influences the outage performance, and the outage probability curve associated with $\mathbf{R}_{t_3}$ displays the worst performance. The numerical result corroborates the validity of the analytical results in Section \ref{sec:scorr}.

 % = (1,1,1),(2.3,0.5,0.2), (2.7,0.2,0.1)

%For simplicity, we only consider full-correlated scenario and plot the outage probability versus \emph{SNR} with there different transmission antenna correlated matrices ${\bf{R}}_{t_1}$, ${\bf{R}}_{t_2}$, and ${\bf{R}}_{t_3}$ to investigate the impact of antenna correlation,
%the eigenvalue vector for each transmission antenna correlated matrices is ${\boldsymbol{\lambda }}_{{\bf{R}}_{t_1}}=[1\,1\,1]$, ${\boldsymbol{\lambda }}_{{\bf{R}}_{t_2}}=[2.3\,0.5\,0.2]$, and ${\boldsymbol{\lambda }}_{{\bf{R}}_{t_3}}=[2.7\,0.2\,0.1]$, respectively.
%The parameters set in Fig.4 as following: $R = 3\,\,\mathrm{bps/Hz}$, ${N_t}=3$, ${N_r}=3$, and the received antenna correlated eigenvalue vector is ${\boldsymbol{\lambda }}_{{\bf{R}}_r}=[2.7\,0.2\,0.1]$. Following Definition \ref{Definition1}, It's easily observed that ${\boldsymbol{\lambda }}_{{\bf{R}}_{t_3}}\succeq{\boldsymbol{\lambda }}_{{\bf{R}}_{t_2}}\succeq{\boldsymbol{\lambda }}_{{\bf{R}}_{t_1}}$ and $\mathbf{R}_{t_3}$ are the most correlated among three transmitting correlation matrices. It is easy to see from Fig.4 that the outage performance degradation would become more significant when antenna spatial correlation is high. This results are consistent with our analysis in section IV-C. , degradation would become more significant when antenna spatial correlation is high
\begin{figure}[!t]
\begin{center}
\includegraphics[width=3.5in]{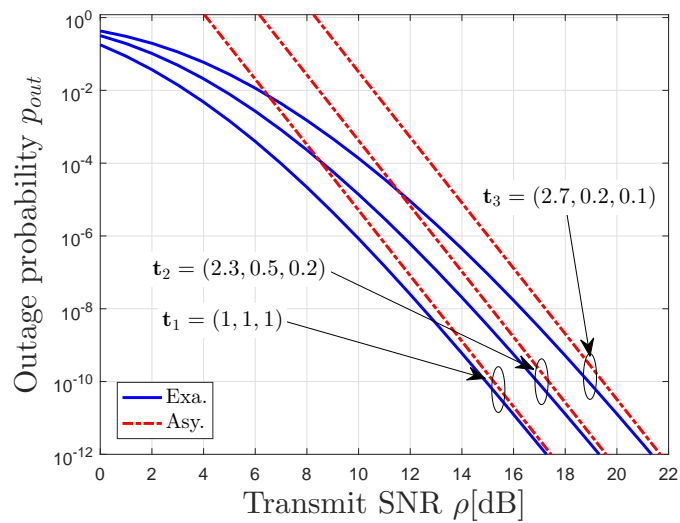}
\caption{Outage probability versus the transmit SNR $\rho$ with ${N_t}={N_r}=3$ and ${\mathbf r}=(2.7,0.2,0.1)$.}\label{fig:corr_imp}
\end{center}
\end{figure}
\subsection{Impact of Power Allocation}
In Fig. \ref{fig:power_str}, the effect of power allocation strategy is examined by considering three different diagonal input covariance matrices, i.e., ${{\bf R}_{{\bf x}_1}}={\rm diag}(1,1,1)$, ${{\bf R}_{{\bf x}_2}}={\rm diag}(2.6,0.2,0.2)$ and ${{\bf R}_{{\bf x}_3}}={\rm diag}(2.9,0.07,0.03)$. It is worth mentioning that the diagonal entries of input covariance matrix indicate the amount of power allocated to the transmit antennas. Hence, the case of ${{\bf R}_{{\bf x}_1}}$ corresponds to the equal power allocation. By using the definition of the majorization, we have ${{\bf R}_{{\bf x}_3}} \succeq {{\bf R}_{{\bf x}_2}} \succeq {{\bf R}_{{\bf x}_1}}$. As expected, the equal power allocation performs the best among the three power allocation strategies in terms of the outage probability. Thus, the analytical results in Section \ref{sec:power_all_imp} are verified.

%As can be concluded from \eqref{eqn:gen_out_asyre}, the impact of the power allocation on the asymptotic outage probability solely depends on the determinant of input covariance matrix, i.e., $\det{\bf R_x}$. For simplicity, we assume that ${\bf R_x}$ is a diagonal matrix in Fig. \ref{fig:power_str}.

%The ``Analysis" curves are generated via eq. (\ref{eqn:out_excfull}) by substituting ${\tilde {\mathbf{R}}_{t}}={ {\mathbf{R}}_{t}}^{{1}/{2}}{\bf R}_{\bf x}{ {\mathbf{R}}_{t}}^{{1}/{2}}$ for ${\bf R}_t$. The ``Asymptotic" curves are generated via eq. (\ref{eqn:gen_out_asyre}).
%It can observed that the equal power allocation ${\bf R_x}=\rm{diag}[1\,1\,1]$. can provide the best performance of diminishing the outage probability compare with   ${\bf R_x}=\rm{diag}[2.9\,0.07\,0.03]$ and  $\rm{diag}{\bf R_x}=[2.6\,0.2\,0.2]$. This result is consistent with our analysis in section IV-D.
\begin{figure}[!t]
\begin{center}
\includegraphics[width=3.5in]{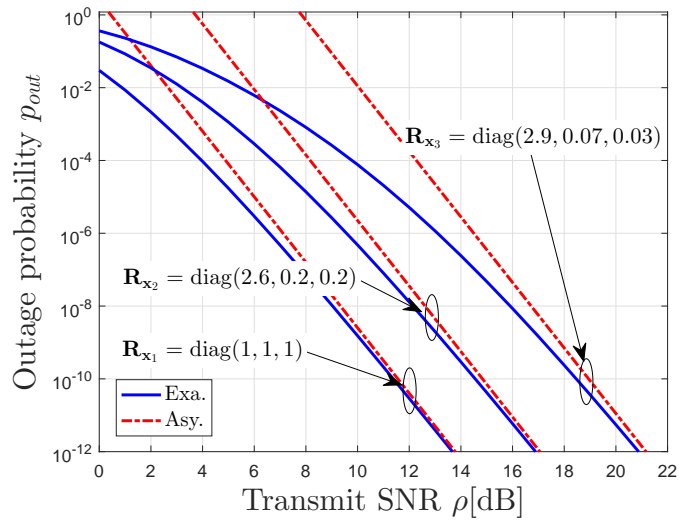}
\caption{Outage probability versus the transmit SNR $\rho$ with $R=3$bps/Hz, $N_t=N_r=3$, ${\mathbf t}=(1.3,1,0.7)$ and ${\mathbf r}=(1.5,1,0.5)$.}\label{fig:power_str}
\end{center}
\end{figure}
\section{Conclusions}\label{sec:con}
This paper has derived novel representations for the outage probabilities of the MIMO systems by invoking Mellin transform, where Kronecker model has been employed to accommodate three correlated fading channels, including independent, semi-correlated and full-correlated Rayleigh MIMO channels. The analytical results have been expressed in terms of the generalized Fox's H function. The compact and simple expressions not only have enabled the accurate evaluation of the outage probability, but also have facilitated the asymptotic analysis under high SNR to gain a profound understanding of fading effects and MIMO configurations, which has never been performed in the literature. On one hand, the unified asymptotic results have revealed meaningful insights into the effects of the spatial correlation, the number of antennas, transmission rate and powers. For instance, the spatial correlation degenerates the outage performance, while full diversity can be achieved no matter whether the spatial correlation occurs or not. On the other hand, the asymptotic results have paved the way for the simplification of practical system designs. For example, the increasing monotonicity and convexity of the asymptotic outage probability will facilitate the proper selection of target transmission rate. Moreover, unlike the water-filling fashion of power allocation, the minimization of the outage probability forces the transmitter to equally allocate the total power among its antennas especially for high SNR if only the statistical knowledge of the CSI is available at the transmitter.
%Mellin transform is applied in this paper to derive novel exact as well as tractable representations for the outage probabilities of MIMO systems in three different Kronecker correlation channel models, i.e., independent, semi-correlated and full-correlated Rayleigh MIMO channels. The solutions are expressed in terms of the generalized Fox's H function which has been implemented in popular mathematical softwares.
% Upon the exact expressions, the asymptotic analysis of the outage probability in the high signal-to-noise ratio (SNR) regime is enabled to
\appendices
\section{Proof of \eqref{eqn:mellin_g_m}}\label{app:mellin_g_m}
By plugging (\ref{eqn:joint_pdf_lambdav}) into (\ref{eqn:mellin_G}) and using the Leibniz formula for the determinant expansion \cite{horn2012matrix}, $\varphi_{\rm ind}(s)$ can be expressed as
\begin{align}
\varphi_{\rm ind}(s)&= \frac{1}{{N_r}!} \int\nolimits_0^\infty  { \cdots \int\nolimits_0^\infty  {\prod\limits_{i = 1}^{N_r} {{{\left( {1 +  {\rho}{\lambda _i}} \right)}^{s - 1}}{e^{ -\sum\limits_{i = 1}^{N_r} {{\lambda _i}} }}\prod\limits_{i = 1}^{N_r} {\frac{{{\lambda _i}^{{N_t} - {N_r}}}}{{\left( {{N_r} - i} \right)!\left( {{N_t} - i} \right)!}}}  } } } \notag\\
 &\quad \times\sum\limits_{{\bs\sigma _1} \in {S_{N_r}}} {{\mathop{\rm sgn}} \left( {{\bs\sigma _1}} \right)\prod\limits_{i = 1}^{N_r} {{\lambda _i}^{{\sigma _{1,i}} - 1}} } \sum\limits_{{\bs \sigma _2} \in {S_{N_r}}} {{\mathop{\rm sgn}} \left( {{\bs\sigma _2}} \right)\prod\limits_{i = 1}^{N_r} {{\lambda _i}^{{\sigma _{2,i}} - 1}} } d{\lambda _1} \cdots d{\lambda _{N_r}}.
\label{eqn:out_prob_mell_re}
\end{align}
%\footnote{$\rm sgn(\bs\sigma_l)$ denotes the signature of $\bs\sigma_l$, a value that is $+1$ whenever the reordering given by $\bs\sigma_l$ can be achieved by successively interchanging two entries an even number of times, and $-1$ whenever it can be achieved by an odd number of such interchanges.},
By switching the order of summation and integration, we get
\begin{align}\label{eqn:mellin_g_m1}
\varphi_{\rm ind}(s) &=\frac{1}{{N_r}!}  \sum\limits_{{{\bs\sigma _1},{\bs\sigma _2} \in {S_{N_r}}}} {{\mathop{\rm sgn}} \left( {{\bs\sigma _1}} \right){\mathop{\rm sgn}} \left( {{\bs\sigma _2}} \right)} \prod\limits_{i = 1}^{N_r} {\frac{{\int\nolimits_0^\infty  {{{\left( {1 +  {\rho}{\lambda _i}} \right)}^{s - 1}}{\lambda _i}^{  {N_t} - {N_r} - 2+{\sum\nolimits_{l = 1}^2 {{\sigma _{l,i}}} }}{e^{ - {\lambda _i}}}d{\lambda _i}} }}{{\left( {{N_r} - i} \right)!\left( {{N_t} - i} \right)!}}}. %\notag\\
%& = \frac{1}{{N_r}!} \sum\limits_{{{\bs\sigma _1},{\bs\sigma _2} \in {S_{N_r}}}}\frac{{{\mathop{\rm sgn}} \left( {{\bs\sigma _1}} \right){\mathop{\rm sgn}} \left( {{\bs\sigma _2}} \right)}} {{{\rho} }^{{N_t}{N_r}}}\prod\limits_{i = 1}^{N_r} {\frac{{\Gamma \left( {\tau+{\sum\nolimits_{l = 1}^2 {{\sigma _{l,i}}} } } \right)}}{{\left( {{N_r} - i} \right)!\left( {{N_t} - i} \right)!}}}  \notag\\
%&\quad \times \Psi \left( {{\tau+{\sum\limits_{l = 1}^2 {{\sigma _{l,i}}} }  },s + \tau+ {{\sum\limits_{l = 1}^2 {{\sigma _{l,i}}} }};{\rho^{-1}}} \right),
\end{align}
By comparing between the integration in \eqref{eqn:mellin_g_m1} and the Tricomi's confluent hypergeometric function \cite[eq. (9.211.4)]{gradshteyn1965table}, \eqref{eqn:mellin_g_m1} can be finally represented by \eqref{eqn:mellin_g_m}.

\section{Proof of Lemma \ref{the:leb_for}}\label{app:proof_leb_for}
Similar to the proofs of Leibniz formulae \cite[eq.(2)]{simon2006capacity} and \cite[eqs.(64-65)]{ghaderipoor2012application}, denote by $S_{N_r} \times S_{N_r}$ the Cartesian product of $S_{N_r}$. Hence, $(\bs \sigma_1, \bs \sigma_2)\in S_{N_r} \times S_{N_r}$. We further establish the one-to-one mapping $\vartheta(\bs \sigma_1, \bs \sigma_2)$ as a vector of ordered pairs $(\sigma _{1,l},\sigma _{2,l})$ for $l\in [1,N_r]$, i.e., $\vartheta(\bs \sigma_1, \bs \sigma_2) \triangleq \left((\sigma _{1,l},\sigma _{2,l}):l\in[1,N_r]\right)$. We thus reach the relation ${\mathop{\rm sgn}} \left( {{\bs\sigma _1}} \right){\mathop{\rm sgn}} \left( {{\bs\sigma _2}} \right) = {\mathop{\rm sgn}} \left( {{\bs\sigma }} \right){\mathop{\rm sgn}} \left( {\bar{\bs\sigma}} \right)$ after a certain number of transpositions to achieve $\vartheta(\bs \sigma, \bar{\bs\sigma})$ starting from $\vartheta(\bs \sigma_1, \bs \sigma_2)$, where ${\mathop{\rm sgn}} \left( {\bar{\bs\sigma}} \right)=1$. Accordingly, the left hand side of \eqref{eqn:leb_for_gen} can be easily obtained as
\begin{align}\label{eqn:sing_g_twosum_com}
\sum\limits_{{{\bs\sigma _1},{\bs\sigma _2} \in {S_{N_r}}}} {{\mathop{\rm sgn}} \left( {{\bs\sigma _1}} \right){\mathop{\rm sgn}} \left( {{\bs\sigma _2}} \right)}\eta({\bs\sigma _1},{\bs\sigma _2})
 &= \sum\limits_{{{\bs\sigma _1},{\bs\sigma _2} \in {S_{N_r}}}} { {{\rm{sgn}}\left( {{{\bs\sigma }}} \right){\mathop{\rm sgn}} \left( {\bar{\bs\sigma}} \right)\eta({\bs\sigma},\bar{\bs\sigma})} } \notag\\
 &= N_r!\sum\limits_{{\bs\sigma} \in {S_{N_r}}} {{\rm{sgn}}\left( {{{\bs\sigma }}} \right)\eta({\bs\sigma},\bar{\bs\sigma})},
\end{align}
where the last step holds according to the definition of $\eta ({\bs\sigma _1},{\bs\sigma _2})$, the one-to-one mapping and the cardinality of $S_{N_r}$, i.e., $|S_{N_r}|=N_r!$. The proof for the first equality is thus completed, and the second equality can be proved in the same manner.

\section{Proof of Theorem \ref{the:cdf_g_ind}}\label{app:cdf_g_ind}
Putting (\ref{eqn:mellin_g_m_rew}) into (\ref{eqn:cdf_g_inverse}) and using the identity ${\Gamma(-s)}/{\Gamma(1-s)}=-1/s$ leads to
\begin{align}
{F_G^{(1)}}\left( x \right) &= \sum\limits_{{{\bs\sigma } \in {S_{N_r}}}}\frac{{{\mathop{\rm sgn}} \left( {{\bs\sigma }} \right)}} {{{\rho} }^{{N_t}{N_r}}}\prod\limits_{i = 1}^{N_r} {\frac{{\Gamma \left( {\tau+i+ {{\sigma _{i}}}  } \right)}}{{\left( {{N_r} - i} \right)!\left( {{N_t} - i} \right)!}}} \notag\\
 &\quad \times \frac{1}{{2\pi {\rm{i}}}}\int\nolimits_{c - {\rm i}\infty }^{c + {\rm i}\infty } {\frac{{\Gamma \left( { - s} \right)}}{{\Gamma \left( {1 - s} \right)}}\prod\limits_{i = 1}^{N_r} {\Psi \left( {{\tau+i+ {{\sigma _{i}}}  },s + \tau+ i+ {{\sigma _{i}}}+1;{\rho^{-1}}} \right)} {x^{ - s}}ds}.
\label{eqn:cdf_g_mellin_in}
\end{align}

By using the definition of $\Xi \left( {a,\alpha ,A,\varphi } \right)$, ${F_G^{(1)}}\left( x \right)$ can be rewritten as
\begin{align}\label{eqn:cdf111}
{F_G^{(1)}}\left( x \right)
 &= \sum\limits_{{{\bs\sigma } \in {S_{N_r}}}}{{{\mathop{\rm sgn}} \left( {{\bs\sigma }} \right)}} \prod\limits_{i = 1}^{N_r} {\frac{{\Gamma \left( {\tau+i+ {{\sigma _{i}}}  } \right)}}{{\left( {{N_r} - i} \right)!\left( {{N_t} - i} \right)!}}} \notag\\
&\quad \times \frac{1}{{2\pi {\rm{i}}}}\int\nolimits_{c - {\rm i}\infty }^{c + {\rm i}\infty } {\frac{{\Xi \left( {0, - 1,0,1} \right)\prod\nolimits_{i = 1}^{N_r} {\Xi \left( {1,1,\rho^{-1},\tau+ i+ {{\sigma _{i}}}} \right)} }}{{\Xi \left( {1, - 1,0,1} \right)}}{{\left( \frac{x}{{\rho}^{N_r}} \right)}^{ - s}}ds},
\end{align}
where the last step holds by using \cite [eq. (55)]{shi2017asymptotic}, i.e., $\Xi \left( {a, - 1,0,1} \right) = \Gamma \left( {a - s} \right),\, a - s > 0$.
%\begin{equation}\label{eqn:xi_gamma_relation}
%\Xi \left( {a, - 1,0,1} \right) = \Gamma \left( {a - s} \right) = \mathop {\lim }\limits_{u \to 0} {u^{a - s}}\Psi \left( {1,1 + a - s;u} \right),\, a - s > 0.
%\end{equation}
%where $\Xi \left( {a,\alpha ,A,\varphi } \right) = {A^{\varphi  + a + \alpha s - 1}}\Psi \left( {\varphi ,\varphi  + a + \alpha s;A} \right)$.\\
By identifying the integration in \eqref{eqn:cdf111} with the generalized Fox's H function \cite{chelli2013performance,yilmaz2010outage}, \eqref{eqn:cdf111} can finally be obtained as (\ref{eqn:cdf_ML}).
\section{Proof of Lemma \ref{the:asy_out_ind}}\label{app:proof_phi}
By using Property 2 in \cite{shi2017asymptotic}, the generalized Fox's H function ${\mathcal Y_{ {{\bs\sigma}}}^{(1)}(x)}$ can be rewritten as
\begin{equation}\label{eqn:h_sigma_pro2}
{\mathcal Y_{ {{\bs\sigma}}}^{(1)}(x)} =
 Y_{{N_r} + 1,1}^{1,{N_r}}\left[ {\left. {\begin{array}{*{20}{c}}
{\left( {0,1,\rho^{-1},\tau+i+ {{\sigma _{i}}}  } \right)_{i=1,\cdots,{N_r}}, \left( {1,1,0,1} \right)}\\
{\left( {0,1,0,1} \right)}
\end{array}} \right|\frac{{\rho}^{N_r}}{x}} \right].
\end{equation}
\eqref{eqn:h_sigma_pro2} can thus be expressed in terms of the integral representation of Mellin-Branes type as
\begin{multline}\label{eqn:h_simg_pro_rew}
{\mathcal Y_{ {{\bs\sigma}}}^{(1)}(x)} =\rho^{-{N_t}{N_r}} \frac{1}{{2\pi {\rm{i}}}}\int\nolimits_{-c - {\rm i}\infty }^{-c + {\rm i}\infty } {\frac{{\Gamma \left( s \right)}}{{\Gamma \left( {1 + s} \right)}}\prod\limits_{i = 1}^{N_r} {\Psi \left( {\tau+ i+{{\sigma _{i}}},\tau - s+ 1 +i+\sigma _{i} ;\rho^{-1}} \right)} {x^s}ds}.
\end{multline}
By using the relation \cite[eq.(9.210.2)]{gradshteyn1965table}, (\ref{eqn:h_simg_pro_rew}) can be decomposed as
\begin{multline}\label{eqn:h_simg_pro_rewexp}
{\mathcal Y_{ {{\bs\sigma}}}^{(1)}(x)}
 =\rho^{-{N_t}{N_r}} \\
 \times\frac{1}{{2\pi {\rm{i}}}}\int\nolimits_{-c - {\rm i}\infty }^{-c + {\rm i}\infty } {\frac{1}{s}}\prod\limits_{i = 1}^{N_r} {\left( \begin{array}{l}
\frac{{\Gamma \left( {s - \tau - i-{{\sigma _{i}}}   } \right)}}{{\Gamma \left( s \right)}}{}_1{F_1}\left( {\tau+i+{{\sigma _{i}}}   ,\tau - s + 1 + i+{{\sigma _{i}}}  ;\rho^{-1}} \right)+\\
\frac{{\Gamma \left( {\tau  - s +i+{{\sigma _{i}}}} \right)}}{{\Gamma \left( { \tau +i+{{\sigma _{i}}}} \right)}}{{\rho }^{\tau -s +i+{{\sigma _{i}}} }}{}_1{F_1} \left( {s,1 + s - \tau - i-{{\sigma _{i}}}  ;\rho^{-1}} \right)
\end{array} \right)}{x^s}ds.
\end{multline}
where ${}_1{F_1} \left( {\alpha,\gamma;z} \right)$ is the confluent hypergeometric function. To proceed, the confluent hypergeometric function in \eqref{eqn:h_simg_pro_rewexp} can be expanded in terms of the representation of an infinite series as ${}_1{F_1}\left( {\alpha ,\beta ;x} \right) = \sum\nolimits_{n = 0}^\infty  {{{{\left( \alpha  \right)}_n}}}{{{x^n}}}/{{{{\left( \beta  \right)}_n}}}/{{n!}}$  \cite[eq.(9.210.1)]{gradshteyn1965table}, where $(\cdot)_n$ is Pochhammer symbol.
% which can be expanded as ${}_1{F_1} \left( {\alpha ,\gamma ;z} \right) = 1 + {z}{\alpha }/{\gamma } + o\left( z \right)$ .
%By utilizing the expansion ${}_1{F_1}\left( {\alpha ,\beta ;x} \right) = \sum\limits_{n = 0}^\infty  {\frac{{{{\left( \alpha  \right)}_n}}}{{{{\left( \beta  \right)}_n}}}} \frac{{{x^n}}}{{n!}}$\cite[eq.9.210.1]{gradshteyn1965table}, (\ref{eqn:zeta_asy}) then follows.
Notice that $c$ could be any real number within $(-\infty,0)$, we set $c < \min\left\{ - i-{{\sigma _{i}}}  - \tau:i\in [1,N_r]  \right\}$, higher order terms relative to ${\rho ^{ - {N_t}{N_r}}}$ can be ignored as $\rho \to \infty$. Hence, ${\mathcal Y_{ {{\bs\sigma}}}^{(1)}(x)}$ is asymptotic to
\begin{equation}\label{eqn:h_simg_pro_rewexpasy}
{\mathcal Y_{ {{\bs\sigma}}}^{(1)}(x)}  = {\rho^{-{N_t}{N_r}}}\frac{1}{{2\pi {\rm{i}}}}\int\nolimits_{-c - {\rm i}\infty }^{-c + {\rm i}\infty } {\frac{{\Gamma \left( s \right)}}{{\Gamma \left( {1 + s} \right)}}\prod\limits_{i = 1}^{N_r} {\frac{{\Gamma \left( {s - \tau -i-{{\sigma _{i}}}} \right)}}{{\Gamma \left( s \right)}}} {x^s}ds}  + o\left( {{\rho ^{ - {N_t}{N_r}}}} \right),
\end{equation}
which can be consequently rewritten as \eqref{eqn:h_fin_asy} by recognizing the contour integral in \eqref{eqn:h_simg_pro_rewexpasy} as Meijer G-function \cite[eq.(9.301)]{gradshteyn1965table}. Since $\sigma_1,\cdots,\sigma_{N_r}$ are integer, the integral in \eqref{eqn:h_simg_pro_rewexpasy} can also be simplified as \eqref{eqn:g2R_laplace}, which is readily calculated by using Residue theorem as \cite[eq.(5.2.21)]{bateman1954tables}.

\section{Proof of \eqref{eqn:mellin_t_corr_gpdf_sumre}}\label{app:mellin_semi}
Due to the different forms of joint eigenvalue distributions for $N_t \ge N_r$ and $N_t < N_r$, we are obliged to derive their outage probabilities separately. In analogous to \eqref{eqn:mellin_g_m}-\eqref{eqn:mellin_g_m_rew}, the similar approach can be adopted to derive the Mellin transform of the PDF of $G$ under semi-correlated Rayleigh MIMO channels, $\varphi_{\rm semi}(s)$.
\subsection{$\varphi_{\rm semi}(s)$ for $N_t \ge N_r$}
By substituting \eqref{eqn:joint_pdf_eig_wishart} into \eqref{eqn:mellin_G} and using the determinant expansion, $\varphi_{\rm semi}(s)$ can be expressed after some basic rearrangements as
\begin{align}\label{eqn:mellin_t_corr_R_out}
&\varphi_{\rm semi}^{\ge}(s) %= \int\limits_0^\infty  { \cdots \int\limits_0^\infty  {\prod\limits_{i = 1}^{N_r} {{{\left( {1 + \rho{\lambda _i}} \right)}^{s - 1}}{f_\lambda }\left( {{\lambda _1}, \cdots ,{\lambda _{N_r}}} \right)d{\lambda _1} \cdots d{\lambda _{N_r}}} } } \\
 =\frac{{\left( { - 1} \right)^{\frac{1}{2}{N_r}\left( {{N_r} - 1} \right)}}}{{{N_r}!}{\det \left( {{{{\bf{R}}_r}^{N_t}}} \right)\Delta \left( {{{{\bf{R}}_r}^{ - 1}}} \right)}}\prod\limits_{j = 1}^{N_r} {\frac{1}{{\left( {{N_t} - j} \right)!}}} \notag\\
&\quad \times \int\nolimits_0^\infty  { \cdots \int\nolimits_0^\infty  {\prod\limits_{i = 1}^{N_r} {{{\left( {1 + {\rho }{\lambda _i}} \right)}^{s - 1}}\det \left( {{{\left\{ {{e^{ - \frac{{{\lambda _i}}}{{{r_j}}}}}} \right\}}_{1 \le i,j \le {N_r}}}} \right)\Delta \left( {\bf{\Lambda }} \right)\prod\limits_{j = 1}^{N_r} {{\lambda _j}^{{N_t} - {N_r}}} d{\lambda _1} \cdots d{\lambda _{N_r}}} } }\notag\\
%& = \frac{{\left( { - 1} \right)^{\frac{1}{2}{N_r}\left( {{N_r} - 1} \right)}}}{{{N_r}!}{\det \left( {{{{\bf{R}}_r}^{N_t}}} \right)\Delta \left( {{{{\bf{R}}_r}^{ - 1}}} \right)}}\prod\limits_{j = 1}^{N_r} {\frac{1}{{\left( {N_t - j} \right)!}}} \int\limits_0^\infty  { \cdots \int\limits_0^\infty  {\prod\limits_{i = 1}^{N_r} {{{\left( {1 + {\rho }{\lambda _i}} \right)}^{s - 1}}} } } \notag\\
%&\times \sum\limits_{{\bs\sigma _1} \in {S_{N_r}}} {{\rm{sgn}}\left( {{\bs\sigma _1}} \right)\prod\limits_{i = 1}^{N_r} {{e^{ - \frac{{{\lambda _i}}}{{{r_{{\sigma _{1,i}}}}}}}}} } \sum\limits_{{\bs \sigma _2} \in {S_{N_r}}} {{\rm{sgn}}\left( {{\bs\sigma _2}} \right)\prod\limits_{i = 1}^{N_r} {{\lambda _i}^{{\sigma _{2,i}} - 1}} } \prod\limits_{j = 1}^{N_r} {{\lambda _j}^{N_t - {N_r}}} d{\lambda _1} \cdots d{\lambda _{N_r}}\notag\\
& = \frac{{\left( { - 1} \right)^{\frac{1}{2}{N_r}\left( {{N_r} - 1} \right)}}}{{{N_r}!}{\det \left( {{{{\bf{R}}_r}^{N_t}}} \right)\Delta \left( {{{{\bf{R}}_r}^{ - 1}}} \right)}}\prod\limits_{j = 1}^{N_r} {\frac{1}{{\left( {N_t - j} \right)!}}} \notag\\
&\quad\times \int\nolimits_0^\infty  { \cdots \int\nolimits_0^\infty  } \sum\limits_{{\bs\sigma _1},{\bs \sigma _2} \in {S_{N_r}}} {{\rm{sgn}}\left( {{\bs\sigma _1}} \right){\rm{sgn}}\left( {{\bs\sigma _2}} \right)\prod\limits_{i = 1}^{N_r} {{{{\left( {1 + {\rho }{\lambda _i}} \right)}^{s - 1}}}{{\lambda _i}^{{\sigma _{2,i}} + \tau}}{e^{ - \frac{{{\lambda _i}}}{{{r_{{\sigma _{1,i}}}}}}}}} } d{\lambda _1} \cdots d{\lambda _{N_r}}.
\end{align}
Interchanging the order of summations and integrations together with \cite[eq.(9.211.4)]{gradshteyn1965table} then gives% the above sums together with some algebraic manipulations yields
\begin{align}\label{eqn:mellin_t_corr_gpdf_sum}
\varphi_{\rm semi}^{\ge}(s) &= \frac{{{{\left( { - 1} \right)}^{\frac{1}{2}{N_r}\left( {{N_r} - 1} \right)}}}}{{{N_r}!\det \left( {{{{\bf{R}}_r}^{N_t}}} \right)\Delta \left( {{{{\bf{R}}_r}^{ - 1}}} \right)\prod\nolimits_{j = 1}^{N_r} {\left( {N_t - j} \right)!} }}\sum\limits_{{\bs\sigma _1},{\bs\sigma _2} \in {S_{N_r}}} { {{\rm{sgn}}\left( {{\bs\sigma _1}} \right){\rm{sgn}}\left( {{\bs \sigma _2}} \right)} } \notag\\
&\quad\times\prod\limits_{i = 1}^{N_r} {\int\nolimits_0^\infty  {{{\left( {1 + {\rho }{\lambda _i}} \right)}^{s - 1}}{\lambda _i}^{{\sigma _{2,i}} + \tau }{e^{ - \frac{{{\lambda _i}}}{{{r_{{\sigma _{1,i}}}}}}}}d{\lambda _i}} }\notag\\
&= \frac{{{{\left( { - 1} \right)}^{\frac{1}{2}{N_r}\left( {{N_r} - 1} \right)}}}{ \rho ^{-\frac{{{N_r}\left( {{N_r} + 1} \right)}}{2} - \left( {N_t - {N_r}} \right){N_r}}}}{{{N_r}!\det \left( {{{{\bf{R}}_r}^{N_t}}} \right)\Delta \left( {{{{\bf{R}}_r}^{ - 1}}} \right)\prod\nolimits_{j = 1}^{N_r} {\left( {N_t - j} \right)!} }}\sum\limits_{{\bs\sigma _1},{\bs\sigma _2} \in {S_{N_r}}} { {{\rm{sgn}}\left( {{\bs\sigma _1}} \right){\rm{sgn}}\left( {{\bs \sigma _2}} \right)} } \notag\\
&\quad\times\prod\limits_{i = 1}^{N_r} {\Gamma \left( {{\sigma _{2,i}} + \tau  + 1} \right)\Psi \left( {{\sigma _{2,i}} + \tau  + 1,s + {\sigma _{2,i}} + \tau  + 1;\frac{1}{{\rho {r_{{\sigma _{1,i}}}}}}} \right)}.
\end{align}

By virtue of Lemma \ref{the:leb_for}, \eqref{eqn:mellin_t_corr_gpdf_sum} can be finally simplified as \eqref{eqn:mellin_t_corr_gpdf_sumre1}.

%using the Leibniz formula for the determinant expansion \cite{horn2012matrix} and after some algebraic manipulations, the Mellin transform  to the PDF of $G$ can obtain as
%\begin{equation}
%\begin{aligned}
%\label{eqn:melliN_t_corr_R_out}
%{\mathcal{M}_{{f_G}}}\left( s \right)
% &=\sum\limits_{{\bs\sigma _1} \in {S_{N_r}}} {\sum\limits_{{\bs\sigma _2} \in {S_{N_r}}} }{B_{{\bs\sigma _1},{\bs\sigma _2}}}\prod\limits_{i = 1}^{N_r} {\int\limits_0^\infty  {{{\left( {1 + \frac{\rho }{{N_t}}{\lambda _i}} \right)}^{s - 1}}{\lambda _i}^{{\sigma _{2,i}} + \tau }{e^{ - \frac{{{\lambda _i}}}{{{t_{{\sigma _{1,i}}}}}}}}d{\lambda _i}} }.
%\end{aligned}
%\end{equation}
%where $B_{{\bs\sigma _1},{\bs\sigma _2}}=\frac{{{{\left( { - 1} \right)}^{\frac{1}{2}{N_r}\left( {{N_r} - 1} \right)}}}{{\rm{sgn}}\left( {{\bs\sigma _1}} \right){\rm{sgn}}\left( {{\bs \sigma _2}} \right)}}{{{N_r}!\det \left( {{{\bf{T}}^{N_t}}} \right)\Delta \left( {{{\bf{B}}}} \right)\prod\limits_{j = 1}^{N_r} {\left( {{N_t} - j} \right)!} }}$.\\
%Applying integral of  yields

\subsection{$\varphi_{\rm semi}(s)$ for $N_t < N_r$}
By noticing the zero eigenvalues of ${\bf{H}}{\bf{H}}^{\rm{H}}$ if $N_t < N_r$ does not change the value of the mutual information capacity, the random variable $G$ can be rewritten as $G = \prod\nolimits_{i = 1}^{{N_t}} {\left( {1 + {\rho}{\lambda _i}} \right)} $ by assuming $\lambda _{N_t+1}=\cdots=\lambda _{N_r}=0$. Hence, the Mellin transform of $G$ for $N_t < N_r$, $\varphi_{\rm semi}^{<}(s)$, can be expressed as
%\begin{equation}\label{eqn:out_def}
%{p_{out}} = \Pr \left( { \underbrace {\prod\limits_{i = 1}^{{N_r}} {\left( {1 + {\rho}{\lambda _i}} \right)} }_{ \triangleq  G} < {2^R}} \right) = F_G(2^R),
%\end{equation}
\begin{equation}\label{eqn:mellin_semi_small_def}
\varphi _{{\rm{semi}}}^ < (s) = \int_0^\infty  { \cdots \int_0^\infty  {\prod\limits_{i = 1}^{{N_t}} {{{\left( {1 + \rho {\lambda _i}} \right)}^{s - 1}}{f_{\tilde {\bs\lambda} }}\left( {{\lambda _1}, \cdots ,{\lambda _{{N_t}}}} \right)d{\lambda _1} \cdots d{\lambda _{{N_t}}}} } }.
\end{equation}
By plugging \eqref{eqn:joint_pdf_eig_semi_small} into \eqref{eqn:mellin_semi_small_def}, then combining the determinant expansion, and after some algebraic manipulations, we arrive at
\begin{align}\label{eqn:varphi_semi_small_rew}
&\varphi _{{\rm{semi}}}^ < (s) = \frac{{{{\left( { - 1} \right)}^{{N_t}\left( {{N_r} - {N_t}} \right)}}}}{{\prod\nolimits_{j = 1}^{{N_t}} {j!} \Delta \left( {{{\bf{R}}_r}} \right)}}\notag\\
&\quad \times \int_0^\infty  { \cdots \int_0^\infty  {\prod\limits_{i = 1}^{{N_t}} {{{\left( {1 + \rho {\lambda _i}} \right)}^{s - 1}}\Delta \left( {{\bf{\tilde \Lambda }}} \right)\det \left( \begin{array}{l}
{\left\{ {{r_j}^{{N_r} - {N_t} - 1}{e^{ - \frac{{{\lambda _i}}}{{{r_j}}}}}} \right\}_{\scriptstyle1 \le i \le {N_t}\hfill\atop
\scriptstyle1 \le j \le {N_r}\hfill}}\\
{\left\{ {{r_j}^{i - {N_t} - 1}} \right\}_{\scriptstyle{N_t} + 1 \le i \le {N_r}\hfill\atop
\scriptstyle1 \le j \le {N_r}\hfill}}
\end{array} \right)d{\lambda _1} \cdots d{\lambda _{{N_t}}}} } } \notag\\
%& = \frac{{{{\left( { - 1} \right)}^{{N_t}\left( {{N_r} - {N_t}} \right)}}{\rho ^{ - \frac{1}{2}{N_t}\left( {{N_t} + 1} \right)}}}}{{{N_t}!\Delta \left( {{{\bf{R}}_r}} \right)}}\sum\limits_{\scriptstyle{{\bs{\sigma }}_1} \in {S_{{N_t}}}\hfill\atop
%\scriptstyle{{\bs{\sigma }}_2} \in {S_{{N_r}}}\hfill} {{\mathop{\rm sgn}} \left( {{{\bs{\sigma }}_1}} \right){\mathop{\rm sgn}} \left( {{{\bs{\sigma }}_2}} \right)   }\notag\\
%&\quad\times  {\prod\limits_{i = 1}^{{N_t}} {{r_{{\sigma _{2,i}}}}^{{N_r} - {N_t} - 1}} \prod\limits_{i = {N_t} + 1}^{{N_r}} {{r_{{\sigma _{2,i}}}}^{i - {N_t} - 1}} }\prod\limits_{i = 1}^{{N_t}} {\Psi \left( {{\sigma _{1,i}},s + {\sigma _{1,i}},\frac{1}{{\rho {r_{{\sigma _{2,i}}}}}}} \right)}.
&= \frac{{{{\left( { - 1} \right)}^{{N_t}\left( {{N_r} - {N_t}} \right)}}}}{{\prod\nolimits_{j = 1}^{{N_t}} {j!} \Delta \left( {{{\bf{R}}_r}} \right)}}\sum\limits_{\scriptstyle{{\bs{\sigma }}_1} \in {S_{{N_t}}}\hfill\atop
\scriptstyle{{\bs{\sigma }}_2} \in {S_{{N_r}}}\hfill} {{\mathop{\rm sgn}} \left( {{{\bs{\sigma }}_1}} \right){\mathop{\rm sgn}} \left( {{{\bs{\sigma }}_2}} \right)\prod\limits_{i = 1}^{{N_t}} {{r_{{\sigma _{2,i}}}}^{{N_r} - {N_t} - 1}} \prod\limits_{i = {N_t} + 1}^{{N_r}} {{r_{{\sigma _{2,i}}}}^{i - {N_t} - 1}}  } \notag\\
&\quad \times \prod\limits_{i = 1}^{{N_t}} {{{\int_0^\infty  {{{\left( {1 + \rho {\lambda _i}} \right)}^{s - 1}}{\lambda _i}^{{\sigma _{1,i}} - 1}e} }^{ - \frac{{{\lambda _i}}}{{{r_{{\sigma _{2,i}}}}}}}}d{\lambda _i}}.
\end{align}
By using \cite[eq.(9.211.4)]{gradshteyn1965table} along with some rearrangements, it follows that%By rearranging \eqref{eqn:varphi_semi_small_rew} as
\begin{align}\label{eq:semi_me_small_conf}
\varphi _{{\rm{semi}}}^ < (s) &= \frac{{{{\left( { - 1} \right)}^{{N_t}\left( {{N_r} - {N_t}} \right)}}{\rho ^{ - \frac{1}{2}{N_t}\left( {{N_t} + 1} \right)}}}}{{{N_t}!\Delta \left( {{{\bf{R}}_r}} \right)}}\sum\limits_{{{\bs{\sigma }}_2} \in {S_{{N_r}}}} {{\mathop{\rm sgn}} \left( {{{\bs{\sigma }}_2}} \right)  \prod\limits_{i = {N_t} + 1}^{{N_r}} {{r_{{\sigma _{2,i}}}}^{i - {N_t} - 1}} } \notag\\
&\quad \times \sum\limits_{{{\bs{\sigma }}_1} \in {S_{{N_t}}}} {{\mathop{\rm sgn}} \left( {{{\bs{\sigma }}_1}} \right)\prod\limits_{i = 1}^{{N_t}} {{{r_{{\sigma _{2,i}}}}^{{N_r} - {N_t} - 1}}\Psi \left( {{\sigma _{1,i}},s + {\sigma _{1,i}},\frac{1}{{\rho {r_{{\sigma _{2,i}}}}}}} \right)} }.
\end{align}

\begin{lemma}\label{the:leb_for1}
If $\eta ({\bs\sigma _1},{\bs\sigma _2})$ is a function of ${{\bs{\sigma }}_1} \in {S_{{N_t}}}$ and ${\bs\sigma _2}\in {S_{{N_r}}}$ irrespective of the elements of the ordering of the permutations of the set of two-tuples $\{(\sigma _{1,l},\sigma _{2,l}):l\in[1,N_t]\}$, and $N_t < N_r$, the summation of ${{\mathop{\rm sgn}} \left( {{\bs\sigma _1}} \right){\mathop{\rm sgn}} \left( {{\bs\sigma _2}} \right)}\eta({\bs\sigma _1},{\bs\sigma _2})$ over all permutations of $\bs\sigma _1$ and $\bs\sigma _2$ reduces to
\begin{align}\label{eqn:leb_for_gen1}
\sum\limits_{{\scriptstyle{{\bs{\sigma }}_1} \in {S_{{N_t}}}\hfill\atop
\scriptstyle{{\bs{\sigma }}_2} \in {S_{{N_r}}}\hfill}} {{\mathop{\rm sgn}} \left( {{\bs\sigma _1}} \right){\mathop{\rm sgn}} \left( {{\bs\sigma _2}} \right)}\eta({\bs\sigma _1},{\bs\sigma _2})
 = N_t!\sum\limits_{{\bs\sigma} \in {S_{N_r}}} {{\rm{sgn}}\left( {{{\bs\sigma }}} \right)\eta(\bar{\bs\sigma}_1,{\bs\sigma})},
\end{align}
where $\bar{\bs\sigma}_1=(1,\cdots,N_t)$.
\end{lemma}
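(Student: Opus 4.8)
The plan is to adapt the bijective relabelling used in the proof of Lemma~\ref{the:leb_for} (Appendix~\ref{app:proof_leb_for}) to the asymmetric situation $N_t<N_r$, where $\bs\sigma_1$ runs over $S_{N_t}$ while $\bs\sigma_2$ runs over the larger group $S_{N_r}$. The guiding idea is that the invariance of $\eta$ under reorderings of the $N_t$ two-tuples $\{(\sigma_{1,l},\sigma_{2,l}):l\in[1,N_t]\}$ allows me to relabel those tuples so that their first coordinates are restored to the natural order $1,2,\dots,N_t$; this replaces $\bs\sigma_1$ by $\bar{\bs\sigma}_1=(1,\dots,N_t)$ and absorbs all the $\bs\sigma_1$-dependence into a reshuffling of the leading $N_t$ entries of $\bs\sigma_2$.

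First I would fix $\bs\sigma_1\in S_{N_t}$ and set $\pi=\bs\sigma_1^{-1}$. Relabelling the tuples so that the $l$-th one has first coordinate $l$ amounts to defining $\tilde{\bs\sigma}_2\in S_{N_r}$ by $\tilde\sigma_{2,l}=\sigma_{2,\pi(l)}$ for $1\le l\le N_t$ and $\tilde\sigma_{2,l}=\sigma_{2,l}$ for $N_t<l\le N_r$; equivalently $\tilde{\bs\sigma}_2=\bs\sigma_2\circ\hat\pi$, where $\hat\pi$ extends $\pi$ by the identity on $\{N_t+1,\dots,N_r\}$. Because the set $\{(l,\tilde\sigma_{2,l}):l\in[1,N_t]\}$ is identical to the original set $\{(\sigma_{1,m},\sigma_{2,m}):m\in[1,N_t]\}$ and the tail entries $\sigma_{2,N_t+1},\dots,\sigma_{2,N_r}$ are left untouched, the stated invariance of $\eta$ yields $\eta(\bs\sigma_1,\bs\sigma_2)=\eta(\bar{\bs\sigma}_1,\tilde{\bs\sigma}_2)$.

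Next I would track the sign. Since $\hat\pi$ only reshuffles the first $N_t$ coordinates, $\mathrm{sgn}(\tilde{\bs\sigma}_2)=\mathrm{sgn}(\bs\sigma_2)\,\mathrm{sgn}(\pi)=\mathrm{sgn}(\bs\sigma_2)\,\mathrm{sgn}(\bs\sigma_1)$, so that $\mathrm{sgn}(\bs\sigma_1)\mathrm{sgn}(\bs\sigma_2)=\mathrm{sgn}(\tilde{\bs\sigma}_2)$. Inserting both this identity and $\eta(\bs\sigma_1,\bs\sigma_2)=\eta(\bar{\bs\sigma}_1,\tilde{\bs\sigma}_2)$ into the left-hand side of \eqref{eqn:leb_for_gen1}, and noting that for each fixed $\bs\sigma_1$ the map $\bs\sigma_2\mapsto\bs\sigma_2\circ\hat\pi$ is a bijection of $S_{N_r}$, the inner summation over $\bs\sigma_2$ collapses to $\sum_{\bs\sigma\in S_{N_r}}\mathrm{sgn}(\bs\sigma)\,\eta(\bar{\bs\sigma}_1,\bs\sigma)$, which is independent of $\bs\sigma_1$. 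Summing over the $N_t!$ permutations $\bs\sigma_1\in S_{N_t}$ then merely multiplies by $N_t!$, giving \eqref{eqn:leb_for_gen1}.

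The main obstacle I expect is the careful bookkeeping forced by the dimension mismatch $N_t<N_r$. I must confirm that acting by $\hat\pi$ on the leading $N_t$ coordinates alone (i) is consistent with the invariance of $\eta$, which is asserted only for the first $N_t$ tuples while $\eta$ may still depend genuinely on the tail of $\bs\sigma_2$ (as in \eqref{eq:semi_me_small_conf}); (ii) contributes precisely the sign $\mathrm{sgn}(\bs\sigma_1)$ required to cancel the $\mathrm{sgn}(\bs\sigma_1)$ weight; and (iii) remains a bijection of the full group $S_{N_r}$ for each fixed $\bs\sigma_1$. Once these points are checked, the remaining manipulations are formal and mirror Lemma~\ref{the:leb_for}.
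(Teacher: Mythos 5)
Your proposal is correct and takes essentially the same route as the paper's own proof: both relabel the $N_t$ two-tuples so that $\bs\sigma_1$ is restored to $\bar{\bs\sigma}_1=(1,\cdots,N_t)$, absorb that reshuffle into the leading $N_t$ entries of $\bs\sigma_2$ while the tail is untouched, observe that the product of signatures is preserved (equivalently, equals $\mathrm{sgn}$ of the reshuffled $\bs\sigma_2$), and count the resulting $N_t!$-fold multiplicity. The only difference is presentational: where the paper argues via ``a number of transpositions,'' you make the mapping explicit as right-composition with the extension $\hat\pi$ of $\bs\sigma_1^{-1}$, which cleanly justifies the same sign identity and the bijection of $S_{N_r}$ for each fixed $\bs\sigma_1$.
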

\begin{proof}
Similar to the proof of Lemma \ref{the:leb_for}, we establish the one-to-one mapping $\vartheta(\bs \sigma_1, \tilde{\bs \sigma}_2)$ as a vector of ordered pairs $(\sigma _{1,l},\sigma _{2,l})$ for $l\in [1,N_t]$, i.e., $\vartheta(\bs \sigma_1,\tilde{\bs \sigma}_2) \triangleq \left((\sigma _{1,l},\sigma _{2,l}):l\in[1,N_t]\right)$, where $\tilde{\bs \sigma}_2$ is constructed by slicing the first $N_t$ elements from $\bs \sigma_2$, i.e., ${\bs\sigma}_2=(\tilde{\bs\sigma}_2,\sigma _{2,N_t+1},\cdots,\sigma _{2,N_r})$. After a number of transpositions to achieve $\vartheta(\bar{\bs\sigma}_1, \tilde{\bs\sigma})$ starting from $\vartheta(\bs \sigma_1, \tilde{\bs \sigma}_2)$, we conclude that ${\mathop{\rm sgn}} \left( {{\bs\sigma _1}} \right){\mathop{\rm sgn}} \left( {{\bs\sigma _2}} \right) = {\mathop{\rm sgn}} \left( {\bar{\bs\sigma}_1} \right){\mathop{\rm sgn}} \left( {{\bs\sigma}} \right)$, where ${\bs\sigma}=(\tilde{\bs\sigma},\sigma _{2,N_t+1},\cdots,\sigma _{2,N_r})$ and ${\mathop{\rm sgn}} \left( {\bar{\bs\sigma}_1} \right)=1$. Thus, the left hand side of \eqref{eqn:leb_for_gen1} can be written as
\begin{align}\label{eqn:sing_g_twosum_com1}
\sum\limits_{{\scriptstyle{{\bs{\sigma }}_1} \in {S_{{N_t}}}\hfill\atop
\scriptstyle{{\bs{\sigma }}_2} \in {S_{{N_r}}}\hfill}} {{\mathop{\rm sgn}} \left( {{\bs\sigma _1}} \right){\mathop{\rm sgn}} \left( {{\bs\sigma _2}} \right)}\eta({\bs\sigma _1},{\bs\sigma _2})
 &= \sum\limits_{{\scriptstyle{{\bs{\sigma }}_1} \in {S_{{N_t}}}\hfill\atop
\scriptstyle{{\bs{\sigma }}_2} \in {S_{{N_r}}}\hfill}} {{\mathop{\rm sgn}} \left( \bar{\bs\sigma}_1 \right){\mathop{\rm sgn}} \left( {{\bs\sigma}} \right)}\eta(\bar{\bs\sigma}_1,{\bs\sigma}) \notag\\
 &= N_t!\sum\limits_{{\bs\sigma} \in {S_{N_r}}} {{\rm{sgn}}\left( {{{\bs\sigma }}} \right)\eta(\bar{\bs\sigma}_1,{\bs\sigma})},
\end{align}
where the last step holds similar to \eqref{eqn:sing_g_twosum_com}. The proof is thus accomplished.%according to the definition of $\eta ({\bs\sigma _1},{\bs\sigma _2})$, the one-to-one mapping and the cardinality of $S_{N_t}$, i.e., $|S_{N_t}|=N_t!$. The proof for the first equality is thus completed, and the second equality can be proved in the same way.
\end{proof}
By using Lemma \ref{the:leb_for1}, \eqref{eq:semi_me_small_conf} can be finally written as \eqref{eq:semi_me_small_conf_fin}.

\section{Proof of Theorem \ref{the:fg_semi}}\label{app:fg_semi}
On the basis of $\varphi _{{\rm{semi}}} (s)$, the CDF of $G$, $ {F_G^{(2)}}\left( x \right)$, can be derived by using the inverse Mellin transform similarly to the proof of Theorem \ref{the:cdf_g_ind}. Due to the different forms of the Mellin transforms $\varphi _{{\rm{semi}}}^ \ge (s)$ and $\varphi _{{\rm{semi}}}^ < (s)$, the derivation of the CDF ${F_G^{(2)}}\left( x \right)$ should be split into two cases as follows.
\subsection{$ F_G^{(2)}(x)$ for $N_t \ge N_r$}
%\begin{align}\label{eqn:mellin_t_corr_gpdf_sumre1}
% \varphi_{\rm semi}^{\ge}(s) &= \frac{{{{\left( { - 1} \right)}^{\frac{1}{2}{N_r}\left( {{N_r} - 1} \right)}}}{ \rho ^{-\frac{{{N_r}\left( {{N_r} + 1} \right)}}{2} - \left( {N_t - {N_r}} \right){N_r}}}}{{\det \left( {{{{\bf{R}}_r}^{N_t}}} \right)\Delta \left( {{{{\bf{R}}_r}^{ - 1}}} \right)\prod\nolimits_{j = 1}^{N_r} {\left( {N_t - j} \right)!} }}\sum\limits_{{\bs\sigma} \in {S_{N_r}}} { {{\rm{sgn}}\left( {{\bs\sigma }} \right)} } \notag\\
%&\quad\times\prod\limits_{i = 1}^{N_r} {\Gamma \left( {i + \tau  + 1} \right)\Psi \left( {i + \tau  + 1,s + i + \tau  + 1;\frac{1}{{\rho {r_{{\sigma _{i}}}}}}} \right)},\,N_t \ge N_r,
%\end{align}
By putting \eqref{eqn:mellin_t_corr_gpdf_sumre1} into \eqref{eqn:cdf_g_inverse}, ${F_G^{(2\ge)}}\left( x \right)$ can be obtained by using inverse Mellin transform as
\begin{align}\label{eqn:mellin_t_corr_gpdf_sum_conf}
{F_G^{(2\ge)}}\left( x \right) &= \frac{{{{\left( { - 1} \right)}^{\frac{1}{2}{N_r}\left( {{N_r} - 1} \right)}}}{ \rho ^{-\frac{{{N_r}\left( {{N_r} + 1} \right)}}{2} - \left( {N_t - {N_r}} \right){N_r}}}\prod\nolimits_{i = 1}^{N_r} {\Gamma \left( {i + \tau  + 1} \right)}}{{\det \left( {{{{\bf{R}}_r}^{N_t}}} \right)\Delta \left( {{{{\bf{R}}_r}^{ - 1}}} \right)\prod\nolimits_{j = 1}^{N_r} {\left( {N_t - j} \right)!} }}\sum\limits_{{\bs\sigma} \in {S_{N_r}}} { {{\rm{sgn}}\left( {{\bs\sigma }} \right)} } \notag\\
 &\quad \times \frac{1}{{2\pi {\rm{i}}}}\int\nolimits_{c - {\rm i}\infty }^{c + {\rm i}\infty } {\frac{{\Gamma \left( { - s} \right)}}{{\Gamma \left( {1 - s} \right)}}\prod\limits_{i = 1}^{N_r} {\Psi \left( {i + \tau  + 1,s + i + \tau  + 2;\left({{\rho {r_{{\sigma _{i}}}}}}\right)^{-1}} \right)} {x^{ - s}}ds}.
 \end{align}
With the definition of $\Xi \left( {a,\alpha ,A,\varphi } \right)$, ${F_G^{(2\ge)}}\left( x \right)$ can be rewritten as
\begin{align}\label{eqn:mellin_t_corr_gpdf_sum_conf_re}
 &{F_G^{(2\ge)}}\left( x \right) = \frac{{{{\left( { - 1} \right)}^{\frac{1}{2}{N_r}\left( {{N_r} - 1} \right)}}}\prod\nolimits_{i = 1}^{N_r} {\Gamma \left( {i + \tau  + 1} \right)}}{{\det \left( {{{{\bf{R}}_r}^{N_t}}} \right)\Delta \left( {{{{\bf{R}}_r}^{ - 1}}} \right)\prod\nolimits_{j = 1}^{N_r} {\left( {N_t - j} \right)!} }}\sum\limits_{{\bs\sigma} \in {S_{N_r}}} { {{\rm{sgn}}\left( {{\bs\sigma }} \right)} }\prod\limits_{i = 1}^{N_r} {{r_{{\sigma _{i}}}}^{i + \tau+1 }}\notag\\
  %&\frac{{{{\left( { - 1} \right)}^{\frac{1}{2}M\left( {M - 1} \right)}}}}{{M!\det \left( {{{\bf{R}}^N}} \right)\Delta \left( {{{\bf{R}}^{ - 1}}} \right)\prod\limits_{j = 1}^M {\left( {N - j} \right)!} }}\notag\\
% &\quad \times \sum\limits_{{\bs\sigma _1} \in {S_M}} {\sum\limits_{{\bs\sigma _2} \in {S_M}} {{\rm{sgn}}\left( {{\bs\sigma _1}} \right){\rm{sgn}}\left( {{\bs \sigma _2}} \right)} } \prod\limits_{i = 1}^M {\Gamma \left( {{\sigma _{2,i}} + \tau  + 1} \right){{r_{{\sigma _{1,i}}}}^{{\sigma _{2,i}} + \tau }}}  \notag\\
 &\quad \times\frac{1}{{2\pi {\rm{i}}}}\int\nolimits_{c - {\rm i}\infty }^{c + {\rm i}\infty } {\frac{{\Xi \left( {0, - 1,0,1} \right)\prod\nolimits_{i = 1}^{N_r} {\Xi \left( {1,1,\left({{\rho {r_{{\sigma _{i}}}}}}\right)^{-1},{i} + \tau  + 1} \right)} }}{{\Xi \left( {1, - 1,0,1} \right)}}{{\left(\frac{x}{\det \left( {{{\bf{R}}_r}} \right){{{{\rho }} }^{N_r}}} \right)}^{ - s}}ds},
 \end{align}
where $\det \left( {{{\bf{R}}_r}} \right) = {\prod\nolimits_{i = 1}^{N_r} {{r_{{\sigma _i}}}} }$.
Hence, ${F_G^{(2\ge)}}\left( x \right)$ is finally expressed in terms of the representation of the generalized Fox's H function as \eqref{eqn:mellin_t_corr_gpdf_sum_conf_recon}.
%As a consequence, $F_G(x)$ can be written as
\subsection{$ F_G^{(2)}(x)$ for $N_t < N_r$}
Substituting \eqref{eq:semi_me_small_conf_fin} into \eqref{eqn:cdf_g_inverse} produces the expression of $F_G^ < \left( x \right)$ as
\begin{align}\label{eqn:F_G_small_inv}
{F_G^{(2<)}}\left( x \right) &= \frac{{{{\left( { - 1} \right)}^{{N_t}\left( {{N_r} - {N_t}} \right)}}{\rho ^{ - \frac{1}{2}{N_t}\left( {{N_t} + 1} \right)}}}}{{\Delta \left( {{{\bf{R}}_r}} \right)}}\sum\limits_{{\bs{\sigma }} \in {S_{{N_r}}}} {{\mathop{\rm sgn}} \left( {\bs{\sigma }} \right)\prod\limits_{i = 1}^{{N_t}} {{r_{{\sigma _i}}}^{{N_r} - {N_t} - 1}} \prod\limits_{i = {N_t} + 1}^{{N_r}} {{r_{{\sigma _i}}}^{i - {N_t} - 1}} } \notag\\
&\quad \times \frac{1}{{2\pi {\rm{i}}}}\int_{c - {\rm i}\infty }^{c + {\rm i}\infty } {\frac{{\Gamma \left( { - s} \right)}}{{\Gamma \left( {1 - s} \right)}}\prod\limits_{i = 1}^{{N_t}} {\Psi \left( {i,s + i + 1,\left({{\rho {r_{{\sigma _{i}}}}}}\right)^{-1}} \right)} {x^{ - s}}ds}.
\end{align}
By using the definition of $\Xi \left( {a,\alpha ,A,\varphi } \right)$, ${F_G^{(2<)}}\left( x \right)$ can be obtained after some rearrangements as
\begin{align}\label{eqn:F_G_small_conf}
  &{F_G^{(2<)}}\left( x \right) = \frac{{{{\left( { - 1} \right)}^{{N_t}\left( {{N_r} - {N_t}} \right)}}}}{{\Delta \left( {{{\bf{R}}_r}} \right)}}\sum\limits_{{\bs{\sigma }} \in {S_{{N_r}}}} {{\mathop{\rm sgn}} \left( {\bs{\sigma }} \right){\prod\limits_{i = 1}^{{N_t}} {{r_{{\sigma _i}}}^{{N_r} + i - {N_t} - 1}} \prod\limits_{i = {N_t} + 1}^{{N_r}} {{r_{{\sigma _i}}}^{i - {N_t} - 1}} } } \notag\\
  &\times \frac{1}{{2\pi {\rm{i}}}}\int_{c - {\rm i}\infty }^{c + {\rm i}\infty } {\frac{{\Xi \left( {0, - 1,0,1} \right)\prod\nolimits_{i = 1}^{{N_t}} {\Xi \left( {1,1,{{\left( {\rho {r_{{\sigma _i}}}} \right)}^{ - 1}},i} \right)} }}{{\Xi \left( {1, - 1,0,1} \right)}}{{\left( {\frac{x}{{\left(\prod\nolimits_{i = 1}^{{N_t}} {{r_{{\sigma _i}}}} \right)}{\rho ^{{N_t}}}}} \right)}^{ - s}}ds}.
\end{align}
Accordingly, ${F_G^{(2<)}}\left( x \right)$ is consequently obtained by recognizing the contour integral as the generalized Fox's H function as \eqref{eqn:F_G_small_conf_con}.

\section{Proof of Lemma 2}\label{app:zeta_asy}
Similar to (\ref{eqn:h_sigma_pro2}), ${\cal Y}_{{\bs{\sigma }},N,\upsilon }^{(2)}\left( x \right)$ can be rewritten by means of \cite[Property 2]{shi2017asymptotic} as
\begin{align}\label{eqn:zeta_rewritten_pro2}
{\cal Y}_{{\bs{\sigma }},N,\upsilon }^{(2)}\left( x \right)
&=Y_{N + 1,1}^{1,N}\left[ {\left. {\begin{array}{*{20}{c}}
{{{\left( {0,1,{{\left( {\rho {r_{{\sigma _i}}}} \right)}^{ - 1}},i + \upsilon  + 1} \right)}_{i = 1, \cdots ,N}},\left( {1,1,0,1} \right)}\\
{\left( {0,1,0,1} \right)}
\end{array}} \right|\frac{{\left( {\prod\nolimits_{i = 1}^N {{r_{{\sigma _i}}}} } \right){\rho ^N}}}{x}} \right]\notag\\
&= {\rho^{-\frac{1}{2}{{N\left( {N + 1} \right)}} - \left( {\upsilon+1} \right)N}}\prod\limits_{i = 1}^N{r_{{\sigma _{i}}}}^{ - i - \upsilon -1 }\notag \\
&\times\frac{1}{{2\pi {\rm{i}}}}\int\nolimits_{ - c - {\rm i}\infty }^{ - c + {\rm i}\infty } {\frac{{\Gamma \left( s \right)}}{{\Gamma \left( {1 + s} \right)}}\prod\limits_{i = 1}^N {\Psi \left( {i + \upsilon  + 1, - s + i + \upsilon  + 2;\left({{\rho {r_{{\sigma _{i}}}}}}\right)^{-1}} \right)} {x^s}ds}.
\end{align}
With \cite[eq.(9.210.2)]{gradshteyn1965table}, (\ref{eqn:zeta_rewritten_pro2}) is obtained as
\begin{align}\label{eqn:zeta_grad_rew}
&{\cal Y}_{{\bs{\sigma }},N,\upsilon }^{(2)}\left( x \right) = {\rho^{-\frac{1}{2}{{N\left( {N + 1} \right)}} - \left( {\upsilon+1} \right)N}}\prod\limits_{i = 1}^N{r_{{\sigma _{i}}}}^{ - i - \upsilon -1 }\times\notag\\
&\frac{1}{{2\pi {\rm{i}}}}\int\nolimits_{ - c - {\rm i}\infty }^{ - c + {\rm i}\infty } {\frac{{\Gamma \left( s \right)}}{{\Gamma \left( {1 + s} \right)}}\prod\limits_{i = 1}^N {\left( {\begin{array}{*{20}{l}}
{\frac{{\Gamma \left( {s - {i} - \upsilon  - 1} \right)}}{{\Gamma \left( s \right)}}{}_1{F_1}\left( {{i} + \upsilon  + 1, - s + {i} + \upsilon  + 2;\frac{1}{{\rho {r_{{\sigma _{i}}}}}}} \right) + }\\
{\frac{{\Gamma \left( {{i} + \upsilon  + 1 - s} \right)}}{{\Gamma \left( {{i} + \upsilon  + 1} \right)}}{{\left( \frac{1}{{\rho {r_{{\sigma _{i}}}}}} \right)}^{s - {i} - \upsilon  - 1}}{}_1{F_1}\left( {s,s - {i} - \upsilon ;\frac{1}{{\rho {r_{{\sigma _{i}}}}}}} \right)}
\end{array}} \right)} {x^s}ds}.
\end{align}
By recalling $c<0$, we set the value of $c$ as
\begin{align}\label{eqn:c_semi}
c &< \sup\left\{c:{\frac{1}{2}{{N\left( {N + 1} \right)}} + \left( {\upsilon  + 1} \right)N - c - i - \upsilon  - 1 > N_tN_r,\,1 \le i \le N} \right\}\notag\\
&= \left( {\frac{1}{2}N + \upsilon  + 1} \right)\left( {N - 1} \right) - N_tN_r.
\end{align}
On the basis \eqref{eqn:c_semi}, as $\rho\to \infty$, \eqref{eqn:zeta_grad_rew} can be simplified by ignoring the higher order terms $o\left( {{\rho ^{ - N_tN_r}}} \right)$ as
\begin{align}\label{eqn:zeta_grad_rew_cset}
&{\cal Y}_{{\bs{\sigma }},N,\upsilon }^{(2)}\left( x \right) = {\rho^{-\frac{1}{2}{{N\left( {N + 1} \right)}} - \left( {\upsilon+1} \right)N}}\prod\limits_{i = 1}^N{r_{{\sigma _{i}}}}^{ - i - \upsilon -1 }\frac{1}{{2\pi {\rm{i}}}}\int\nolimits_{ - c - {\rm i}\infty }^{ - c + {\rm i}\infty } \frac{{\Gamma \left( s \right)}}{{\Gamma \left( {1 + s} \right)}}\notag\\
&\quad \times{\prod\limits_{i = 1}^N {\frac{{\Gamma \left( {s - {i} - \upsilon  - 1} \right)}}{{\Gamma \left( s \right)}}{}_1{F_1}\left( {{i} + \upsilon  + 1, - s + {i} + \upsilon  + 2;\frac{1}{{\rho {r_{{\sigma _{i}}}}}}} \right)} {x^s}ds}+ o\left( {{\rho ^{ - N_tN_r}}} \right).%\\
%= {\left( {\frac{M}{\rho }} \right)^{\frac{{M\left( {M + 1} \right)}}{2} + \left( {N - M} \right)M}}{r_{{\sigma _{1,i}}}}^{ - {i} - \upsilon }\times\\
% \frac{1}{{2\pi {\rm{i}}}}\int\limits_{ - c - {\rm i}\infty }^{ - c + {\rm i}\infty } {\frac{{\Gamma \left( s \right)}}{{\Gamma \left( {1 + s} \right)}}\prod\limits_{i = 1}^M {\frac{{\Gamma \left( {s - {i} - \upsilon  - 1} \right)}}{{\Gamma \left( s \right)}}\sum\limits_{{n_i} = 0}^\infty  {\frac{{{{\left( {{i} + \upsilon  + 1} \right)}_{{n_i}}}}}{{{{\left( { - s + {i} + \upsilon  + 2} \right)}_{{n_i}}}}}\frac{{{{\left( {\frac{M}{{\rho {r_{{\sigma _{1,i}}}}}}} \right)}^{{n_i}}}}}{{{n_i}!}}} } {x^s}ds}  + o\left( {{\rho ^{ - NM}}} \right),
\end{align}
By putting the infinite series expansion of ${}_1{F_1}\left( {\alpha ,\beta ;x} \right)$\cite[eq.(9.210.1)]{gradshteyn1965table} into \eqref{eqn:zeta_grad_rew_cset} finally leads to (\ref{eqn:zeta_asy}). %\sum\limits_{n = 0}^\infty  {\frac{{{{\left( \alpha  \right)}_n}}}{{{{\left( \beta  \right)}_n}}}} \frac{{{x^n}}}{{n!}}

\section{Proof of Theorem \ref{the:pout_asy_semi}}\label{app:proof_asy_out_sin_corr}
\subsection{Asymptotic expression of ${p_{out}^{\rm semi}}$ for $N_t \ge N_r$}
Substituting (\ref{eqn:zeta_asy}) into \eqref{eqn:mellin_t_corr_gpdf_sum_conf_recon} along with \eqref{out_SC}, we have
\begin{align}\label{eqn:out_corr_sin_asyexp_sub}
&{p_{out}^{\rm semi}} = \frac{{{{\left( { - 1} \right)}^{\frac{1}{2}{N_r}\left( {{N_r} - 1} \right)}}{{{{\rho }}}^{-\frac{{{N_r}\left( {{N_r} + 1} \right)}}{2} - \left( {{N_t} - {N_r}} \right){N_r}}}}\prod\nolimits_{i = 1}^{N_r} {\Gamma \left( {{ i} + \tau  + 1} \right)}}{{\det \left( {{{{\bf{R}}_r}^{N_t}}} \right)\Delta \left( {{{{\bf{R}}_r}^{ - 1}}} \right)\prod\nolimits_{j = 1}^{N_r} {\left( {{N_t} - j} \right)!} }}\sum\limits_{{\bs\sigma} \in {S_{N_r}}} { {{\rm{sgn}}\left( {{\bs\sigma}} \right)} } \notag\\
&\quad\times \frac{1}{{2\pi {\rm{i}}}}\int\nolimits_{ - c - {\rm i}\infty }^{ - c + {\rm i}\infty } {\frac{{\Gamma \left( s \right)}}{{\Gamma \left( {1 + s} \right)}}\prod\limits_{i = 1}^{N_r} {\frac{{\Gamma \left( {s - {i} - \tau  - 1} \right)}}{{\Gamma \left( s \right)}}\sum\limits_{{n_i} = 0}^\infty  {\frac{{{{\left( {{i} + \tau  + 1} \right)}_{{n_i}}}}}{{{{\left( { - s + {i} + \tau  + 2} \right)}_{{n_i}}}}}\frac{{{{\left( {{{\rho {r_{{\sigma _{i}}}}}}} \right)}^{{-n_i}}}}}{{{n_i}!}}} } {2^{Rs}}ds}  \notag\\
&\quad+ o\left( {{\rho ^{ - {N_t}{N_r}}}} \right).
\end{align}
By interchanging the order of summations with multiplications, then with integrations, it follows that
\begin{align}\label{eqn:out_corr_sin_asyexp_exchange}
{p_{out}^{\rm semi}} &= \frac{{{{\left( { - 1} \right)}^{\frac{1}{2}{N_r}\left( {{N_r} - 1} \right)}}{{{{\rho }}}^{-\frac{{{N_r}\left( {{N_r} + 1} \right)}}{2} - \left( {{N_t} - {N_r}} \right){N_r}}}}}{{\det \left( {{{{\bf{R}}_r}^{N_t}}} \right)\Delta \left( {{{{\bf{R}}_r}^{ - 1}}} \right)\prod\nolimits_{j = 1}^{N_r} {\left( {{N_t} - j} \right)!} }}\sum\limits_{{\bs\sigma} \in {S_{N_r}}} { {{\rm{sgn}}\left( {{\bs\sigma}} \right)} } \notag\\
&\quad \times\sum\limits_{{n_1}, \cdots ,{n_{N_r}} = 0}^\infty  {\prod\limits_{i = 1}^{N_r} {\Gamma \left( {{i} + \tau  + 1 + {n_i}} \right)\frac{{{{\left( {{{\rho {r_{{\sigma _{i}}}}}}} \right)}^{{-n_i}}}}}{{{n_i}!}}} }\notag\\
&\quad \times\frac{1}{{2\pi {\rm{i}}}}\int\nolimits_{ - c - {\rm i}\infty }^{ - c + {\rm i}\infty } {\frac{{\Gamma \left( s \right)}}{{\Gamma \left( {1 + s} \right)}}\prod\limits_{i = 1}^{N_r} {\frac{{\Gamma \left( {s - {i} - \tau  - 1} \right)}}{{\Gamma \left( s \right){{\left( { - s + {i} + \tau  + 2} \right)}_{{n_i}}}}}} {2^{Rs}}ds}  + o\left( {{\rho ^{ - {N_t}{N_r}}}} \right).
\end{align}
By using ${\left( { - s + {i} + \tau  + 2} \right)_{{n_i}}} = {\left( { - 1} \right)^{{n_i}}}\left( {s - {i} - \tau  - 1 - {n_i}} \right) \cdots \left( {s - {i} - \tau  - 2} \right)$ and the recursion relationship of Gamma function as $\Gamma(s+1) = s \Gamma(s)$, (\ref{eqn:out_corr_sin_asyexp_exchange}) can be further expressed as
\begin{align}\label{eqn:out_corr_sin_asyexp_recur}
{p_{out}^{\rm semi}} &= \frac{{{{\left( { - 1} \right)}^{\frac{1}{2}{N_r}\left( {{N_r} - 1} \right)}}{{{{\rho }}}^{-\frac{{{N_r}\left( {{N_r} + 1} \right)}}{2} - \left( {{N_t} - {N_r}} \right){N_r}}}}}{{\det \left( {{{{\bf{R}}_r}^{N_t}}} \right)\Delta \left( {{{{\bf{R}}_r}^{ - 1}}} \right)\prod\nolimits_{j = 1}^{N_r} {\left( {{N_t} - j} \right)!} }}\sum\limits_{{\bs\sigma} \in {S_{N_r}}} { {{\rm{sgn}}\left( {{\bs\sigma}} \right)} } \notag\\
&\quad \times\sum\limits_{{n_1}, \cdots ,{n_{N_r}} = 0}^\infty  {\prod\limits_{i = 1}^{N_r} {\Gamma \left( {{i} + \tau  + 1 + {n_i}} \right)\frac{{{{\left( -{{{\rho {r_{{\sigma _{i}}}}}}} \right)}^{{-n_i}}}}}{{{n_i}!}}} }\notag\\
&\quad\times\frac{1}{{2\pi {\rm{i}}}}\int\nolimits_{ - c - {\rm i}\infty }^{ - c + {\rm i}\infty } {\frac{{\Gamma \left( s \right)}}{{\Gamma \left( {1 + s} \right)}}\prod\limits_{i = 1}^{N_r} {\frac{{\Gamma \left( {s - \tau- {i} - {n_i}- 1 } \right)}}{{\Gamma \left( s \right)}}} {2^{Rs}}ds}  + o\left( {{\rho ^{ - {N_t}{N_r}}}} \right).
\end{align}
By interchanging the order of summations further together with the definition of ${\mathfrak g_{{\bs \sigma }}}\left(x \right)$, we get
\begin{align}\label{eqn:out_corr_sin_asyexp_gdef}
&{p_{out}^{\rm semi}} = \frac{{{{\left( { - 1} \right)}^{\frac{1}{2}{N_r}\left( {{N_r} - 1} \right)}}{{{{\rho }}}^{-\frac{{{N_r}\left( {{N_r} + 1} \right)}}{2} - \left( {{N_t} - {N_r}} \right){N_r}}}}}{{\det \left( {{{{\bf{R}}_r}^{N_t}}} \right)\Delta \left( {{{{\bf{R}}_r}^{ - 1}}} \right)\prod\nolimits_{j = 1}^{N_r} {\left( {{N_t} - j} \right)!} }}\sum\limits_{{n_1}, \cdots ,{n_{N_r}} = 0}^\infty  {\prod\limits_{i = 1}^{N_r} {\Gamma \left( {{i} + \tau  + 1 + {n_i}} \right)\frac{{{{\left( -{{{\rho }}} \right)}^{{-n_i}}}}}{{{n_i}!}}} }\notag\\
%&\quad \times\sum\limits_{{n_1}, \cdots ,{n_{N_r}} = 0}^\infty  {\prod\limits_{i = 1}^{N_r} {\Gamma \left( {{i} + \tau  + 1 + {n_i}} \right)\frac{{{{\left( -{{{\rho {r_{{\sigma _{i}}}}}}} \right)}^{{-n_i}}}}}{{{n_i}!}}} }{\mathfrak g_{{\bf n + 1 }}}\left(2^R \right)\notag\\
%&\times\sum\limits_{{n_1}, \cdots ,{n_{N_r}} = 0}^\infty  {\frac{\prod\limits_{i = 1}^{N_r}\Gamma \left( {{i} + \tau  + 1 + {n_i}} \right){{{\left( { - {\rho }} \right)}^{-\sum\limits_{i = 1}^{N_r} {{n_i}} }}}}{{\prod\limits_{i = 1}^{N_r} {{n_i}!} }} } {\mathfrak g_{{\bf n + 1 }}}\left(2^R \right)\sum\limits_{{\bs\sigma} \in {S_{N_r}}} {{\rm{sgn}}\left( {{\bs\sigma}} \right)\prod\limits_{i = 1}^{N_r} {{{{{{{r_{{\sigma _{i}}}}}}} }^{{-n_i}}}} } .
&\times
{\mathfrak g_{{\bf n + 1 }}}\left(2^R \right)\sum\limits_{{\bs\sigma} \in {S_{N_r}}} {{\rm{sgn}}\left( {{\bs\sigma}} \right)\prod\limits_{i = 1}^{N_r} {{{{{{{r_{{\sigma _{i}}}}}}} }^{{-n_i}}}} }+ o\left( {{\rho ^{ - {N_t}{N_r}}}} \right),
\end{align}
where ${\bf 1}$ denotes a $1 \times {N_r}$ all-one vector and ${\bf n} = ({n_1,\cdots,n_{N_r}})$. Since the following identity holds
\begin{equation}\label{eqn:vand_R_corr}
\sum\limits_{{\bs \sigma} \in {S_{N_r}}} {{\rm{sgn}}\left( {{\bs\sigma}} \right)\prod\limits_{i = 1}^{N_r} {{{ {{{{r_{{\sigma _{i}}}}}}}}^{{-n_i}}}} }  = \det \left( {{{\left\{ {{{ {{{{r_{{j}}}}}}}^{{-n_i}}}} \right\}}_{1 \le i,j \le {N_r}}}} \right),
\end{equation}
\eqref{eqn:vand_R_corr} is clearly equal to zero if there exist $k$ and $l$ such that $n_k=n_l$. Hence, the index vector ${\bf n}$ for the dominant terms in (\ref{eqn:out_corr_sin_asyexp_gdef}) belongs to the set of the permutations of $\{0,1,\cdots,{N_r}-1\}$, i.e., ${\Omega _{N_r}}$, so as to ensure the term of \eqref{eqn:vand_R_corr} non-zero. Moreover, from \eqref{eqn:out_corr_sin_asyexp_gdef}, the terms with order larger than $N_t N_r$ vanish comparing to the dominant terms with order smaller than or equal to $N_t N_r$ as $\rho \to \infty$. Hence, the summation term is either zero or the order of $\rho$ larger than $N_t N_r$ if ${\bf{n}} \notin  {\Omega _{N_r}}$. More specifically, by substituting \eqref{eqn:vand_R_corr} into \eqref{eqn:out_corr_sin_asyexp_gdef} and merely keeping the dominant-terms, it follows that
\begin{align}\label{eqn:out_corr_sin_asyexp_gpermu}
{p_{out}^{\rm semi}} &= \frac{{{{\left( { - 1} \right)}^{\frac{1}{2}{N_r}\left( {{N_r} - 1} \right)}}{{{{\rho }}}^{-\frac{{{N_r}\left( {{N_r} + 1} \right)}}{2} - \left( {{N_t} - {N_r}} \right){N_r}}}}}{{\det \left( {{{{\bf{R}}_r}^{N_t}}} \right)\Delta \left( {{{{\bf{R}}_r}^{ - 1}}} \right)\prod\nolimits_{j = 1}^{N_r} {\left( {{N_t} - j} \right)!} }}\sum\limits_{{\bf{n}} \in {\Omega _{N_r}}}  {\frac{{{{\left( { - {\rho }} \right)}^{-\sum\nolimits_{i = 1}^{N_r} {{n_i}} }}}\prod\nolimits_{i = 1}^{N_r} {\Gamma \left( {{i} + \tau  + 1 + {n_i}} \right)}}{{\prod\nolimits_{i = 1}^{N_r} {{n_i}!} }} }\notag\\
&\quad \times \det \left( {{{\left\{ {{{ {{{{r_{{j}}}}}}}^{{-n_i}}}} \right\}}_{1 \le i,j \le {N_r}}}} \right) {\mathfrak g_{{\bf n + 1 }}}\left(2^R \right) + o\left( {{\rho ^{ - {N_t}{N_r}}}} \right).
%= \frac{{{{\left( { - 1} \right)}^{\frac{1}{2}{N_r}\left( {{N_r} - 1} \right)}}{{\left( {\frac{{N_r}}{\rho }} \right)}^{\frac{{{N_r}\left( {{N_r} + 1} \right)}}{2} + \left( {{N_t} - {N_r}} \right){N_r}}}}}{{{N_r}!\det \left( {{{{\bf{R}}_r}^{N_t}}} \right)\Delta \left( {{{{\bf{R}}_r}^{ - 1}}} \right)\prod\limits_{j = 1}^{N_r} {\left( {{N_t} - j} \right)!} }}\times \\
%\sum\limits_{{\bf{n}} \in {\Omega _{N_r}}} {\frac{{{{\left( { - \frac{{N_r}}{\rho }} \right)}^{\sum\limits_{i = 1}^{N_r} {{n_i}} }}}}{{\prod\limits_{i = 1}^{N_r} {{n_i}!} }}\det \left( {{{\left\{ {{{\left( {\frac{1}{{{r_{{\sigma _{1,j}}}}}}} \right)}^{{n_i}}}} \right\}}_{1 \le i,j \le {N_r}}}} \right)\sum\limits_{{\bs \sigma _2} \in {S_{N_r}}} {{\rm{sgn}}\left( {{\bs\sigma _2}} \right){g_{{\bs\sigma _2} + {\bf{n}} + {\bf{1}}}}\left( R \right)} }  + o\left( {{\rho ^{ - {N_t}{N_r}}}} \right) .
\end{align}
If ${\bf{n}} \in  {\Omega _{N_r}}$, we have $\det \left( {{{\left\{ {{{ {{{{r_{{j}}}}}}}^{{-n_i}}}} \right\}}_{1 \le i,j \le {N_r}}}} \right) = {\rm{sgn}}\left( {\bf{n}} \right)\Delta \left( {{{{\bf{R}}_r}^{ - 1}}} \right)$ follows from Vandermonde determinant and ${\sum\nolimits_{i = 1}^{N_r} {{n_i}} }={{N_r}({N_r}-1)}/{2}$. Therefore, (\ref{eqn:out_corr_sin_asyexp_gpermu}) can be further derived as
\begin{align}\label{eqn:out_corr_sin_asyexp_vand}
{p_{out}^{\rm semi}}&= \frac{{{{{{\rho }}}^{-{N_t}{N_r}}}}}{{\det \left( {{{{\bf{R}}_r}^{N_t}}} \right)\prod\nolimits_{i = 1}^{N_r} {\left( {{N_t} - i} \right)!\left( {{N_r} - i} \right)!} }}\notag\\
&\quad \times\sum\limits_{{\bf{n}} \in {\Omega _{N_r}}} {\rm{sgn}}\left( {\bf{n}} \right) \prod\limits_{i = 1}^{N_r} {\Gamma \left( {{i} + \tau  + 1 + {n_i}} \right)} {\mathfrak g_{{\bf n + 1 }}}\left(2^R \right) + o\left( {{\rho ^{ - {N_t}{N_r}}}} \right).
%{p_{out}^{\rm semi}} = \frac{{{{\left( {\frac{{N_r}}{\rho }} \right)}^{{N_t}{N_r}}}}}{{{N_r}!\det \left( {{{{\bf{R}}_r}^{N_t}}} \right)\prod\limits_{j = 1}^{N_r} {\left( {{N_r} - i} \right)!\left( {{N_t} - j} \right)!} }}\sum\limits_{{\bf{n}} \in {\Omega _{N_r}}} {\sum\limits_{{\bs\sigma _2} \in {S_{N_r}}} {{\rm{sgn}}\left( {\bf{n}} \right){\rm{sgn}}\left( {{\bs \sigma _2}} \right){g_{{\bs\sigma _2} + {\bf{n}} + {\bf{1}}}}\left( R \right)} }  + o\left( {{\rho ^{ - {N_t}{N_r}}}} \right).
\end{align}
By redefining ${{\bs \sigma }} \triangleq {\bf{n}} + {\bf{1}}$, we get ${{\bs \sigma} \in {S_{N_r}}}$. As a consequence, (\ref{eqn:out_single_side_corr_asy}) follows for $N_t\ge N_r$.
%\begin{multline}\label{eqn:out_corr_sin_asyexp_vand_red}
%{p_{out}^{\rm semi}} = \frac{{{{\left( {\frac{{N_r}}{\rho }} \right)}^{{N_t}{N_r}}}}}{{{N_r}!\det \left( {{{{\bf{R}}_r}^{N_t}}} \right)\prod\limits_{j = 1}^{N_r} {\left( {{N_r} - i} \right)!\left( {{N_t} - j} \right)!} }}\sum\limits_{{\bs \sigma_1} \in {\Omega _{N_r}}} {\sum\limits_{{\bs\sigma _2} \in {S_{N_r}}} {{\rm{sgn}}\left( {\bs \sigma_1} \right){\rm{sgn}}\left( {{\bs \sigma _2}} \right){g_{ {\bs \sigma_1}+{\bs\sigma _2} }}\left( R \right)} }  + o\left( {{\rho ^{ - {N_t}{N_r}}}} \right).
%\end{multline}
%In analogous to (\ref{eqn:F_g_fin}), (\ref{eqn:out_corr_sin_asyexp_vand_red}) can be simplified as (\ref{eqn:out_single_side_corr_asy}).

\subsection{Asymptotic expression of ${p_{out}^{\rm semi}}$ for $N_t < N_r$}
By putting (\ref{eqn:zeta_asy}) into \eqref{eqn:F_G_small_conf_con} together with \eqref{out_SC}, after some algebraic manipulations, the outage probability for $N_t < N_r$, i.e., $p_{out}^{{\rm{semi}}}$, can be simplified as
\begin{align}\label{eqn:psemi_small_asy}
p_{out}^{{\rm{semi}}} &= \frac{{{{\left( { - 1} \right)}^{{N_t}\left( {{N_r} - {N_t}} \right)}}{\rho ^{ - \frac{1}{2}{N_t}\left( {{N_t} + 1} \right)}}}}{{\Delta \left( {{{\bf{R}}_r}} \right)\prod\nolimits_{i = 1}^{{N_t}} { \left( N_t -i \right)!} }}\notag\\
&\times \sum\limits_{{\bs{\sigma }} \in {S_{{N_r}}}} {{\mathop{\rm sgn}} \left( {\bs{\sigma }} \right)\prod\limits_{i = {N_t} + 1}^{{N_r}} {{r_{{\sigma _i}}}^{i - {N_t} - 1}} \sum\limits_{{n_1}, \cdots ,{n_{{N_t}}} = 0}^\infty  {\prod\limits_{i = 1}^{{N_t}} {\frac{{{r_{{\sigma _i}}}^{{N_r} - {N_t} - {n_i} - 1}{{\left( { - \rho } \right)}^{ - {n_i}}}\Gamma \left( {i + {n_i}} \right)}}{{{n_i}!}}} } }\notag\\
&\times \frac{1}{{2\pi {\rm{i}}}}\int_{ - c - {\rm{i}}\infty }^{ - c + {\rm{i}}\infty } {\frac{{\Gamma \left( s \right)}}{{\Gamma \left( {1 + s} \right)}}\prod\limits_{i = 1}^{{N_t}} {\frac{{\Gamma \left( {s - i - {n_i}} \right)}}{{\Gamma \left( s \right)}}} {2^{Rs}}ds}  + o\left( {{\rho ^{ - {N_t}{N_r}}}} \right).
\end{align}
By expressing the integral in \eqref{eqn:psemi_small_asy} in terms of ${\mathfrak g_{{\bs \sigma }}}\left(x \right)$ and swapping the order of the summations, we get
\begin{align}\label{eqn:psemi_small_asy_sw}
&p_{out}^{{\rm{semi}}} = \frac{{{{\left( { - 1} \right)}^{{N_t}\left( {{N_r} - {N_t}} \right)}}{\rho ^{ - \frac{1}{2}{N_t}\left( {{N_t} + 1} \right)}}}}{{\Delta \left( {{{\bf{R}}_r}} \right)\prod\nolimits_{i = 1}^{{N_t}} { \left( N_t-i \right)!} }}\sum\limits_{{n_1}, \cdots ,{n_{{N_t}}} = 0}^\infty  {\frac{{{{\left( { - \rho } \right)}^{ - \sum\nolimits_{i = 1}^{{N_t}} {{n_i}} }}\prod\nolimits_{i = 1}^{{N_t}} {\Gamma \left( {i + {n_i}} \right)} }}{{\prod\nolimits_{i = 1}^{{N_t}} {{n_i}!} }}} {\mathfrak g_{{\bf{n}} - \tau {\bf{1}}}}\left( {{2^R}} \right)\notag\\
&\quad \times\sum\limits_{{\bs{\sigma }} \in {S_{{N_r}}}} {{\mathop{\rm sgn}} \left( {\bs{\sigma }} \right)\prod\limits_{i = 1}^{{N_t}} {{r_{{\sigma _i}}}^{{N_r} - {N_t} - {n_i} - 1}} \prod\limits_{i = {N_t} + 1}^{{N_r}} {{r_{{\sigma _i}}}^{i - {N_t} - 1}} }  + o\left( {{\rho ^{ - {N_t}{N_r}}}} \right)\notag\\
& = \frac{{{{\left( { - 1} \right)}^{{N_t}\left( {{N_r} - {N_t}} \right)}}{\rho ^{ - \frac{1}{2}{N_t}\left( {{N_t} + 1} \right)}}}}{{\Delta \left( {{{\bf{R}}_r}} \right)\prod\nolimits_{i = 1}^{{N_t}} { \left( N_t-i \right)!} }}\sum\limits_{{n_1}, \cdots ,{n_{{N_t}}} = 0}^\infty  {\frac{{{{\left( { - \rho } \right)}^{ - \sum\nolimits_{i = 1}^{{N_t}} {{n_i}} }}\prod\nolimits_{i = 1}^{{N_t}} {\Gamma \left( {i + {n_i}} \right)} }}{{\prod\nolimits_{i = 1}^{{N_t}} {{n_i}!} }}} \notag\\
&\quad \times \det \left( {\begin{array}{*{20}{l}}
{{{\left\{ {{r_j}^{{N_r} - {N_t} - {n_i} - 1}} \right\}}_{\scriptstyle1 \le i \le {N_t},\hfill\atop
\scriptstyle1 \le j \le {N_r}\hfill}}}\\
{{{\left\{ {{r_j}^{i - {N_t} - 1}} \right\}}_{\scriptstyle{N_t} + 1 \le i \le {N_r},\hfill\atop
\scriptstyle1 \le j \le {N_r}\hfill}}}
\end{array}} \right){\mathfrak g_{{\bf{n}} - \tau {\bf{1}}}}\left( {{2^R}} \right) + o\left( {{\rho ^{ - {N_t}{N_r}}}} \right),
%\notag\\
%& = \frac{{{\rho ^{ - \frac{1}{2}{N_t}\left( {{N_t} + 1} \right)}}}}{{\det \left( {{{\bf{R}}_r}^{{N_t}}} \right)\prod\nolimits_{i = 1}^{{N_t}} { \left( N_t-i \right)!} }}\notag\\
%&\quad\times \sum\limits_{{n_1}, \cdots ,{n_{{N_t}}} = 0}^\infty  {{\mathop{\rm sgn}} \left( {\bf{n}} \right){\frac{{{{\left( { - \rho } \right)}^{ - \sum\nolimits_{i = 1}^{{N_t}} {{n_i}} }}\prod\nolimits_{i = 1}^{{N_t}} {\Gamma \left( {i + {n_i}} \right)} }}{{\prod\nolimits_{i = 1}^{{N_t}} {{n_i}!} }}} {g_{{\bf{n}} - \tau {\bf{1}}}}\left( {{2^R}} \right)}  + o\left( {{\rho ^{ - {N_t}{N_r}}}} \right),
\end{align}
where ${\bf{n}} = \left( {{n_1}, \cdots ,{n_{{N_t}}}} \right)$ and the last step holds by using the Laplace expansion of the determinant. Similar to \eqref{eqn:out_corr_sin_asyexp_gpermu}, the index vector ${\bf n}$ for the dominant terms in \eqref{eqn:psemi_small_asy_sw} belongs to the set of the permutations of $\{{{N_r} - {N_t}, \cdots ,{N_r} - 1}\}$, i.e., ${\Theta  _{N_t}}$, so as to ensure the determinant non-zero. Considering that the summation term is either zero or the order of $\rho$ larger than $N_t N_r$ if ${\bf{n}} \notin  {\Theta _{N_t}}$. Hence, if ${\bf{n}} \in  {\Theta _{N_t}}$, we have
\begin{align}\label{eqn:out_semi_small_iden}
\det \left( {\begin{array}{*{20}{l}}
{{{\left\{ {{r_j}^{{N_r} - {N_t} - {n_i} - 1}} \right\}}_{\scriptstyle1 \le i \le {N_t},\hfill\atop
\scriptstyle1 \le j \le {N_r}\hfill}}}\\
{{{\left\{ {{r_j}^{i - {N_t} - 1}} \right\}}_{\scriptstyle{N_t} + 1 \le i \le {N_r},\hfill\atop
\scriptstyle1 \le j \le {N_r}\hfill}}}
\end{array}} \right) = {\left( { - 1} \right)^{{N_t}\left( {{N_r} - {N_t}} \right)}}\det \left( {\begin{array}{*{20}{l}}
{{{\left\{ {{r_j}^{{N_r} - {n_{{N_t} - i}} - {N_t} - 1}} \right\}}_{\scriptstyle1 \le i \le {N_t},\hfill\atop
\scriptstyle1 \le j \le {N_r}\hfill}}}\\
{{{\left\{ {{r_j}^{i - {N_t} - 1}} \right\}}_{\scriptstyle{N_t} + 1 \le i \le {N_r},\hfill\atop
\scriptstyle1 \le j \le {N_r}\hfill}}}
\end{array}} \right)\notag\\
 = {\left( { - 1} \right)^{{N_t}\left( {{N_r} - {N_t}} \right)}}{\mathop{\rm sgn}} \left( {\bf{n}} \right)\det \left( {{{\left\{ {{r_j}^{i - {N_t} - 1}} \right\}}_{\scriptstyle1 \le i \le {N_r}\hfill\atop
\scriptstyle1 \le j \le {N_r}\hfill}}} \right) = {\left( { - 1} \right)^{{N_t}\left( {{N_r} - {N_t}} \right)}}\frac{{{\mathop{\rm sgn}} \left( {\bf{n}} \right)\Delta \left( {{{\bf{R}}_r}} \right)}}{{\det \left( {{{\bf{R}}_r}^{{N_t}}} \right)}}.
\end{align}
By plugging \eqref{eqn:out_semi_small_iden} into \eqref{eqn:psemi_small_asy_sw}, it follows that
\begin{align}\label{eqn:p_semi_small_sub}
p_{out}^{{\rm{semi}}} &= \frac{{{\rho ^{ - {N_t}{N_r}}}}}{{\det \left( {{{\bf{R}}_r}^{{N_t}}} \right)\prod\nolimits_{i = 1}^{{N_t}} {\left( {{N_t} - i} \right)!\left( {{N_r} - i} \right)!} }}\notag\\
&\quad \times \sum\limits_{{\bf{n}} \in {\Theta _{{N_t}}}} {{\mathop{\rm sgn}} \left( {\bf{n}} \right)\prod\limits_{i = 1}^{{N_t}} {\Gamma \left( {i + {n_i}} \right)} {\mathfrak g_{{\bf{n}} - \tau {\bf{1}}}}\left( {{2^R}} \right)}  + o\left( {{\rho ^{ - {N_t}{N_r}}}} \right).
\end{align}
By redefining ${\bs{\sigma }} = {\bf{n}} - \tau {\bf{1}}$ in \eqref{eqn:p_semi_small_sub}, the outage probability of the semi-correlated Rayleigh MIMO channels for $N_t < N_r$ can be derived as \eqref{eqn:out_single_side_corr_asy} consequently.% \eqref{eqn:p_semi_small_sub} follows for $N_t\ge N_r$.

\section{Proof of \eqref{eqn:mellin_fullc_tricomi}}\label{app:full_mellin}
By substituting \eqref{eqn:eig_pdf_full} into \eqref{eqn:mellin_G}, the Mellin transform of $f_G(x)$ under full-correlated Rayleigh MIMO channels can be obtained as
\begin{align}\label{eqn:f_G_fullcorr_mt}
\varphi_{\rm full}\left( s \right) = &\int\nolimits_0^\infty  { \cdots \int\nolimits_0^\infty  {\prod\limits_{i = 1}^{N_r} {{{\left( {1 + \rho{\lambda _i}} \right)}^{s - 1}}} } } \notag\\
&\times \sum\limits_{{{\bf{k}}_{N_r}}} {\frac{{{{\left( { - 1} \right)}^{\frac{{{N_r}\left( {{N_r} - 1} \right)}}{2}}}{\cal A}}}{{{N_r}!\Delta \left( {\bf{K}} \right)}}\Delta \left( {\bs{\lambda }} \right)\det \left( {{{\left\{ {{\lambda _i}^{{k_j} + {N_t} - {N_r}}} \right\}}_{ {i,j} }}} \right)} d{\lambda _1} \cdots d{\lambda _{N_r}}.
\end{align}
By using the identity
\begin{equation}\label{eqn:delta_lambda}
\Delta \left( {\bs{\lambda }} \right) = \det \left( {{{\left\{ {{\lambda _i}^{j-1}} \right\}}_{{i,j}}}} \right) = \det \left( {{{\left\{ {{{\left( {\frac{{{\lambda _i}}}{{1 + \rho{\lambda _i}}}} \right)}^{j-1}}} \right\}}_{{i,j}}}} \right)\prod\limits_{i = 1}^{N_r} {{{\left( {1 + \rho{\lambda _i}} \right)}^{{N_r} - 1}}}
\end{equation}
together with the determinant expansion, (\ref{eqn:f_G_fullcorr_mt}) can be rewritten as
\begin{align}\label{eqn:mellin_fullc_rew}
\varphi_{\rm full}\left( s \right) = &\int\nolimits_0^\infty  { \cdots \int\nolimits_0^\infty  {\prod\limits_{i = 1}^{N_r} {{{\left( {1 + \rho{\lambda _i}} \right)}^{s + {N_r} - 2}}} } } \notag\\
&\times \sum\limits_{{{\bf{k}}_{N_r}}} {\frac{{{{\left( { - 1} \right)}^{\frac{{{N_r}\left( {{N_r} - 1} \right)}}{2}}}{\cal A}}}{{{N_r}!\Delta \left( {\bf{K}} \right)}}\sum\limits_{{{\bs{\sigma }}_1} \in {S_{N_r}}} {{\mathop{\rm sgn}} \left( {{{\bs{\sigma }}_1}} \right)\prod\limits_{i = 1}^{N_r} {{{\left( {\frac{{{\lambda _i}}}{{1 + \rho{ \lambda _i}}}} \right)}^{{\sigma _{1,i}}-1}}} } } \notag\\
&\times \sum\limits_{{{\bs{\sigma }}_2} \in {S_{N_r}}} {{\mathop{\rm sgn}} \left( {{{\bs{\sigma }}_2}} \right)\prod\limits_{i = 1}^{N_r} {{\lambda _i}^{{k_{{\sigma _{2,i}}}} + {N_t} - {N_r}}} } d{\lambda _1} \cdots d{\lambda _{N_r}}.
\end{align}
By interchanging the order of integration and summation in (\ref{eqn:mellin_fullc_rew}), it follows that
\begin{align}\label{eqn:mell_fullC_rea}
\varphi_{\rm full}\left( s \right) =& \sum\limits_{{{\bf{k}}_{N_r}}} {\frac{{{{\left( { - 1} \right)}^{\frac{{{N_r}\left( {{N_r} - 1} \right)}}{2}}}{\cal A}}}{{{N_r}!\Delta \left( {\bf{K}} \right)}}\sum\limits_{{{\bs{\sigma }}_1},{{\bs{\sigma }}_2} \in {S_{N_r}}} { {{\mathop{\rm sgn}} \left( {{{\bs{\sigma }}_1}} \right){\mathop{\rm sgn}} \left( {{{\bs{\sigma }}_2}} \right)} } } \notag\\
&\times \prod\limits_{i = 1}^{N_r} {\int\nolimits_0^\infty  {{{\left( {1 + \rho\lambda } \right)}^{s + N_r - {\sigma _{1,i}} - 1}}{\lambda ^{{k_{{\sigma _{2,i}}}} + {\sigma _{1,i}} + \tau}}d\lambda } }.
\end{align}
According to Lemma \ref{the:leb_for}, (\ref{eqn:mell_fullC_rea}) can be simplified as
\begin{align}\label{eqn:mellin_trans_full_leibniz}
\varphi_{\rm full}\left( s \right) &= \sum\limits_{{{\bf{k}}_{N_r}}} {\frac{{{{\left( { - 1} \right)}^{\frac{{{N_r}\left( {{N_r} - 1} \right)}}{2}}}{\cal A}}}{{\Delta \left( {\bf{K}} \right)}}\sum\limits_{{\bs{\sigma }} \in {S_{N_r}}} {{\mathop{\rm sgn}} \left( {\bs{\sigma }} \right)\prod\limits_{i = 1}^{N_r} {\int\nolimits_0^\infty  {{{\left( {1 + \rho\lambda } \right)}^{s +N_r-i - 1}}{\lambda ^{{k_{{\sigma _i}}} + i + {\tau}}}d\lambda } } } } \notag\\
 &= \sum\limits_{{{\bf{k}}_{N_r}}} {\frac{{{{\left( { - 1} \right)}^{\frac{{{N_r}\left( {{N_r} - 1} \right)}}{2}}}{\cal A}}}{{\Delta \left( {\bf{K}} \right)}}\det \left( {{{\left\{ {\int\nolimits_0^\infty  {{{\left( {1 + \rho\lambda } \right)}^{s + N_r - i - 1}}{\lambda ^{{k_j} + i + {\tau}}}d\lambda } } \right\}}_{{i,j}}}} \right)},
\end{align}
where the second equality holds by using the Leibniz formula for the determinant expansion \cite{horn2012matrix}. By the change of variable $x = 1/{{(1 + \rho \lambda )}}$, the integral in \eqref{eqn:mellin_trans_full_leibniz} can be expressed in terms of Beta function ${\rm B}(\alpha,\beta)$ as
\begin{equation}\label{eqn:beta_fun}
  \int_0^\infty  {{{\left( {1 + \rho \lambda } \right)}^{s + {N_r} - i - 1}}{\lambda ^{{k_j} + i + \tau }}d\lambda } = {\rho ^{ - {k_j} - i - \tau  - 1}}{\rm{B}}\left( { - s - {N_r} - {k_j} - \tau ,{k_j} + i + \tau  + 1} \right).
\end{equation}
By using the relationship between beta function and Gamma function as ${\rm B}(\alpha,\beta) = \Gamma(\alpha)\Gamma(\beta)/\Gamma(\alpha+\beta)$, the following identity holds
\begin{align}\label{eqn:beta_fundeter}
&\det \left( {{{\left\{ {\int\nolimits_0^\infty  {{{\left( {1 + \rho\lambda } \right)}^{s + N_r - i - 1}}{\lambda ^{{k_j} + i + {\tau}}}d\lambda } } \right\}}_{{i,j}}}} \right) \notag\\
&= \prod\limits_{i = 1}^{{N_r}} {\frac{{{\rho ^{ - {k_i} - i - \tau  - 1}}\Gamma \left( { - s - {N_r} - {k_i} - \tau } \right)\Gamma \left( {{k_i} + \tau  + 2} \right)}}{{\Gamma \left( { - s - {N_r} + i + 1} \right)}}} \det \left( {{{\left\{ {{{\left( {{k_j}  + \tau  + 2} \right)}_{i - 1}}} \right\}}_{i,j}}} \right).%\notag\\
%&= \prod\limits_{i = 1}^{{N_r}} {\frac{{{\rho ^{ - {k_i} - i - \tau  - 1}}\Gamma \left( { - s - {N_r} - {k_i} - \tau } \right)\Gamma \left( {{k_i} + \tau  + 2} \right)}}{{\Gamma \left( { - s - {N_r} + i + 1} \right)}}} \Delta \left( {\bf{K}} \right).
\end{align}
Notice that $\det \left( {{{\left\{ {{{\left( {{k_j} + \tau  + 2} \right)}_{i - 1}}} \right\}}_{i,j}}} \right) = \Delta \left( {\bf{K}} \right)$, putting \eqref{eqn:beta_fundeter} into \eqref{eqn:mellin_trans_full_leibniz} leads to
\begin{equation}\label{eqn:gamma_vect_varphi_semi_full}
\varphi_{\rm full}\left( s \right) = \sum\limits_{{{\bf{k}}_{{N_r}}}} {{{\left( { - 1} \right)}^{\frac{{{N_r}\left( {{N_r} - 1} \right)}}{2}}}\mathcal A\prod\limits_{i = 1}^{{N_r}} {\frac{{{\rho ^{ - {k_i} - i - \tau  - 1}}\Gamma \left( { - s - {N_r} - {k_i} - \tau } \right)\Gamma \left( {{k_i} + \tau  + 2} \right)}}{{\Gamma \left( { - s - {N_r} + i + 1} \right)}}} }.
\end{equation}

According to the definition of $\mathcal A$, \eqref{eqn:gamma_vect_varphi_semi_full} can be further simplified by using the generalized Cauchy-Binet formula \cite[Lemma 4]{ghaderipoor2012application} as
\begin{align}\label{eqn:mel_fc_defa_sub}
\varphi_{\rm full}\left( s \right) %=& \frac{{{{{{\rho }} }^{-\frac{1}{2}{{{N_r}\left( {{N_r} - 1} \right)}}}}\prod\nolimits_{i = 1}^{N_r} {{a_i}^{N_t}} \prod\nolimits_{j = 1}^{N_t} {{b_j}^{N_r}} }}{{\Delta \left( {\bf{A}} \right)\Delta \left( {\bf{B}} \right)\prod\nolimits_{j = 1}^{N_r} {{{\left( {s + j - 2} \right)}^{j - 1}}} }} \sum\limits_{{{\bf{k}}_{N_r}}} {\det \left( {{{\left\{ {{{\left( { - {a_i}} \right)}^{{k_j}}}} \right\}}_{ {i,j} }}} \right)}\notag\\
%&\times  {\det \left( {{{\left\{ {{b_i}^{{k_j} + {N_t} - {N_r}}} \right\}}_{{i,1 \le j \le {N_r}} }},{{\left\{ {{b_i}^{{N_t} - j}} \right\}}_{{i,{N_r} + 1 \le j \le {N_t}} }}} \right)} \notag\\
%&\times \prod\limits_{j = 1}^{N_r} {\frac{{\int\nolimits_0^\infty  {{{\left( {1 + \rho\lambda } \right)}^{s + {N_r} - 2}}{\lambda ^{{k_j} + \tau+1}}d\lambda } }}{{\left( {{k_j} + {N_t} - {N_r}} \right)!}}}\notag\\
& = {\left( { - 1} \right)^{\frac{{{N_r}\left( {{N_r} - 1} \right)}}{2}}}\frac{{\prod\nolimits_{i = 1}^{{N_r}} {{a_i}^{{N_t}}} \prod\nolimits_{j = 1}^{{N_t}} {{b_j}^{{N_r}}} }}{{\Delta \left( {\bf{A}} \right)\Delta \left( {\bf{B}} \right)}}\prod\limits_{i = 1}^{{N_r}} {\frac{{{\rho ^{ - i}}}}{{\Gamma \left( { - s - {N_r} + i + 1} \right)}}}\notag \\
&\quad \times\sum\limits_{{{\bf{k}}_{{N_r}}}} {\det \left( {{{\left\{ {{{\left( { - {a_i}} \right)}^{{k_j}}}} \right\}}_{i,j}}} \right)\det \left( {{{\left\{ {{b_i}^{{k_j} + N - M}} \right\}}_{i,1 \le j \le M}},{{\left\{ {{b_i}^{N - j}} \right\}}_{i,M + 1 \le j \le N}}} \right) }\notag\\
&\quad\times \prod\limits_{i = 1}^{{N_r}} {{\rho ^{ - \left( {{k_i} + {N_t} - {N_r}} \right)}}\Gamma \left( { - s - {N_r} - \left( {{k_i} + {N_t} - {N_r}} \right) + 1} \right)}\notag\\
& = \frac{{{{\left( { - 1} \right)}^{{N_r}\left( {{N_t} - {N_r}} \right)}}{\rho ^{ - \frac{1}{2}{N_r}\left( {{N_r} + 1} \right)}}\prod\nolimits_{i = 1}^{{N_r}} {{a_i}^{{N_r}}} \prod\nolimits_{j = 1}^{{N_t}} {{b_j}^{{N_r}}} }}{{\Delta \left( {\bf{A}} \right)\Delta \left( {\bf{B}} \right)\prod\nolimits_{i = 1}^{{N_r}} {{{\left( {s + i - 2} \right)}^{i - 1}}} }}\notag\\
&\quad \times \det \left( \begin{array}{l}
{\left\{ {\sum\limits_{k = 0}^\infty  {\frac{{\Gamma \left( { - s - {N_r} - k + 1} \right)}}{{\Gamma \left( { - s - {N_r} + 2} \right)}}{{\left( {-\frac{{{a_i}{b_j}}}{\rho }} \right)}^k}} } \right\}_{1 \le i \le {N_r},j}}\\
{\left\{ {{b_j}^{{N_t} - i}} \right\}_{{N_r} + 1 \le i \le {N_t},j}}
\end{array} \right).
\end{align}
By using $\mathrm B\left( {\alpha ,\beta } \right) = \int_0^1 {{x^{\alpha  - 1}}{{\left( {1 - x} \right)}^{\beta  - 1}}dx} $, the infinite series inside the determinant in \eqref{eqn:mel_fc_defa_sub} can be rewritten as
\begin{align}\label{eqn:inte_tric}
\sum\limits_{k = 0}^\infty  {\frac{{\Gamma \left( { - s - {N_r} - k + 1} \right)}}{{\Gamma \left( { - s - {N_r} + 2} \right)}}{{\left( { - \frac{{{a_i}{b_j}}}{\rho }} \right)}^k}} &= \sum\limits_{k = 0}^\infty  {{\rm{B}}\left( { - s - {N_r} - k + 1,k + 1} \right)\frac{1}{{k!}}} {\left( { - \frac{{{a_i}{b_j}}}{\rho }} \right)^k}\notag\\
 &=\int_0^\infty  {{{\left( {1 + y} \right)}^{s + {N_r} - 2}}{e^{ - \frac{{{a_i}{b_j}}}{\rho }y}}dy}\notag\\
 &=\Psi \left( {1,s + {N_r};\frac{{{a_i}{b_j}}}{\rho }} \right).
\end{align}
where the last equality holds by using the integral representation of the Tricomi's confluent hypergeometric function.
%
%
%\begin{align}\label{eqn:varphi_full_simp_fin}
%\varphi_{\rm full}\left( s \right)&=\frac{{{{{{\rho }}}^{-\frac{1}{2}{{{N_r}\left( {{N_r} - 1} \right)}}}}{{\left( { - 1} \right)}^{{N_r}\left( {{N_t} - {N_r}} \right)}}\prod\nolimits_{i = 1}^{N_r} {{a_i}^{N_r}} \prod\nolimits_{j = 1}^{N_t} {{b_j}^{N_r}} }}{{\Delta \left( {\bf{A}} \right)\Delta \left( {\bf{B}} \right)\prod\nolimits_{j = 1}^{N_r} {{{\left( {s + j - 2} \right)}^{j - 1}}} }}\notag\\
%&\quad\times \det \left( {\begin{array}{*{20}{c}}
%{{{\left\{ {\int\nolimits_0^\infty  {{{\left( {1 + \rho\lambda } \right)}^{s + {N_r} - 2}}{e^{ - \lambda {a_i}{b_j}}}d\lambda } } \right\}}_{1 \le i \le {N_r},j}}}\\
%{{{\left\{ {{b_j}^{{N_t} - i}} \right\}}_{{N_r} + 1 \le i \le {N_t},j}}}
%\end{array}} \right).
%\end{align}
By substituting \eqref{eqn:inte_tric} into (\ref{eqn:mel_fc_defa_sub}), we finally arrive at \eqref{eqn:mellin_fullc_tricomi}. %can further be expressed in terms of Tricomi's confluent hypergeometric function as

\section{Proof of Theorem \ref{the:full_cdf}}\label{app:full_cdf}
By applying the determinant expansion to \eqref{eqn:mellin_fullc_tricomi}, we get
\begin{align}\label{eqn:cdf_G_fullc}
&{F_G}\left( x \right) %= \frac{{{{\left( { - 1} \right)}^{{N_r}\left( {{N_t} - {N_r}} \right)}}{{{{\rho }} }^{-\frac{1}{2}{{{N_r}\left( {{N_r} + 1} \right)}}}}\prod\nolimits_{i = 1}^{N_r} {{a_i}^{N_r}} \prod\nolimits_{j = 1}^{N_t} {{b_j}^{N_r}} }}{{\Delta \left( {\bf{A}} \right)\Delta \left( {\bf{B}} \right)}}\sum\limits_{{\bs{\sigma }} \in {S_{N_t}}} {{\mathop{\rm sgn}} \left( {\bs{\sigma }} \right){\prod\limits_{i = {N_r} + 1}^{N_t} {{b_{{\sigma _i}}}^{{N_t} - i}} }}  \notag\\
%& \times \frac{1}{{2\pi {\rm{i}}}}\int\nolimits_{c - {\rm i}\infty }^{c + {\rm i}\infty } {\frac{{{x^{ - s}}}}{{ - s}}{\prod\limits_{i = 1}^{N_r} {\frac{1}{{{{\left( {s + i - 1} \right)}^{i - 1}}}}\Psi \left( {1,s + 1 + {N_r};\frac{{{a_i}{b_{{\sigma _i}}}}}{\rho }} \right)} } ds}\notag \\
 = \frac{{{{\left( { - 1} \right)}^{{N_r}\left( {{N_t} - {N_r}} \right) + \frac{1}{2}{{{N_r}\left( {{N_r} - 1} \right)}}}}{\rho^{-\frac{1}{2}{{{N_r}\left( {{N_r} + 1} \right)}}}}\prod\nolimits_{i = 1}^{N_r} {{a_i}^{N_r}} \prod\nolimits_{j = 1}^{N_t} {{b_j}^{N_r}} }}{{\Delta \left( {\bf{A}} \right)\Delta \left( {\bf{B}} \right)}}\sum\limits_{{\bs{\sigma }} \in {S_{N_t}}} {{\mathop{\rm sgn}} \left( {\bs{\sigma }} \right){\prod\limits_{i = {N_r} + 1}^{N_t} {{b_{{\sigma _i}}}^{{N_t} - i}} }} \notag\\
&\times\frac{1}{{2\pi {\rm{i}}}}\int\nolimits_{c - {\rm i}\infty }^{c + {\rm i}\infty } {\frac{{\Gamma \left( { - s} \right)}}{{\Gamma \left( {1 - s} \right)}}\prod\limits_{j = 1}^{{N_r}-1} {\frac{{\Gamma \left( { - s - {N_r} + 1} \right)}}{{\Gamma \left( { - s - j + 1} \right)}}\prod\limits_{i = 1}^{N_r}\Psi \left( {1,s + 1 + {N_r};\frac{{{a_{{i}}}{b_{\sigma _i}}}}{\rho }} \right){x^{ - s}}} ds}.
\end{align}
By using the definition of $\Xi \left( {a,\alpha ,A,\varphi } \right) = {A^{\varphi  + a + \alpha s - 1}}\Psi \left( {\varphi ,\varphi  + a + \alpha s;A} \right)$, (\ref{eqn:cdf_G_fullc}) can be rewritten as
\begin{multline}\label{eqn:cdf_G_fullc_intro_xi}
{F_G}\left( x \right)
 = \frac{{{{\left( { - 1} \right)}^{{N_r}\left( {{N_t} - {N_r}} \right) + \frac{1}{2}{{{N_r}\left( {{N_r} - 1} \right)}}}}}{{{{\rho }}}^{\frac{1}{2}{{{N_r}\left( {{N_r} - 1} \right)}}}}}{{\Delta \left( {\bf{A}} \right)\Delta \left( {\bf{B}} \right)}}\sum\limits_{{\bs{\sigma }} \in {S_{N_t}}} {{\mathop{\rm sgn}} \left( {\bs{\sigma }} \right)\prod\limits_{i = {N_r} + 1}^{N_t} {{b_{{\sigma _i}}}^{{N_t} + {N_r} - i}} }   \\
\times \frac{1}{{2\pi {\rm{i}}}}\int\nolimits_{c - {\rm i}\infty }^{c + {\rm i}\infty } {\frac{{\Xi \left( {0, - 1,0,1} \right)}}{{\Xi \left( {1, - 1,0,1} \right)}}\prod\limits_{j = 1}^{{N_r}-1} {\frac{{\Xi \left( { - {N_r} + 1, - 1,0,1} \right)}}{{\Xi \left( { - j + 1, - 1,0,1} \right)}}\prod\limits_{i = 1}^{N_r} {\Xi \left(  {{N_r},1,\frac{{{a_i}{b_{{\sigma _i}}}}}{\rho },1} \right)} } } \\
\times{{\left( {x {\prod\limits_{i = 1}^{N_r} {\frac{{{a_i}{b_{{\sigma _i}}}}}{\rho }} } } \right)}^{ - s}}ds.
\end{multline}
Accordingly, ${F_G}\left( x \right)$ can be expressed in terms of the generalized Fox's H function as \eqref{eqn:F_G_CDF_fullcorr}.

\section{Proof of Lemma \ref{the:asy_y_3}}\label{app:asy_y_3}
Applying property 2 in \cite{shi2017asymptotic} to (\ref{eqn:F_G_CDF_fullcorr}) gives rise to
\begin{align}\label{eqn:F_G_CDF_fullcora_p2}
&{\mathcal Y_{ {{\bs\sigma}}}^{(3)}(x)} \notag\\
&=Y_{2{N_r},{N_r}}^{{N_r},{N_r}}\left[ {\left. {\begin{array}{*{20}{c}}
{{{\left( {1 - {N_r},1,\frac{{{a_{{i}}}{b_{\sigma_i}}}}{\rho },1} \right)}_{i = 1, \cdots ,{N_r}}},\left( {1,1,0,1} \right),{{\left( {1 - j,1,0,1} \right)}_{j = 1, \cdots ,{N_r}-1}}}\\
{\left( {0,1,0,1} \right),{{\left( {2 - {N_r},1,0,1} \right)}_{j = 1, \cdots ,{N_r}-1}}}
\end{array}} \right|\frac{x^{-1}}{{{{ \prod\nolimits_{i = 1}^{N_r} {\frac{{{a_i}{b_{{\sigma _i}}}}}{\rho }} } }}}} \right]\notag\\
&={{{{\rho }}}^{-{N_r}^2}}\prod\limits_{i = 1}^{N_r} {{a_i}^{N_r}}{{b_{\sigma_i}}^{N_r}}\notag\\
&\quad \times\frac{1}{{2\pi {\rm{i}}}}\int\nolimits_{-c - {\rm i}\infty }^{-c + {\rm i}\infty } {\frac{{\Gamma \left( s \right)}}{{\Gamma \left( {1 + s} \right)}}\prod\limits_{j = 1}^{{N_r}-1}{\frac{{\Gamma \left( {s - {N_r} + 1} \right)}}{{\Gamma \left( {s - j + 1} \right)}}}\prod\limits_{i = 1}^{N_r} {\Psi \left( {1,1 + {N_r} - s;{\frac{{{a_i}{b_{{\sigma _i}}}}}{\rho }}} \right)}{x^s} ds}.
\end{align}
By using \cite[eq.(9.210.2)]{gradshteyn1965table}, we have
\begin{align}\label{eqn:F_G_full_gd}
{\mathcal Y_{ {{\bs\sigma}}}^{(3)}(x)} =& {{{{\rho }}}^{-{N_r}^2}}\prod\limits_{i = 1}^{N_r} {{a_i}^{N_r}}{{b_{\sigma_i}}^{N_r}} \frac{1}{{2\pi {\rm{i}}}}\int\nolimits_{-c - {\rm i}\infty }^{-c + {\rm i}\infty } {\frac{{\Gamma \left( s \right)}}{{\Gamma \left( {1 + s} \right)}}\prod\limits_{j = 1}^{{N_r}-1}{\frac{{\Gamma \left( {s - {N_r} + 1} \right)}}{{\Gamma \left( {s - j + 1} \right)}}}} \notag\\
&\times \prod\limits_{i = 1}^{N_r} {\left( \begin{array}{l}
\frac{{\Gamma \left( {s - {N_r}} \right)}}{{\Gamma \left( {s - {N_r} + 1} \right)}}{}_1{F_1}\left( {1,1 + {N_r} - s;\frac{{{a_{{ i}}}{b_{\sigma_i}}}}{\rho }} \right)
 + \\
 \Gamma \left( {{N_r} - s} \right){\left( {\frac{{{a_{{ i}}}{b_{\sigma_i}}}}{\rho }} \right)^{s - {N_r}}}{}_1{F_1}\left( {s - {N_r} + 1,s - {N_r} + 1;\frac{{{a_{{ i}}}{b_{\sigma_i}}}}{\rho }} \right)
\end{array} \right){x^s}ds}.
\end{align}

Similar to \eqref{eqn:zeta_grad_rew_cset}, we set $c < -{N_t}{N_r}+\frac{1}{2}{{N_r}({N_r}-1)}$. Thus ignoring the higher order terms $o\left( {{\rho ^{ - {N_t}{N_r}-\frac{1}{2}N_r(N_r+1)}}} \right)$ in \eqref{eqn:F_G_full_gd} yields
\begin{align}\label{eqn:Cdf_G_asb_fullc}
{\mathcal Y_{ {{\bs\sigma}}}^{(3)}(x)} =& {{{{\rho }}}^{-{N_r}^2}}\prod\limits_{i = 1}^{N_r} {{a_i}^{N_r}}{{b_{\sigma_i}}^{N_r}}\notag\\
&\times \frac{1}{{2\pi {\rm{i}}}}\int\nolimits_{-c - {\rm i}\infty }^{-c + {\rm i}\infty } {\frac{{\Gamma \left( s \right)}}{{\Gamma \left( {1 + s} \right)}}\prod\limits_{i = 1}^{N_r} {\frac{{\Gamma \left( {s - {N_r}} \right)}}{{\Gamma \left( {s - i + 1} \right)}}}  {{}_1{F_1}\left( {1,1 + {N_r} - s;\frac{{{a_{{ i}}}{b_{\sigma_i}}}}{\rho }} \right){x^s}ds} } \notag\\
 &+ o\left( {{\rho ^{ - {N_t}{N_r}-\frac{1}{2}N_r(N_r+1)}}} \right).
\end{align}
With the series expansion of ${}_1{F_1}\left( {\alpha ,\beta ;x} \right)$, \eqref{eqn:mellin_cdf_g_fullcsers_exp} follows.

\section{Proof of Theorem \ref{the:asy_out_full}}\label{app:asy_out_full}
Substituting \eqref{eqn:mellin_cdf_g_fullcsers_exp} into \eqref{eqn:out_excfull} and then swapping the orders of integration and summation produces
\begin{align}\label{eqn:cdfg_fullc_interch}
p_{out}^{\rm full} =& \frac{{{\left( { - 1} \right)}^{{N_r}\left( {{N_t} - {N_r}} \right) + \frac{1}{2}{{{N_r}\left( {{N_r} - 1} \right)}}}}{{{{{\rho }}}^{-\frac{1}{2}{{{N_r}\left( {{N_r} + 1} \right)}}}}\prod\nolimits_{i = 1}^{N_r} {{a_i}^{N_r}} \prod\nolimits_{j = 1}^{N_t} {{b_j}^{N_r}} }}{{\Delta \left( {\bf{A}} \right)\Delta \left( {\bf{B}} \right)}}\notag\\
&\times \sum\limits_{{n_1}, \cdots ,{n_{N_r}} = 0}^\infty  {\sum\limits_{{\bs{\sigma }} \in {S_{N_t}}} {{\mathop{\rm sgn}} \left( {\bs{\sigma }} \right){\prod\limits_{i = {N_r} + 1}^{N_t} {{b_{{\sigma _i}}}^{{N_t} - i}} }} } \notag\\
&\times \frac{1}{{2\pi {\rm{i}}}}\int\nolimits_{-c - {\rm i}\infty }^{-c + {\rm i}\infty } {\frac{{\Gamma \left( s \right)}}{{\Gamma \left( {1 + s} \right)}}\prod\limits_{i = 1}^{N_r} {\frac{{\Gamma \left( {s - {N_r}} \right)}}{{\Gamma \left( {s - i + 1} \right)}}}  {\frac{{{{\left( {\frac{{{a_{{ i}}}{b_{\sigma_i}}}}{\rho }} \right)}^{{n_{i}}}}}}{{{{\left( {1 + {N_r} - s} \right)}_{{n_{i}}}}}}} {2^{Rs}}ds}
 + o\left( {{\rho ^{ - {N_t}{N_r}}}} \right).
\end{align}
After some basic algebraic manipulations, (\ref{eqn:cdfg_fullc_interch}) can be further derived as
\begin{align}\label{eqn:F_G_mellin_trans_meijerg}
p_{out}^{\rm full} =& \frac{{{\left( { - 1} \right)}^{{N_r}\left( {{N_t} - {N_r}} \right) + \frac{1}{2}{{{N_r}\left( {{N_r} - 1} \right)}}}}{{{{{\rho }}}^{-\frac{1}{2}{{{N_r}\left( {{N_r} + 1} \right)}}}}\prod\nolimits_{i = 1}^{N_r} {{a_i}^{N_r}} \prod\nolimits_{j = 1}^{N_t} {{b_j}^{N_r}} }}{{\Delta \left( {\bf{A}} \right)\Delta \left( {\bf{B}} \right)}}\notag\\
 &\times \sum\limits_{{n_1}, \cdots ,{n_{N_r}} = 0}^\infty  {{\frac{1}{{2\pi {\rm{i}}}}\int\nolimits_{-c - {\rm i}\infty }^{-c + {\rm i}\infty } {\frac{{\Gamma \left( s \right)}}{{\Gamma \left( {1 + s} \right)}}\prod\limits_{i = 1}^{N_r} {\frac{{\Gamma \left( {s - {N_r} - {n_i}} \right)}}{{\Gamma \left( {s - i + 1} \right)}}}  {2^{Rs}}ds} }}\notag\\ %
% \times \sum\limits_{{n_1}, \cdots ,{n_{N_r}} = 0}^\infty  {G_{2{N_r},2{N_r}}^{0,2{N_r}}\left( {\left. {\begin{array}{*{20}{c}}
%{1,1, \cdots ,{N_r} - 1,1 + {N_r} + {n_{\rm{1}}}, \cdots ,1 + {N_r} + {n_{N_r}}}\\
%{0,{N_r}, \cdots ,{N_r}}
%\end{array}} \right|x} \right)} \notag\\
 &\times \sum\limits_{{\bs{\sigma }} \in {S_{N_t}}} {{\mathop{\rm sgn}} \left( {\bs{\sigma }} \right)\prod\limits_{i = 1}^{N_r} {{{\left( { - \frac{{{a_{{ i}}}{b_{\sigma_i}}}}{\rho }} \right)}^{{n_{i}}}}} \prod\limits_{i = {N_r} + 1}^{N_t} {{b_{\sigma_i}}^{{N_t} - { i}}} }  + o\left( {{\rho ^{ - {N_t}{N_r}}}} \right).
\end{align}
%{{G_{{N_r} + 1,{N_r} + 1}^{0,{N_r} + 1}\left( {\left. {\begin{array}{*{20}{c}}
%{1,1 + {N_r} + {n_{\rm{1}}}, \cdots ,1 + {N_r} + {n_{N_r}}}\\
%{0,1, \cdots ,{N_r}}
%\end{array}} \right|x} \right)}}
By expressing the contour integral in \eqref{eqn:F_G_mellin_trans_meijerg} in terms of Meijer G-function and using the determinant expansion, $p_{out}^{\rm full}$ is given by
\begin{align}\label{eqn:F_G_mellin_trans_det}
p_{out}^{\rm full} =& \frac{{{{ {{\rho }} }^{-\frac{1}{2}{{{N_r}\left( {{N_r} + 1} \right)}}}}{{\left( { - 1} \right)}^{{N_r}\left( {{N_t} - {N_r}} \right) + \frac{1}{2}{{{N_r}\left( {{N_r} - 1} \right)}}}}\prod\nolimits_{i = 1}^{N_r} {{a_i}^{N_r}} \prod\nolimits_{j = 1}^{N_t} {{b_j}^{N_r}} }}{{\Delta \left( {\bf{A}} \right)\Delta \left( {\bf{B}} \right)}} \notag\\
 &\times \sum\limits_{{\bf{n}} \in {\mathbb {N} ^{N_r}}} {G_{{N_r} + 1,{N_r} + 1}^{0,{N_r} + 1}\left( {\left. {\begin{array}{*{20}{c}}
{1,1 + {N_r} + {n_{\rm{1}}}, \cdots ,1 + {N_r} + {n_{N_r}}}\\
{0,1, \cdots ,{N_r}}
\end{array}} \right|2^R} \right)} \notag\\
 & \times {\prod\limits_{i = 1}^{N_r} {{{\left( { - \frac{{{a_i}}}{\rho }} \right)}^{{n_i}}}} }\det \left( {\begin{array}{*{20}{c}}
{{{\left\{ {{b_j}^{{n_i}}} \right\}}_{1 \le i \le {N_r},j}}}\\
{{{\left\{ {{b_j}^{{N_t} - i}} \right\}}_{{N_r} + 1 \le i \le {N_t},j}}}
\end{array}} \right) + o\left( {{\rho ^{ - {N_t}{N_r}}}} \right),
\end{align}
where ${\bf n} = (n_1,\cdots,n_{N_r})$. Notice that any term with $n_i=0,\cdots,{N_t}-{N_r}-1$ is equal to zero thanks to the basic property of determinant, the dominant terms with $\bf n$ belonging to the set of the permutations of ${\Omega _{N_r}} = \left\{ {{N_t} - {N_r}, \cdots ,{N_t} - 1} \right\}$ can produce non-zero determinants. Ignoring the terms with both zero value of the determinant and the order of $\rho$ larger than $N_tN_r$, $p_{out}^{\rm full}$ can be asymptotically expanded as
\begin{align}\label{eqn:F_G_mellin_trans_detzero}
p_{out}^{\rm full} =&
%\frac{{{{\left( {\frac{{N_r}}{\rho }} \right)}^{\frac{{{N_r}\left( {{N_r} + 1} \right)}}{2} + \frac{{{N_r}\left( {2{N_t} - {N_r} - 1} \right)}}{2}}}{{\left( { - 1} \right)}^{{N_r}\left( {{N_t} - {N_r}} \right) + \frac{{{N_r}\left( {2{N_t} - {N_r} - 1} \right)}}{2} + \frac{{{N_r}\left( {{N_r} - 1} \right)}}{2}}}\prod\limits_{i = 1}^{N_r} {{a_i}^{N_r}} \prod\limits_{j = 1}^{N_t} {{b_j}^{N_r}} }}{{\Delta \left( {\bf{x}} \right)\Delta \left( {\bf{y}} \right)}}\notag\\
%& \times G_{2{N_r},2{N_r}}^{0,2{N_r}}\left( {\left. {\begin{array}{*{20}{c}}
%{1,1, \cdots ,{N_r} - 1,{N_t} + 1, \cdots ,{N_t} + {N_r}}\\
%{0,{N_r}, \cdots ,{N_r}}
%\end{array}} \right|x} \right)\notag\\
 \frac{{{{{{\rho }} }^{-{N_t}{N_r}}}{\left( { - 1} \right)}^{\frac{1}{2}{{{N_r}\left( {{N_r} - 1} \right)}}}\prod\nolimits_{i = 1}^{N_r} {{a_i}^{N_r}} \prod\nolimits_{j = 1}^{N_t} {{b_j}^{N_r}} }}{{\Delta \left( {\bf{A}} \right)}}G_{{N_r} + 1,{N_r} + 1}^{0,{N_r} + 1}\left( {\left. {\begin{array}{*{20}{c}}
{1,{N_t} + 1, \cdots ,{N_t} + {N_r}}\\
{0,1, \cdots ,{N_r}}
\end{array}} \right|2^R} \right)\notag\\
&\times \sum\limits_{{\bf{n}} \in {\Omega _{N_r}}} {{\mathop{\rm sgn}} \left( {{\bf n}} \right)\prod\limits_{i = 1}^{N_r} {{a_i}^{{n_i}}} }  + o\left( {{\rho ^{ - {N_t}{N_r}}}} \right),
\end{align}
where the equality holds by using the following identity
\begin{equation}\label{eqn:det_full_sim}
\det \left( {\begin{array}{*{20}{c}}
{{{\left\{ {{b_j}^{{n_i}}} \right\}}_{1 \le i \le {N_r},j}}}\\
{{{\left\{ {{b_j}^{{N_t} - i}} \right\}}_{{N_r} + 1 \le i \le {N_t},j}}}
\end{array}} \right) = {\mathop{\rm sgn}} \left( {{\bf n}} \right) (-1)^{N_r(N_t-N_r)}{\Delta \left( {\bf{B}} \right)}.
\end{equation}
and the Meijer-G function can be extracted from the summation as a common factor due to the fact that its value is independent of the order of the elements of $\bf n$. Since the following equality holds
\begin{equation}\label{eqn:detea_identi}
\sum\limits_{{\bf{n}} \in {\Omega _{N_r}}} {{\mathop{\rm sgn}} \left( {{\bf n}} \right)\prod\limits_{i = 1}^{N_r} {{a_i}^{{n_i}}} }  = \det \left( {{{\left\{ {{a_i}^{{N_t} - j}} \right\}}_{{i,j}}}} \right) = {\left( { - 1} \right)}^{\frac{1}{2}{{{N_r}\left( {{N_r} - 1} \right)}}}\prod\limits_{i = 1}^{N_r} {{a_i}^{{N_t} - {N_r}}} \Delta \left( {\bf{A}} \right),
\end{equation}
the asymptotic expression of $p_{out}^{\rm full}$ can be finally derived as \eqref{eqn:F_G_mellin_trans_detzerofina}.

\section{Proof of Remark \ref{the:rem_g}}\label{app:rem_g}
By using the singular value decomposition of $\bf H$ as $\mathbf{H}=\mathbf{U}{\bs \Sigma} {\mathbf{V}^{\rm H}}$ and the Jacobian of the coordinate change, the joint PDF of the unordered strictly positive eigenvalues of $\mathbf{H}\mathbf{H}^{\rm H}$ for full-correlated Rayleigh MIMO channels can also be written as \cite[eq.(26)]{simon2006capacity}, \cite[eq.(6)]{ghaderipoor2012application}, \cite{zheng2002communication}
\begin{align}\label{eqn:lambda_joint}
{f_{\bs \lambda} }\left( {{\lambda _1}, \cdots ,{\lambda _{N_r}}} \right) = \frac{{{{\left( {\Delta \left( {\bs{\Lambda }} \right)} \right)}^2}\prod\nolimits_{j = 1}^{N_r} {{\lambda _j}^{{N_t} - {N_r}}} }}{{N_r}!{\prod\nolimits_{i = 1}^{N_r} {\left( {{N_r} - i} \right)!\left( {{N_t} - i} \right)!} }}\int {D{\bf{V}}\int {D{\bf{U}}p{\left(\mathbf{H}\right)}} },
\end{align}
where $\bs \Sigma = {\rm diag}(\sqrt{\lambda_1},\cdots,\sqrt{\lambda_{N_r}})$, $\mathbf{U} \in \mathcal U(N_t)$ and $\mathbf{V} \in \mathcal U(N_r)$ are unitary matrices, $D{\bf{U}}$ and $D{\bf{V}}$ represent the standard Haar integration measures of $\mathcal U(N_t)$ and $\mathcal U(N_r)$, respectively, $p{\left(\mathbf{H}\right)}$ is the joint PDF of the elements of $\bf H$ given by
\begin{equation}\label{eqn:joint_pdf_pH}
p{\left(\mathbf{H}\right)}=\frac{{\rm{etr}}\left( { - {\bf{U\Sigma }}{{\bf{V}}^{\rm H}}{{{\bf{R}}_t}^{ - 1}}{\bf{V}}{{\bf{\Sigma }}^{\rm{H}}}{{\bf{U}}^{\rm{H}}}{{{\bf{R}}_r}^{ - 1}}} \right)}{\det \left( {{{{\bf{R}}_r}^{N_t}}} \right)\det \left( {{{{\bf{R}}_t}^{N_r}}} \right)}.
\end{equation}
and the abbreviation $\rm{etr}{\left(\cdot\right)}$ denotes $\rm{etr}{\left(\mathbf X \right)} = exp\{\rm{tr}{\left(\mathbf X\right)}\}$.

With \eqref{eqn:out_def}, the outage probability can be represented by a multi-fold integral as
\begin{align}\label{eqn:out_prob_uv}
{p_{out}} =& \int\nolimits_{\prod\nolimits_{j = 1}^{N_r} {\left( {1 + \rho{\lambda _j}} \right)}  \le {2^R}} {{f_{\bs{\lambda }}}\left( {{\lambda _1}, \cdots ,{\lambda _{N_r}}} \right)d{\lambda _1} \cdots d{\lambda _{N_r}}} \notag\\
%&= {\left( {\rho^{-1}} \right)^{N_r}}\int\nolimits_{\prod\nolimits_{j = 1}^{N_r} {\left( {1 + {\lambda _j}} \right)}  \le {2^R}} {{f_{\bs{\lambda }}}\left( {\rho^{-1}{\lambda _1}, \cdots ,\rho^{-1}{\lambda _{N_r}}} \right)d{\lambda _1} \cdots d{\lambda _{N_r}}} \notag\\
 =& \frac{{\rho^{-N_t N_r}}}{{N_r}!{\prod\nolimits_{i = 1}^{N_r} {\left( {{N_r} - i} \right)!\left( {{N_t} - i} \right)!} }}\int\nolimits_{\prod\nolimits_{j = 1}^{N_r} {\left( {1 + {\lambda _j}} \right)}  \le {2^R}} \frac{{{{\left( {\Delta \left( {\bs{\Lambda }} \right)} \right)}^2}\prod\nolimits_{j = 1}^{N_r} {{\lambda _j}^{{N_t} - {N_r}}} }}{{\det \left( {{{{\bf{R}}_r}^{N_t}}} \right)\det \left( {{{{\bf{R}}_t}^{N_r}}} \right)}}\notag\\
 &\times \int {D{\bf{V}}\int {D{\bf{U}}{{\rm{etr}}\left( { - {\rho}^{-1}{\bf{U\Sigma }}{{\bf{V}}^{\rm H}}{{{\bf{R}}_t}^{ - 1}}{\bf{V}}{{\bf{\Sigma }}^{\rm{H}}}{{\bf{U}}^{\rm{H}}}{{{\bf{R}}_r}^{ - 1}}} \right)}} } d{\lambda _1} \cdots d{\lambda _{N_r}}.
\end{align}
By applying ${\rm etr}{(\bf X)}=\sum\nolimits_{k=0}^{\infty} {({\rm tr}{\bf X})^k}/{k!} $ and $\int {D{\bf{V}}}  = \int {D{\bf{U}}}  = 1$ to \eqref{eqn:out_prob_uv}, ${p_{out}}$ is asymptotic to
\begin{align}\label{eqn:out_asy_fc_tayl}
{p_{out}} =& \frac{{\rho^{-N_t N_r}}}{{N_r}!{\prod\nolimits_{i = 1}^{N_r} {\left( {{N_r} - i} \right)!\left( {{N_t} - i} \right)!} }{\det \left( {{{{\bf{R}}_r}^{N_t}}} \right)\det \left( {{{{\bf{R}}_t}^{N_r}}} \right)}}\notag\\
&\times {\int\nolimits_{\prod\nolimits_{j = 1}^{N_r} {\left( {1 + {\lambda _j}} \right)}  \le {2^R}} {{{\left( {\Delta \left( {\bs{\Lambda }} \right)} \right)}^2}\prod\limits_{j = 1}^{N_r} {{\lambda _j}^{{N_t} - {N_r}}} d{\lambda _1} \cdots d{\lambda _{N_r}}}} + o\left( {{\rho ^{ - {N_t}{N_r}}}} \right).
\end{align}
Comparing (\ref{eqn:out_asy_fc_tayl}) and (\ref{eqn:F_G_mellin_trans_detzerofina}) yields \eqref{eqn:g_meijer_relation1}. By using the determinant expansion to \eqref{eqn:g_meijer_relation1} and Lemma \ref{the:leb_for}, ${g_{\bf{0}}}(R)$ can be obtained as
\begin{equation}\label{eqn:g0_deter_exp}
{g_{\bf{0}}}(R) =  \frac{\sum\nolimits_{{\bs\sigma} \in {S_{N_r}}} {{\mathop{\rm sgn}} \left( {{\bs\sigma }} \right)} \int\nolimits_{\prod\nolimits_{j = 1}^{N_r} {\left( {1 + {\lambda _j}} \right)}  \le {2^R}} \prod\nolimits_{i = 1}^{N_r} {{\lambda _i}^{{{\sigma _{i}}} + i +\tau - 1}} d{\lambda _1} \cdots d{\lambda _{N_r}}}{{\prod\nolimits_{j = 1}^{N_r} {\left( {{N_t} - j} \right)!\left( {{N_r} - j} \right)!} }}.
\end{equation}
By applying \cite[eq.(26)-(27)]{shi2017asymptotic} to \eqref{eqn:g0_deter_exp}, (\ref{eqn:g_meijer_relation2}) finally follows.

\section{Proof of Theorem \ref{the:conv_g0}}\label{app:conv_g0}
From the integral representation of ${g_{\bf{0}}}(R)$ in (\ref{eqn:g_meijer_relation1}), it is readily found that the outage probability is a monotonically increasing function of transmission rate $R$, because the region of integration expands as $R$ increases. Clearly, it is easily concluded that ${g_{\bf{0}}}(R)$ is a convex function if $N_t= N_r = 1$, because ${g_{\bf{0}}}(R)$ reduces to the convex function of $\mathfrak g_{(1)}(2^R)$.

To prove the convexity of ${g_{\bf{0}}}(R)$ for the cases of $N_t\ge 2$, it suffices to show that the second derivative of ${g_{\bf{0}}}(R)$ with respect to $R$ is larger than or equal to zero, i.e., ${g_{\bf{0}}}^{\prime\prime}(R) \ge 0$. Note that the contour integral representation of ${g_{\bf{0}}}(R)$ can be expressed as
%\begin{align}\label{eqn:def_g_R_fullred}
%g_{\bf n}(R) = G_{{N_r} + 1,{N_r} + 1}^{0,{N_r} + 1}\left( {\left. {\begin{array}{*{20}{c}}
%{1,{N_t} + 1 + n_1, \cdots ,{N_t} + {N_r} + n_{N_r}}\\
%{0,1, \cdots ,{N_r}}
%\end{array}} \right|2^R} \right) \notag\\
%{\frac{1}{{2\pi {\rm{i}}}}\int\nolimits_{-c - {\rm i}\infty }^{-c + {\rm i}\infty } {\frac{1}{s}\prod\limits_{i = 1}^{N_r} {\frac{{\Gamma \left( {s  - {N_t} -i - {n_i}+1} \right)}}{{\Gamma \left( {s - i + 1} \right)}}}  {2^{Rs}}ds} }
%\end{align}
\begin{align}\label{eqn:def_g_R_fullred}
g_{\bf n}(R) =% G_{{N_r} + 1,{N_r} + 1}^{0,{N_r} + 1}\left( {\left. {\begin{array}{*{20}{c}}
%{1,{N_t} + 1 + n_1, \cdots ,{N_t} + {N_r} + n_{N_r}}\\
%{0,1, \cdots ,{N_r}}
%\end{array}} \right|2^R} \right) \notag\\
{\frac{1}{{2\pi {\rm{i}}}}\int\nolimits_{-c - {\rm i}\infty }^{-c + {\rm i}\infty } {\frac{1}{s}\prod\limits_{i = 1}^{N_r} {\frac{{\Gamma \left( {s  - {N_t} -i - {n_i}+1} \right)}}{{\Gamma \left( {s - i + 1} \right)}}}  {2^{Rs}}ds} }.
\end{align}
On the basis of \eqref{eqn:def_g_R_fullred}, taking the second derivative of $g_{\bf 0}(R)$ with respect to $R$ leads to
\begin{align}\label{eqn:def_g_R_fullred_second}
{g_{\bf{0}}}^{\prime\prime}(R)  % G_{{N_r} + 1,{N_r} + 1}^{0,{N_r} + 1}\left( {\left. {\begin{array}{*{20}{c}}
%{1,{N_t} + 1 + n_1, \cdots ,{N_t} + {N_r} + n_{N_r}}\\
%{0,1, \cdots ,{N_r}}
%\end{array}} \right|2^R} \right) \notag\\
=&(\ln2)^2{\frac{1}{{2\pi {\rm{i}}}}\int\nolimits_{-c - {\rm i}\infty }^{-c + {\rm i}\infty } {s\prod\limits_{i = 1}^{N_r} {\frac{{\Gamma \left( {s  - {N_t} -i+1} \right)}}{{\Gamma \left( {s - i + 1} \right)}}}  {2^{Rs}}ds} }\notag\\
=&(\ln2)^2{\frac{1}{{2\pi {\rm{i}}}}\int\nolimits_{-c - {\rm i}\infty }^{-c + {\rm i}\infty } {{\frac{{\Gamma \left( {s  - {N_t}+1 } \right)}}{{\Gamma \left( {s} \right)}}}\prod\limits_{i = 2}^{N_r} {\frac{{\Gamma \left( {s  - {N_t} -i+1} \right)}}{{\Gamma \left( {s - i + 1} \right)}}}  {2^{Rs}}ds} }\notag\\
&+(\ln2)^2 N_t{\frac{1}{{2\pi {\rm{i}}}}\int\nolimits_{-c - {\rm i}\infty }^{-c + {\rm i}\infty } {\prod\limits_{i = 1}^{N_r} {\frac{{\Gamma \left( {s  - {N_t} -i+1} \right)}}{{\Gamma \left( {s - i + 1} \right)}}}  {2^{Rs}}ds} },
\end{align}
where the second equality holds by using $s{\Gamma \left( {s  - {N_t} } \right)} = {\Gamma \left( {s  - {N_t} + 1 } \right)} + N_t{\Gamma \left( {s  - {N_t} } \right)}$. Since the first derivative of $g_{\bf n}(R)$ can be expressed as
\begin{align}\label{eqn:def_g_R_fullredfirstd}
{g_{\bf n}}^{\prime}(R) =% G_{{N_r} + 1,{N_r} + 1}^{0,{N_r} + 1}\left( {\left. {\begin{array}{*{20}{c}}
%{1,{N_t} + 1 + n_1, \cdots ,{N_t} + {N_r} + n_{N_r}}\\
%{0,1, \cdots ,{N_r}}
%\end{array}} \right|2^R} \right) \notag\\
\ln2{\frac{1}{{2\pi {\rm{i}}}}\int\nolimits_{-c - {\rm i}\infty }^{-c + {\rm i}\infty } {\prod\limits_{i = 1}^{N_r} {\frac{{\Gamma \left( {s  - {N_t} -i - {n_i}+1} \right)}}{{\Gamma \left( {s - i + 1} \right)}}}  {2^{Rs}}ds} },
\end{align}
\eqref{eqn:def_g_R_fullred_second} can be expressed in terms of ${g_{\bf n}}^{\prime}(R)$ as
\begin{align}\label{eqn:def_g_R_fullred_secrepfir}
{g_{\bf{0}}}^{\prime\prime}(R) =&\ln2{g_{\tilde{\bf 0}}}^{\prime}(R)+\ln2 N_t{g_{\bf 0}}^{\prime}(R),
\end{align}
where $\tilde{\bf 0}$ is all-zero $1\times N_r$ vector except its first element being $-1$, i.e., $\tilde{\bf 0} = (-1,0,\cdots,0)$. Moreover, the following lemma reveals the increasing monotonicity of $g_{\bf n}(R)$.
\begin{lemma}\label{the:inc_gn}
  $g_{\bf n}(R)$ is an increasing function of $R$ if $N_t + n_i > 0$ for $i \in [1, N_r]$, i.e., ${g_{\bf n}}^{\prime}(R) \ge 0$.
\end{lemma}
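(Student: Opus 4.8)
The plan is to establish ${g_{\bf n}}^{\prime}(R)\ge 0$ directly from the contour representation \eqref{eqn:def_g_R_fullredfirstd} by exhibiting its integrand as the Mellin transform of a product of manifestly nonnegative kernels. The starting observation is the Euler (Beta) integral identity
\begin{equation}\label{eqn:euler_beta}
\frac{\Gamma(s+\alpha)}{\Gamma(s+\beta)}=\frac{1}{\Gamma(\beta-\alpha)}\int_0^1 t^{s+\alpha-1}(1-t)^{\beta-\alpha-1}\,dt,\quad \beta>\alpha,\ \Re(s)>-\alpha,
\end{equation}
which I would apply to each factor of \eqref{eqn:def_g_R_fullredfirstd} with $\alpha_i=1-N_t-i-n_i$ and $\beta_i=1-i$, so that $\beta_i-\alpha_i=N_t+n_i$. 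Because $N_t$ and the $n_i$ are integers, the hypothesis $N_t+n_i>0$ makes $N_t+n_i$ a positive integer; this is precisely what renders $\Gamma(N_t+n_i)$ a finite positive constant and the kernel $(1-t)^{N_t+n_i-1}$ a nonnegative, integrable (indeed polynomial) weight on $(0,1)$. This is the single point at which the assumption of the lemma is consumed.

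Next I would substitute \eqref{eqn:euler_beta} into \eqref{eqn:def_g_R_fullredfirstd} and interchange the $N_r$ Beta integrals with the contour integral. Writing $\tilde f_i(t)=\Gamma(N_t+n_i)^{-1}t^{1-N_t-i-n_i}(1-t)^{N_t+n_i-1}\mathbf{1}_{(0,1)}(t)\ge 0$, each factor becomes the Mellin transform $\int_0^1 t^{s-1}\tilde f_i(t)\,dt$, so the whole integrand is $\prod_{i=1}^{N_r}\{\mathcal M\tilde f_i\}(s)$. Consequently the inner contour integral $\tfrac{1}{2\pi{\rm i}}\int (2^{R}\prod_i t_i)^{s}\,ds$ collapses to the Dirac mass $\delta(R\ln 2+\sum_i\ln t_i)$; equivalently, ${g_{\bf n}}^{\prime}(R)/\ln 2$ equals the Mellin convolution $\tilde f_1\star\cdots\star\tilde f_{N_r}$ evaluated at $2^{-R}$. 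In either form it is an integral of a product of nonnegative quantities, whence ${g_{\bf n}}^{\prime}(R)\ge 0$, which is the assertion.

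The step I expect to be the main obstacle is the analytic bookkeeping needed to license both \eqref{eqn:euler_beta} and the interchange of integrations. The Euler representation requires the contour to lie to the right of every pole of the numerator Gammas $\Gamma(s-N_t-i-n_i+1)$, i.e. $\Re(s)>\max_i(N_t+i+n_i-1)$, which is the common fundamental strip of the Mellin transforms $\{\mathcal M\tilde f_i\}$. I would verify that the contour $\Re(s)=-c$ inherited by \eqref{eqn:def_g_R_fullred} already satisfies this: the Meijer-$G$ index $m=0$ forces the defining contour to separate off exactly these poles on the left, so no contour shift (and hence no residue contribution) is incurred, and Fubini applies once the kernels are seen to be absolutely integrable against $2^{Rs}$ along the line. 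With these justifications in place the nonnegativity follows immediately.
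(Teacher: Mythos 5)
Your proof is correct and is essentially the paper's own argument: both convert each ratio $\Gamma(s-N_t-i-n_i+1)/\Gamma(s-i+1)$ into a Beta integral $\Gamma(N_t+n_i)^{-1}\int_0^1 t^{s-N_t-i-n_i}(1-t)^{N_t+n_i-1}\,dt$ --- the single place where the hypothesis $N_t+n_i>0$ is consumed --- and then collapse the contour integral so that the quantity of interest becomes an integral of manifestly nonnegative factors. The only difference is cosmetic: the paper keeps the $1/s$ factor and recognizes the inner contour integral as the inverse Laplace transform of the unit step function, exhibiting $g_{\bf n}(R)$ itself as a nonnegative integral over the region $\left\{0\le x_i\le 1,\ \sum_{i}\ln x_i\ge -R\ln 2\right\}$, which expands with $R$, whereas you differentiate first and invoke the Dirac delta; the two presentations differ only by an integration in $R$.
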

\begin{proof}
By using the relationship between beta function and Gamma function as ${\rm B}(\alpha,\beta) = \Gamma(\alpha)\Gamma(\beta)/\Gamma(\alpha+\beta)$ for $\Re{(\alpha)},\Re{(\beta)}>0$, \eqref{eqn:def_g_R_fullred} can be rewritten as
\begin{align}\label{eqn:def_g_R_fullredbeta}
g_{\bf n}(R) =% G_{{N_r} + 1,{N_r} + 1}^{0,{N_r} + 1}\left( {\left. {\begin{array}{*{20}{c}}
%{1,{N_t} + 1 + n_1, \cdots ,{N_t} + {N_r} + n_{N_r}}\\
%{0,1, \cdots ,{N_r}}
%\end{array}} \right|2^R} \right) \notag\\
\prod\limits_{i = 1}^{N_r} \frac{1}{{\Gamma \left( {{N_t} + {n_i}} \right)}}{\frac{1}{{2\pi {\rm{i}}}}\int\nolimits_{-c - {\rm i}\infty }^{-c + {\rm i}\infty } {\frac{1}{s}\prod\limits_{i = 1}^{N_r} {{\rm{B}}\left( {s - {N_t} - i - {n_i} + 1,{N_t} + {n_i}} \right)}  {2^{Rs}}ds} }.
\end{align}
By using the integral representation of Beta function as ${\rm{B}}(\alpha,\beta)=\int\nolimits_0^1x^{\alpha-1}(1-x)^{\beta-1}dx$, \eqref{eqn:def_g_R_fullredbeta} can be expressed as
\begin{align}\label{eqn:def_g_R_fullredbetainte}
g_{\bf n}(R) =& \prod\limits_{i = 1}^{{N_r}} {\frac{1}{{\Gamma \left( {{N_t} + {n_i}} \right)}}} \int\nolimits_{0 \le {x_1},\cdots,{x_{{N_r}}} \le 1} {\prod\limits_{i = 1}^{{N_r}} {{x_i}^{ - {N_t} - i - {n_i}}{{(1 - {x_i})}^{{N_t} + {n_i} - 1}}} d{x_1} \cdots d{x_{{N_r}}}}\notag\\
&\times \frac{1}{{2\pi {\rm{i}}}}\int_{ - c - {\rm{i}}\infty }^{ - c + {\rm{i}}\infty } {\frac{1}{s}{e^{\left( {R\ln 2 + \sum\nolimits_{i = 1}^{{N_r}} {\ln {x_i}} } \right)s}}ds} \notag\\
% =& \prod\limits_{i = 1}^{{N_r}} {\frac{1}{{\Gamma \left( {{N_t} + {n_i}} \right)}}} \int\nolimits_{0 \le {x_1},\cdots,{x_{{N_r}}} \le 1} {u\left( {R\ln 2 + \sum\limits_{i = 1}^{{N_r}} {\ln {x_i}} } \right)} \notag\\
% &\times \prod\limits_{i = 1}^{{N_r}} {{x_i}^{ - {N_t} - i - {n_i}}{{(1 - {x_i})}^{{N_t} + {n_i} - 1}}} d{x_1} \cdots d{x_{{N_r}}}\notag\\
 =& \prod\limits_{i = 1}^{{N_r}} {\frac{1}{{\Gamma \left( {{N_t} + {n_i}} \right)}}} \int\nolimits_{\scriptstyle0 \le {x_1},\cdots,{x_{{N_r}}} \le 1\hfill\atop
\scriptstyle\sum\nolimits_{i = 1}^{{N_r}} {\ln {x_i}}  \ge  - R\ln 2\hfill} {\prod\limits_{i = 1}^{{N_r}} {{x_i}^{ - {N_t} - i - {n_i}}{{(1 - {x_i})}^{{N_t} + {n_i} - 1}}} d{x_1} \cdots d{x_{{N_r}}}},
\end{align}
where the last step holds by using the inverse Laplace transform of the step unit function as $u(x) = \frac{1}{{2\pi {\rm{i}}}}\int_{ - c - {\rm{i}}\infty }^{ - c + {\rm{i}}\infty } {{1}/{s}{e^{sx}}ds}$. Clearly, the integrand in \eqref{eqn:def_g_R_fullredbetainte} is larger than or equal to zero, and the range of the integration enlarges as $R$ increases. Thus we conclude that $g_{\bf n}(R)$ is an increasing function of $R$.
\end{proof}
According to Lemma \ref{the:inc_gn}, \eqref{eqn:def_g_R_fullred_secrepfir} shows ${g_{\bf{0}}}^{\prime\prime}(R) \ge 0$. The proof is therefore accomplished.

\bibliographystyle{ieeetran}
\bibliography{mimo}

\end{document}